\def \cC{\mathcal{C}}
\def \cD{\mathcal{D}}
\def \cF{\mathcal{F}}
\def \cG{\mathcal{G}}
\def \cL{\mathcal{L}}
\def \cP{\mathcal{P}}
\def \P{\mathsf P}
\def \Q{\mathsf Q}
\def \E{\mathsf E}
\def \EQ{\mathsf E^{\mathsf Q}}
\def \EEQ{\mathsf E^{\tilde{\mathsf Q}}}
\def \R{\mathbb{R}}
\def \WW{\widetilde{W}}
\def \ud{\mathrm{d}}
\newcommand\bF{\mathbb{F}}
\newcommand\bN{\mathbb{N}}
\newcommand{\eps}{\varepsilon}
\newcommand{\ind}{\mathbbm{1}}
\newtheorem{theorem}{Theorem}[section]
\newtheorem{lemma}[theorem]{Lemma}
\newtheorem{corollary}[theorem]{Corollary}
\newtheorem{proposition}[theorem]{Proposition}
\newtheorem{remark}[theorem]{Remark}
\newtheorem{assumption}[theorem]{Assumption}
\DeclareMathOperator*{\esssup}{ess\,sup}
\title[Variable annuities with surrender]{On variable annuities with surrender charges}
\author{Tiziano De Angelis} 
\author{Alessandro Milazzo}
\author{Gabriele Stabile}
\subjclass[2020]{91G80, 62P05, 60G40, 35R35; {\em JEL Classification.} G22}
\keywords{Variable annuities, surrender option, surrender charge, optimal stopping, free-boundary problems, non-monotone optimal stopping boundaries}
\address{T.\ De Angelis: School of Management and Economics, Dept.\ ESOMAS, University of Torino, Corso Unione Sovietica, 218 Bis, 10134, Torino, Italy; Collegio Carlo Alberto, Piazza Arbarello 8, 10122, Torino, Italy.}
\email{\href{mailto:tiziano.deangelis@unito.it}{tiziano.deangelis@unito.it}}
\address{A.\ Milazzo: School of Management and Economics, Dept.\ ESOMAS, University of Torino, Corso Unione Sovietica, 218 Bis, 10134, Torino, Italy.}
\email{\href{mailto:alessandro.milazzo@unito.it}{alessandro.milazzo@unito.it}}
\address{G.\ Stabile: Dipartimento di Metodi e Modelli per l'Economia, il Territorio e la Finanza, Sapienza-Universit\`{a} di Roma, Roma, Italy}
\email{\href{mailto:gabriele.stabile@uniroma1.it}{gabriele.stabile@uniroma1.it}}
\date{\today}
\numberwithin{equation}{section}
\begin{document}

\begin{abstract}
In this paper we provide a theoretical analysis of Variable Annuities with a focus on the holder's right to an early termination of the contract. We obtain a rigorous pricing formula and the optimal exercise boundary for the surrender option. We also illustrate our theoretical results with extensive numerical experiments.

The pricing problem is formulated as an optimal stopping problem with a time-dependent payoff which is discontinuous at the maturity of the contract and non-smooth. This structure leads to non-monotonic optimal stopping boundaries which we prove nevertheless to be continuous and regular in the sense of diffusions for the stopping set. The lack of monotonicity of the boundary makes it impossible to use classical methods from optimal stopping. Also more recent results about Lipschitz continuous boundaries are not applicable in our setup. Thus, we contribute a new methodology for non-monotone stopping boundaries.  
\end{abstract}

\maketitle

\section{Introduction}
We provide a theoretical analysis of Variable Annuities (VAs) with a focus on the holder's right to an early termination of the contract. VAs are insurance contracts designed to meet retirement and long-term investment goals, that combine the participation in equity performance with insurance coverage. 

The subscriber of a VA (policyholder) enters the contract by paying a premium (either single or periodic) which is invested in a selection of investment funds. Life-insurance protection is provided to the policyholder along with a number of financial guarantees and options. A minimum interest rate is guaranteed by the insurer in order to shield the policyholder's capital from market downturns. Moreover, the policyholder has the right to an early termination of the contract (so-called {\em surrender option}), which leads to a (partial) reimbursement of the account value. Such option can be exercised by the policyholder at any time prior to the maturity of the contract, hence sharing the financial feature of American options. However, there is a cost for early termination which is deducted from the account value and that may be quite substantial in the early stages of the contract. Such penalty, imposed by the insurer as a deterrent from early surrender, decreases over time and it vanishes at or prior to the maturity.
The guarantees offered by the insurer in the VA are financed by a fee paid periodically by the policyholder, often in the form of a fixed proportion of the account value.
At maturity, the policyholder can choose between claiming the account value or converting it into an annuity stream.

The features of VA contracts described above generate liabilities for the issuer which, if not properly evaluated, may put strain on the insurance company's solvency requirements. Pricing and risk management of those contracts require to take into account a variety of risks such as mortality risk, market risk and surrender risk. In particular, an unforeseen number of surrenders may cause serious liquidity concerns to the insurer. In addition, the insurance company faces large upfront costs when issuing new contracts, that cannot be recovered in case of early surrender.

VAs have been studied extensively in the academic literature. In particular, the analysis of surrender risk has been recently the subject of several papers. One strand of the literature studies the influence of economic factors and policyholder's needs on surrender. Several papers adopt an {\em intensity-based framework} for
surrender modelling. For example, Russo et al.\ \cite{russo} consider a stochastic surrender intensity with Vasicek and Cox-Ingersoll-Ross models to derive closed-form expressions for the probability of the policyholder's surrender of the policy.
Ballotta et al.\ \cite{ballotta} model the intensity of surrender as a constant, representing policyholders' personal contingencies, plus a stochastic process based on the spread between the surrender benefit and the value at maturity of the guaranteed amount. De Giovanni \cite{degiovanni} builds a rational expectation model that describes the policyholder's behaviour in abandoning the contract (`lapsing' in the terminology of \cite{degiovanni}). Li and Szimayer \cite{li} propose a model in which surrender may occur because of either {\em exogenous} or {\em endogenous} surrender. Exogenous surrender occurs at a given deterministic intensity, whereas endogenous surrender can be chosen by the policyholder at the jump times of a Poisson process. The pricing problem is addressed via numerical methods for the associated PDEs. Jia et al. \cite{jia} develop a machine learning framework taking into account both human behaviour and rational optimality. Implementing a deep learning algorithm, the optimal surrender decision is estimated from a convex combination of the probability of surrendering based on market conditions and the probability of surrendering due to behavioural factors.

Another strand of the literature assumes that the policyholder is fully rational and the surrender option is exercised optimally exclusively from a financial perspective (i.e., no influence of personal needs is considered). 
This represents the worst-case scenario for the issuer of the contract and, as a result, it delivers an upper bound for the price of the VA contract.
Moreover, understanding optimal surrender decisions provides significant insight into the valuation of the options embedded in the contract and it may support risk management practices for VAs.

Bernard et al.\ \cite{BERNARD2014116} and Jeon and Kwak  \cite{JEON2021113508} consider VAs with guaranteed minimum accumulation benefit (GMAB) and surrender option. Both papers assume an exponentially decreasing surrender charge function and do not consider death benefits. In \cite{BERNARD2014116} the optimal surrender strategy is derived by splitting the value of the VA into a European part and an early exercise premium. 
In contrast to \cite{BERNARD2014116},  in \cite{JEON2021113508} the surrender benefit is linked to the largest between the account value and the GMAB. The authors observe that the VA features two surrender regions, one for large values of the fund and one for low values of the fund (the latter is due to the fact that the policyholder receives at least the minimum guaranteed {\em also} upon surrender). The approach in both \cite{BERNARD2014116} and \cite{JEON2021113508} is mostly numerical.

MacKay et al.\ \cite{MacKay} study VAs combining guaranteed minimum death benefit (GMDB) and GMAB (so-called {\em riders}) with the surrender option. They consider various fee structures, ranging from a constant fee paid continuously to a state-dependent fee, where the fee is charged only if the account value is below a fixed threshold. They also study the design of fees and surrender charges that eliminate the surrender incentive, while keeping the contract attractive to policyholders. 

Laundriault et al.\ \cite{landriault2021high} consider the same VA contract as in MacKay but with an additional high-water mark fee structure. They study the related optimal stopping problem via numerical solution of a system of extended Hamilton-Jacobi-Bellmann equations.
Hilpert \cite{Hilpert} takes into account risk preferences in the form of risk-averse and loss-averse policyholders and uses a Least-Squares Monte Carlo method to study the associated optimal stopping problem. Dai et al.\ \cite{dai2008guaranteed} perform a numerical study via finite-difference methods of variable annuities with withdrawal benefits. Finally, Kirkby and Aguilar \cite{kirkby2023valuation} consider a VA contract under a L\'evy-driven equity market and study the contract's valuation and optimal surrender problem in the discrete time case.

Our paper contributes to the second strand of the literature focusing on rational pricing of VAs. 
Leveraging on stochastic calculus and optimal stopping theory, we develop a fully theoretical analysis of VA contracts with both GMDB and GMAB riders and embedded surrender option. These benefits are financed by a continuously paid fee which is taken as a fixed proportion of the account's value. The account's value is stochastic and it evolves as a geometric Brownian motion. In the case of early surrender, a surrender charge is deducted from the account's value that is returned to the policyholder. The surrender charge is modelled via a decreasing function of time (similarly to, e.g., \cite{MacKay}) and the policyholder's residual lifetime is modelled with a deterministic force of mortality. We neglect transaction costs.

From an actuarial and financial perspective we contribute: (i) rigorous derivations of a pricing formula for the VA and of an optimal surrender strategy (in terms of an optimal surrender boundary), and (ii) an extensive numerical sensitivity analysis of the VA's price and surrender boundary with respect to model parameters. In particular, our analysis shows that the structure of the surrender charges is by far the main driver of VA's price variations. Instead, possible miss-specifications in the (deterministic) mortality force have an impact of smaller order of magnitude. We also argue that marketable VA contracts should have relatively low fees and penalties, in order to be appealing to a risk-neural policyholder when compared to, e.g., bond investment.

At the mathematical level, we develop new methods for the analysis of optimal stopping boundaries. The arbitrage-free price of the policy is the value function of a suitable finite-time horizon optimal stopping problem with a discounted reward which is {\em time-inhomogeneous} and {\em discontinuous} at the terminal time. Moreover, the payoff at maturity exhibits a kink and it is not continuously differentiable. The study of the optimal stopping boundary (surrender boundary) is made particularly complicated by this problem structure. In particular, it is impossible to establish monotonicity of the optimal boundary, which indeed often fails as illustrated in our numerical examples. Without such monotonicity, much of the existing theory for time-dependent optimal stopping boundaries is not applicable. At the technical level, we lose continuity of the boundary (which is proven under monotonicity assumptions in, e.g., \cite{cdeap2023}, \cite{de2015note} and \cite{peskir2019continuity}) and smoothness of the value function across the boundary, due to the lack of `regularity' of the stopping set in the sense of diffusions (see \cite{deape2020}). In the past, when we encountered similar difficulties with monotonicity of optimal stopping boundaries, we were able to prove their local Lipschitz continuity (cf.\ \cite{de2019free} and \cite{de2019lip}). That resolved both the issues of continuity and regularity. Unfortunately, also those methods seem to fail in the present setting.

Our new method goes as follows: first, we find a parametrisation of the optimal stopping problem for which we can prove existence of a stopping boundary $t\mapsto b(t)$ (the stopping set is below the boundary); second, we obtain a (negative) lower bound on the growth rate of the boundary and, third, we use such bound to define a monotonic increasing function $t\mapsto \beta(t)$; the function $t\mapsto\beta(t)$ is a bijective, continuous transformation of the optimal boundary $t\mapsto b(t)$ and it is shown to be an optimal stopping boundary in another parametrisation of our original problem. After these many steps, we have reduced the initial optimal stopping problem to an auxiliary one with a \textit{monotonic} boundary. Then we can continue our analysis taking advantage of existing tools in optimal stopping theory. In particular, we prove continuity of $t\mapsto \beta(t)$ and continuous differentiability of the value function in the new parametrisation. Both properties are then transferred back to the original problem, where we establish continuity of $t\mapsto b(t)$ and continuous differentiability of the VA's price. Finally, we derive an integral equation that allows an efficient numerical computation of the surrender boundary. 
We believe that our paper is the first one to implement such approach for time-dependent optimal stopping boundaries. 

The rest of the paper is organised as follows. In Section \ref{sec:setup} we provide the mathematical formulation of the problem and we state our main results (Theorem \ref{thm:main}). In Section \ref{Sec:NumEvid} we illustrate our results and draw economic conclusions based on a careful sensitivity analysis. The theoretical analysis of the paper starts with Section \ref{sec:contb}, where we obtain continuity of the value function and existence of an optimal stopping boundary $t\mapsto b(t)$. In Section \ref{sec:beta} we provide the transformed boundary $t\mapsto \beta(t)$. In Section \ref{sec:wSmooth} we obtain continuous differentiability of the value function and, finally, in Section \ref{sect:bCont} we prove continuity of the boundary $b$ and obtain the integral equation that characterises the surrender boundary. 


\section{Actuarial model, problem formulation and main results}\label{sec:setup}

Given $T>0$, we consider a financial market with finite time horizon $[0,T]$ on a complete probability space $(\Omega, \cF, (\cF_t)_{t\in[0,T]}, \Q)$ that carries a
one-dimensional Brownian motion $\WW:=(\WW_{t})_{t\in[0,T]}$. With no loss of generality we assume that the filtration $\bF:=(\cF_t)_{t\in[0,T]}$ is generated by the Brownian motion and it is completed with $\Q$-null sets. 
 
We consider a VA contract with maturity $T$. The VA's account value process is denoted by $X:=(X_{t})_{t\in[0,T]}$. We assume that $X$ evolves as a geometric Brownian motion under $\Q$. That is,
\begin{align}\label{dynX0}
\left \{
\begin{array}{l}
\ud X_{t}= X_{t}\big[(r-c)  \ud t + \sigma  \ud \WW_{t}\big],  \\
X_{0} = x_0>0,
\end{array} \right.
\end{align}
where $r\ge 0$ is the risk-free rate, $\sigma>0$ is the market volatility, $c\ge 0$ denotes a constant fee rate levied by the insurance company (constant percentage of the account value), and $x_0>0$ is the initial account value. Notice that, according to the standard theory, setting $\tilde X_t:= e^{ct}X_t$ we have that $e^{-rt}\tilde X_t$ is a $(\bF,\Q)$-martingale and $\Q$ is the unique risk-neutral measure on the financial market.

Since the VA contract provides death benefits, we must include mortality risk in our model.
We assume that the policyholder is aged $\eta>0$ at time zero and this value is given and fixed throughout the paper. We define a random variable $\tilde \tau_d$
that represents the time of death of the policyholder and we assume $\tilde \tau_d$ independent of the Brownian motion (hence $\tilde \tau_d$ independent of $\cF_\infty$). More precisely, we define $\tilde \tau_d$ on another probability space $(\Omega^d,\cF^d,\Q^d)$ and work with the product space
\[
\big(\tilde \Omega,\tilde\cF,\tilde\Q\big):=\big(\Omega\times\Omega^d,\cF\times\cF^d,\Q\times\Q^d\big).
\]
We denote by 
$_s \tilde p_{\eta+t}$,
with $s,t\ge 0$, the probability that an individual who is alive at the age $\eta+t$ survives to the age $\eta+t+s$. That is $_s\tilde p_{\eta+t}=\tilde\Q(\tilde \tau_d>\eta\!+\!t\!+\!s|\tilde \tau_d>\eta\!+\!t)=\Q^d(\tilde \tau_d>\eta\!+\!t\!+\!s|\tilde \tau_d>\eta\!+\!t)$ and,
letting $\tilde{\mu}:[0,+\infty)\to[0,+\infty)$ be the mortality force, we have
\begin{align}\label{pS}
_s \tilde{p}_{\eta+t}=\exp \left( -\int_{0}^{s}\tilde{\mu}(\eta+t+u)du\right),\quad\text{for $t,s\ge0$}.
\end{align}
Since the age $\eta>0$ of the policyholder is fixed throughout the paper, we simplify our notation and define $\tau_d:=\tilde\tau_d-\eta$, $\mu(t):=\tilde{\mu}(\eta+t)$ and $_s p_t:=\, _s \tilde{p}_{\eta+t}$ for every $t,s\geq 0$. 

Furthermore, our VA includes minimum rate guarantees on the interest credited to the account value and it allows the policyholder to exercise an early surrender option. The so-called intrinsic value of the policy is the value that the policyholder receives either at the maturity of the policy or at an earlier time, should she die or in case she decides to exercise the surrender option. Now we describe the structure of the intrinsic value.

At maturity, the VA's payoff reads 
\[
\max\big\{x_0e^{gT},X_T\big\},
\]
with $g\ge 0$ denoting the guaranteed minimum interest rate per annum.
In case of the policyholder's death prior to the maturity of the contract, the insurer provides a death benefit equal to 
\[
\max\big\{x_0e^{g \tau_d},X_{\tau_d}\big\}.
\]
The contract neither punishes nor rewards death and the form of the payoff is the same as the one at maturity but with $\tau_d$ in place of $T$ (see, e.g.,  \cite{Hilpert}). Finally, if the policyholder decides to surrender the contract at any time $0\le \tau \le T$, she receives
\begin{align*}
\big(1-k(\tau)\big)X_{\tau},
\end{align*}
where $k(\tau)X_\tau$ is a so-called surrender charge. The structure of $k(t)$, $t\in[0,T]$, is specified in the contract and it is designed to disincentivise early surrender (which carries substantial liquidity risk for the insurer). Typically, $0\le k(t)\le 1$ for all $t\in[0,T]$ and $k(T)=0$. In practice, insurance companies try to prevent early surrenders, especially in the first years of the contract, whereas they are more lenient towards the maturity. Most contracts feature steep penalties (large $k(t)$) at the beginning of the contract and gradually decreasing penalties moving closer to the maturity. In some cases the charges vanish after a certain number of years from the beginning of the contract.

The rational price at time zero of a VA {\em without} early surrender option, is given by
\begin{align}
\label{U0}
U_0:= \EEQ \Big[e^{-r\, (\tau_d \wedge T)}\max\{x_0e^{g (\tau_d \wedge T)},X_{\tau_d \wedge T}\}\Big],
\end{align}
where $\EEQ$ denotes the expectation under the product measure $\tilde\Q$. When the policyholder is given the option to early surrender, this adds value to the contract and it must be taken into account in the pricing formula for the VA. Denoting by $V_0$ the rational price of the VA {\em with} surrender option at time zero, we have
\begin{align}
\label{value0}
V_0= \sup_{0\leq\tau\leq T}\EEQ\Big[&\ind_{\{\tau<\tau_d \wedge T \}}e^{-r \tau}\big(1-k(\tau)\big) X_{\tau}\\
&+\ind_{\{\tau \ge \tau_d \wedge T \}}e^{-r\, (\tau_d \wedge T)}\max\{x_0e^{g (\tau_d \wedge T)},X_{\tau_d \wedge T}\}\Big],\notag
\end{align}
where the supremum is taken over all stopping times $\tau\in[0,T]$ with respect to $\bF$. 
If we postulated that the time $\tau_d$ be observable by the policyholder, then it may appear that the filtration $\bF$ should be enlarged to include information about the path of the process $D_t:=\ind_{\{\tau_d\le t\}}$, $t\in[0,T]$. However, it is not difficult to show that observing the information carried by the process $(D_t)_{t\in[0,T]}$ adds nothing to the value of the stopping problem (see, e.g., \cite[Appendix A]{de2019free} in a similar context). Therefore, taking only $\bF$-stopping times is with no loss of generality.

Throughout the paper we make the following standard assumptions on the penalty function and on the mortality force. 
\begin{assumption}
\label{A1}
The penalty function $t\mapsto k(t)$ is non-increasing and twice continuously differentiable in $[0,T]$ with $k(T)=0$. The mortality force $t\mapsto \mu(t)$ is continuously differentiable in $[0,\infty)$ with 
$\dot\mu(t)\leq \lambda \mu(t)$ for some $\lambda\ge 0$.
\end{assumption} 
Notice that Assumption \ref{A1} imposes (rather naturally) that the lifetime ageing rate $\dot\mu(t)/\mu(t)$ (see, e.g., \cite{horiuchi1990age}) be bounded from above and that there is no surrender charge at maturity of the contract. The latter is compatible with situations in which the surrender charge vanishes strictly prior to the maturity.  
\begin{remark}
Several papers studying surrender risk consider an exponentially decreasing surrender charge function (see, e.g., \cite{Hilpert} and \cite{MacKay}) of the form
\begin{equation}
\label{kt}
k(t)=1-e^{-K(T-t)},
\end{equation}
where $K>0$ is the surrender charge intensity. We take this function as our benchmark. It will be shown in Remark \ref{remarkt*} that for $K\ge c$ the policyholder should never surrender the contract, because the penalty for an early surrender is too high compared to the contract fees.
\end{remark}

\subsection{Markovian formulation via change of measure}\label{subs:markovian}

In order to obtain the price $V_0$ of the VA, we embed the problem in a Markovian framework. First we use independence of $\tau_d$ from $\cF_\infty$ and Fubini's theorem to integrate out the force of mortality. This yields an optimal stopping problem on the filtered space $(\Omega,\cF,(\cF_t)_{t\in [0,T]},\Q)$. Then, we extend the formulation of such problem for an arbitrary initial time $t\in[0,T]$ and an arbitrary initial position $x>0$ of the process $X$. Finally, we perform a change of probability measure that leads to a more convenient problem formulation. These steps take us to the formulation of a Markovian optimal stopping problem with finite-time horizon and with a discounted reward which is time-inhomogeneous and discontinuous at the terminal time.

Using independence of $\tau_d$ from $\cF_\infty$, we obtain for any $\bF$-stopping time $\tau\in[0,T]$
\begin{align*}
\begin{aligned}
&\EEQ\Big[\ind_{\{\tau<\tau_d \wedge T \}}e^{-r \tau}\big(1\!-\!k(\tau)\big) X_{\tau}\!+\!\ind_{\{\tau \ge \tau_d \wedge T \}}e^{-r\, (\tau_d \wedge T)}\max\{x_0e^{g (\tau_d \wedge T)},X_{\tau_d \wedge T}\}\Big|\cF_\infty\Big]\\
&=\tilde\Q\big(\tau<\tau_d\big|\cF_\infty\big)\ind_{\{\tau<T\}}e^{-r \tau}\big(1\!-\!k(\tau)\big) X_{\tau}\\
&\quad+\!\EEQ\Big[\ind_{\{\tau \ge \tau_d\}\cap\{\tau_d\le T\}}e^{-r\, \tau_d}\max\{x_0e^{g \tau_d},X_{\tau_d}\}\Big|\cF_\infty\Big]\\
&\quad+\!\ind_{\{\tau = T \}}e^{-r\, T}\max\{x_0e^{g T},X_{T}\}\tilde\Q\big(\tau_d>T\big|\cF_\infty\big)\\
&=e^{-\int_0^\tau\mu(s)\ud s}\ind_{\{\tau<T\}}e^{-r \tau}\big(1\!-\!k(\tau)\big) X_{\tau}+\int_0^{\tau}e^{-\int_0^t\mu(s)\ud s-rt}\mu(t)\max\{x_0e^{gt},X_t\}\ud t\\
&\quad+\!\ind_{\{\tau = T \}}e^{-r\, T}\max\{x_0e^{g T},X_{T}\}e^{-\int_0^T\mu(s)\ud s},
\end{aligned}
\end{align*}
where we implicitly used that $\tilde \Q(\tau_d>0|\cF_\infty)=\Q^d(\tau_d>0)=1$ and, for the second term in the final expression we also used that $\tau\wedge T=\tau$. Applying the tower property in the formula for $V_0$ (cf.\ \eqref{value0}) and substituting the expression above under expectation, yields
\begin{align*}
\begin{aligned}
V_0=\sup_{0\le \tau\le T}\EQ&\Big[\ind_{\{\tau<T\}}e^{-r \tau} {_\tau p_0}\big(1\!-\!k(\tau)\big) X_{\tau}+\int_0^{\tau}e^{-rt}{_t p_0}\mu(t)\max\{x_0e^{gt},X_t\}\ud t\\
&+\!\ind_{\{\tau = T \}}e^{-r\, T}{_T p_0}\max\{x_0e^{g T},X_{T}\}\Big],
\end{aligned}
\end{align*}
where we recall that $_t p_0=\exp(-\int_0^t\mu(s)\ud s)$. Notice that the product measure has now disappeared and we are working on $(\Omega,\cF,(\cF_t)_{t\in[0,T]},\Q)$.

Next, we want to formulate the problem as if a VA was purchased at time zero with initial account value $x_0$ and we are now sitting at time $t\in[0,T]$ where we observe an account value equal to $X_t$. On the event $\{\tau_d\le t\}$ the contract is liquidated at or before time $t$ and we are not interested in this case. Therefore, we assume throughout that we are on the event $\{\tau_d>t\}$. At time $t$, and on the event $\{\tau_d>t\}$, the expected payoff of the VA must be computed conditionally on the information contained in $\cF_t$. That leads us to defining
\begin{align}
\label{eq:Vt}
\begin{aligned}
e^{-rt}V_t:= \esssup_{t\le \tau\le T}\EQ&\Big[\ind_{\{\tau<T\}}e^{-r \tau} {_{\tau-t} p_t}\big(1\!-\!k(\tau)\big) X_{\tau}+\int_t^{\tau}e^{-rs}{_{s-t} p_t}\mu(s)\max\{x_0e^{gs},X_s\}\ud s\\
&+\!\ind_{\{\tau = T \}}e^{-r\, T}{_{T-t} p_t}\max\{x_0e^{g T},X_{T}\}\Big|\cF_t\Big],
\end{aligned}
\end{align}
where we notice that the conditioning on $\{\tau_d>t\}$ is embedded into the survival probabilities ${_s p_t}$ that appear in the three terms under expectation. For any $s\in[t,T]$, we can write the account value as $X_s=X_t\exp[(r-c-\frac{\sigma^2}{2})(s-t)+\sigma (\WW_s-\WW_t)]$ and thanks to the Markovian structure we can write $V_t=v(t,X_t)$, where $v(t,x)$ is a measurable function representing the value of the policy at time $t$ when the account value is $X_t=x$.

Problem  \eqref{eq:Vt} can be made more tractable by changing measure. The $\Q$-martingale process 
$M:=(M_t)_{t\in[0,T]}$, given by 
\begin{align}\label{Mt}
M_t:=e^{\sigma \WW_t-\frac{\sigma^2}{2}t}=e^{-(r-c)t}X_t/{x_0},
\end{align}
defines a probability measure $\P$, equivalent to $\Q$ on $\cF_T$, via $\ud\P=M_T\, \ud\Q$. We will denote by $\E$ the expectation with respect to $\P$.  By Girsanov's theorem, the process $\bar W:=(\bar W_t)_{t\in[0,T]}$ with
$\bar W_{t} := \WW_{t} - \sigma t$
is a $\P$-Brownian motion. Then, under the new measure $\P$, the dynamics of $X$ is given by $X_t=x_0\exp[(r-c+\frac{\sigma^2}{2})t+\sigma \bar W_t]$. For any $\bF$-stopping time $\tau\in[t,T]$ and any $\cF_\tau$-measurable random variable $\Psi$ it holds 
\begin{align}\label{eq:PQ}
\begin{aligned}
\E^\Q\big[e^{-r\tau}X_\tau\Psi\big|\cF_t\big]&=\E^\Q\big[e^{-c\tau}e^{-(r-c)\tau}X_\tau\Psi\big|\cF_t\big]=e^{-(r-c) t}X_t\E^\Q\Big[e^{-c\tau}\frac{e^{-(r-c)\tau}X_\tau}{e^{-(r-c)t}X_t}\Psi\Big|\cF_t\Big]\\
&=e^{-(r-c)t}X_t\frac{\E^\Q\big[M_T e^{-c\tau}\Psi\big|\cF_t\big]}{\E^\Q\big[M_T\big|\cF_t\big]}=e^{-(r-c) t}X_t\E\big[e^{-c\tau}\Psi\big|\cF_t\big],
\end{aligned}
\end{align}
where we used $M_\tau=\E^\Q[M_T|\cF_\tau]$ and the tower property in the third equality. Then, we can change measure in \eqref{eq:Vt} and obtain 
\begin{align}
\label{eq:VtP}
\begin{aligned}
e^{-rt}V_t= e^{-(r-c)t}X_t\esssup_{t\le \tau\le T}\E\Big[&\ind_{\{\tau<T\}}e^{-c \tau} {_{\tau-t} p_t}\big(1\!-\!k(\tau)\big)\\
&+\int_t^{\tau}e^{-c s}{_{s-t} p_t}\mu(s)\max\{x_0e^{gs}/X_s,1\}\ud s\\
&+\!\ind_{\{\tau = T \}}e^{-c\, T}{_{T-t} p_t}\max\{x_0e^{g T}/X_{T},1\}\Big|\cF_t\Big].
\end{aligned}
\end{align}
In order to further simplify the analysis, we set
$Z_{s} :=x_0 e^{g s}/ X_s$, so that $Z_0=1$ and
\begin{align}\label{eq:Z}
\ud Z_s =\alpha Z_s \ud s+\sigma Z_s\ud W_s,\quad \text{for $s\in[0,T]$},
\end{align} 
where $W_t:=-\bar W_t$ is again a Brownian motion under $\P$ and
\begin{equation}\label{eq:alpha}
	\alpha:=g+c-r.
\end{equation}
That yields
\begin{align}
\label{eq:VtPZ}
\begin{aligned}
e^{-ct}V_t= X_t\esssup_{t\le \tau\le T}\E&\Big[\ind_{\{\tau<T\}}e^{-c \tau} {_{\tau-t} p_t}\big(1\!-\!k(\tau)\big)+\int_t^{\tau}e^{-c s}{_{s-t} p_t}\mu(s)\max\{Z_s,1\}\ud s\\
&+\!\ind_{\{\tau = T \}}e^{-c\, T}{_{T-t} p_t}\max\{Z_{T},1\}\Big|\cF_t\Big]=:e^{-c t}X_t u(t,Z_t),
\end{aligned}
\end{align}
where we use the Markovian nature of the problem and time-homogeneity of the process $Z$ to define
\begin{align}
\begin{aligned}
\label{valueu}
u(t,z)=\sup_{0\le\tau\le T-t}\E\Big[& \int_{0}^{\tau}e^{-cs}  \mu(t+s) \,  _s p_{t}   \max\big\{Z^z_{s},1\big\} \ud s +\ind_{\{\tau< T-t \}} e^{-c \tau}\big(1-k(t+\tau)\big)\, _\tau p_{t}    \\
&+\ind_{\{\tau= T-t \}}e^{-c (T-t)}  \ _{T-t} p_{t} \max\big\{Z^z_{T-t},1\big\} \Big],
\end{aligned}
\end{align}
with $(t,z)\in [0,T]\times (0,\infty)$. Notice that in \eqref{valueu} we have performed a time-shift so that $(Z^z_s)_{s\in[0,T-t]}$ is the solution of \eqref{eq:Z} on $[0,T-t]$ starting from $Z_0=z$, and $\tau\in[0,T-t]$ is any stopping time for the filtration\footnote{The original Brownian motion $\widetilde W$ and the new one $W$ generate the same filtration $(\cF_s)_{s\in[0,T-t]}$.} $(\cF_s)_{s\in[0,T-t]}$. 

Recalling that $V_t=v(t,X_t)$ and $Z_t=x_0 e^{gt}/X_t$, in \eqref{eq:VtPZ} we have established 
\begin{align}\label{eq:uv}
v(t,X_t)=X_t u(t, x_0 e^{gt}/X_t),\quad \text{for $t\in[0,T]$}.
\end{align}
Given a pair $(t,x)$ and setting $z=x_0 e^{gt}/x$ the formula above reads 
\[
v(t,x)=x u(t, x_0 e^{gt}/x)\quad\text{or equivalently}\quad u(t,z)=\frac{v(t,x)}{x}. 
\]
We recover the VA's rational price at time zero by simply taking 
	\begin{equation}\label{eq:V_0u(0,1)}
		V_0=v(0,x_0)=x_0 u(0,1).
	\end{equation}
Such expression highlights how the VA's value is just proportional to the account's value at time zero. Later we will explore numerically conditions on the fee rate $c>0$ under which $V_0=x_0$ (i.e., $u(0,1)=1$). That corresponds to the fee rate that the seller of the VA requires from a buyer who pays a price equal to the initial account value. Such fee rate is calculated in order to compensate for the additional benefits in the contract.

\begin{remark}
The quantity $v(t,x)$ corresponds to the rational price at time $t$ of a VA with an initial account value (at time zero) equal to $x_0$, when the account value at time $t$ is equal to $x$. In other words, this is the amount of money that the policyholder should accept in exchange for the policy at time $t$ having observed a realised account value $X_t(\omega)=x$ (if such a trade was allowed and frictionless). 
\end{remark}

The next sections will be devoted to the analytical study of properties of $u(t,z)$ with the aim of characterising an optimal stopping time $\tau_*$ in \eqref{valueu}. It should be noted that for each initial condition $(t,z)$, we are going to produce an optimal stopping time $\tau_*=\tau_*(t,z)$ for the value function $u(t,z)$. Then, in order to recover the optimal surrender strategy for the policyholder in the VA problem we must compute $\tau_*(0,1)$ (recall \eqref{eq:V_0u(0,1)}).

Before moving to the next section, we point out that we will work mostly with a shifted value function $w(t,z):=u(t,z)-(1-k(t))$. We define the continuously differentiable function (recall Assumption \ref{A1})
\begin{align}\label{ftilde}
	f(t):=k(t) (c+\mu(t))- \dot k(t)-c, \qquad  t\in[0,T],
\end{align}
which will play an important role in the analysis of the problem. 
Notice that
\begin{align}\label{eq:kito}
\begin{aligned}
e^{-c \tau}\big(1\!-\!k(t\!+\!\tau)\big)\, _\tau p_{t}-(1\!-\!k(t))
&= \int_{0}^{\tau}e^{-cs}\! {_s p_{t}}\Big( f(t\!+\!s)\! -\!\mu(t\!+\!s)  \Big) \ud s.
\end{aligned}
\end{align}
Moreover, using $k(T)=0$,
\[
e^{-c \tau}\big(1-k(t+\tau)\big)\ind_{\{\tau<T-t\}}\,_\tau p_t=e^{-c \tau}\big(1-k(t+\tau)\big)\,_\tau p_t-\ind_{\{\tau=T-t\}}e^{-c (T-t)}\,_{T-t}p_t
\] 
and so, from \eqref{valueu}, we obtain
\begin{align}
\label{valuew}
\begin{aligned}
w(t,z)=\sup_{0\le\tau\le T-t}\E\Big[&\int_{0}^{\tau}e^{-cs} \,  _s p_{t}  \Big(f(t+s)+ \mu(t+s) (Z^z_s-1)^{+}  \Big)  \ud s\\
&+\ind_{\{\tau= T-t \}}e^{-c (T-t)}  \ _{T-t} p_{t} (Z^z_{T-t}-1)^{+} \Big],
\end{aligned}
\end{align}
where $(a)^+=\max\{a,0\}$ is the positive part. Finally, in order to simplify the notation we set 
\begin{align}\label{setP}
\cP_{t,z}(\tau):=\int_{0}^{\tau}\!\!e^{-cs}  {_s p_{t}}  \Big(f(t\!+\!s)\!+\! \mu(t\!+\!s) \big(Z^z_s\!-\!1\big)^{+}  \Big)  \ud s
\!+\!\ind_{\{\tau= T-t \}}e^{-c (T-t)} {_{T-t} p_{t}} \big(Z^z_{T-t}\!-\!1\big)^{+},
\end{align}
so that $w(t,z)=\sup_{0\le \tau\le T-t }\E[\cP_{t,z}(\tau)]$.

Let us introduce 
\begin{equation}\label{eq:t*}
	t^*:=\inf\{t\in[0,T): f(t)<0 \}\wedge T,\quad\text{with $\inf\varnothing=\infty$},
\end{equation}
and let us assume the following properties.

\begin{assumption}\label{A2}
The function $f$ changes its sign at most once in $[0,T]$.  Moreover, if it changes its sign, then $f(t)\geq 0$ for every $t\in[0,t^*]$ and $f(t)<0$ for every $t\in(t^*,T]$ (i.e., $f$ goes from positive to negative).
Finally, one of the two conditions below must hold
\begin{enumerate}[(i)]
\item $\dot \mu(t)\le 0$ and $\dot f(t)\le 0$ for $t\in(t^*,T]$;
\item $\dot \mu(t)>0$ and $\dot f(t)-\mu(t)f(t)\le 0$ for $t\in(t^*,T]$.
\end{enumerate}
\end{assumption}

The conditions required by Assumption \ref{A2} are of a technical nature. In particular, (i) and (ii) will be used to prove our key Lemma \ref{lem:b'_delta}. However, we observe that Assumption \ref{A2} is satisfied for a wide range of parameter values, consistent with demographic and financial real-data (see Section \ref{Sec:NumEvid}). The requirement that $f$ only changes its sign once, and it goes from positive to negative values, is consistent with the idea that the interplay of mortality risk and financial penalty will incentivise the policyholder to delay the surrender at least until time $t_*$ (mathematically this is expressed by \eqref{eq:sC}). 
It is perhaps worth noticing that that if $k(t)\equiv 0$, then $f(t)\equiv-c$ and $\dot f(t)-f(t)\mu(t)> 0$ for every $t\in[0,T]$. Then, in this case we must impose $\dot \mu(t)\le 0$, as required by (i) in Assumption \ref{A2}. 

\begin{remark}\label{rem:impact}
We want to emphasise that in our study the predominant source of time-variability is given by the penalty function $k(\cdot)$. 
Indeed, numerical evidence shows that the penalty function has a significantly higher impact on the surrender boundary than the force of mortality (see Figure \ref{fig3}). Therefore, even choosing a constant force of mortality $\mu(t)\equiv\mu>0$ would not be a very stringent requirement. In that case, Assumption \ref{A2} is satisfied for any $k(\cdot)$ such that $\dot f(t)\le 0$ for $t\in(t^*,T]$, i.e., such that $(c+\mu)\dot k(t)\le \ddot k(t)$ for $t\in(t^*,T]$. For the benchmark penalty $k(t)=1-\exp(-K (T-t))$ (cf.\ \eqref{kt}), that corresponds to $K\le c+\mu$. The latter induces no loss of generality because we will see in Remark \ref{remarkt*} that early surrender only occurs for $K<c$. 
\end{remark}

\subsection{Pricing formula and optimal surrender boundary}

The pricing formula and a characterisation of the optimal surrender rule are illustrated in the next theorem. These two results refer to a VA which is purchased at time zero with an initial account value equal to $x_0$. Before stating the theorem, we recall the definition of $t^*\in[0,T]$ in \eqref{eq:t*} and we introduce a special class of functions:
\begin{align}\label{eq:Sigma}
\Sigma:=\left\{\psi:[0,T]\to[0,\infty)\left|
\begin{array}{l}
\text{$0\le \psi(t)\le (1-f(t)/\mu(t))^+$ for $t\in[0,T]$};\\
\text{$\exists\, t_\psi\in[t^*,T)$\, s.t.\ $\psi(t)=0$ for $t\in[0,t_\psi)$}\\
\text{and $\psi\in C([t_\psi,T])$ with $\psi(T)=1$}
\end{array} 
\right.
\right\},
\end{align}
For $\psi\in\Sigma$ we denote its support by $\Gamma_{\psi}:=\{t\in[0,T): \psi(t)>0\}\subseteq[t_\psi, T]$. 

\begin{theorem}\label{thm:main}
For each $x_0\in(0,\infty)$ given and fixed, there is a function $\ell_{x_0}:[0,T]\to [0,\infty]$, where  $\ell_{x_0}(t)=+\infty$ for $t\in[0,t^*)$ and $\ell_{x_0}\in C([t^*,T])$ with 
$\ell_{x_0}(T)=x_0 e^{gT}$, such that the optimal surrender time reads 
\[
\gamma_*=\inf\{s\in[0,T):X^{x_0}_s\ge \ell_{x_0}(s)\}\wedge T.
\] 
The VA's rational price (at time zero) reads
\begin{align}\label{eq:V0int}
\begin{aligned}
V_0&= e^{-rT}{_T p_0}\E^\Q\big[\mathrm{max}\big\{x_0e^{gT},X^{x_0}_T\big\}\big]+\int_0^{T}e^{-rs}{_s p_0}(\mu(s)-f(s))\E^\Q\Big[ X^{x_0}_s\ind_{\{\ell_{x_0}(s)\le  X^{x_0}_s\}}\Big]\ud s\\
&\quad +\int_0^{T}e^{-rs}{_s p_0}\mu(s)\E^\Q\Big[\mathrm{max}\big\{x_0e^{gs},X^{x_0}_s\big\}\ind_{\{\ell_{x_0}(s)> X^{x_0}_s\}}\Big]\ud s.
\end{aligned}
\end{align} 

If $t^*\in[0,T)$ the optimal surrender boundary $\ell_{x_0}$, is obtained as $\ell_{x_0}(t):=x_0 e^{gt}/b(t)$ for $t\in[t^*,T]$, where $b$ is the unique function from the class $\Sigma$ that solves the integral equation
\begin{align}\label{eq:ellint}
\begin{aligned}
&e^{-c(T-t)}{_{T-t} p_t}\int_1^\infty \!\Phi\big(b(t),T\!-\!t,y\big)\big(y\!-\!1\big)\ud y\\
&+\!\int_t^{T}\!\!e^{-c(s-t)}{_{s-t} p_t}\bigg(\int_{b(s)}^\infty \big(\mu(s)\big(y-1\big)^++f(s)\big)\Phi\big(b(t),s-t,y\big)\ud y\bigg)\,\ud s=0,
\end{aligned}
\end{align}
for all $t\in\Gamma_b$, where
\begin{align}\label{eq:Phi}
\Phi(z,s,y)=\frac{1}{\sqrt{2\pi\sigma^2 s}\,y}\exp\bigg(-\frac{\big(\ln\frac{y}{z}-\big(\alpha-\frac{\sigma^2}{2}\big)s\big)^2}{2\sigma^2 s}\bigg).
\end{align}
Moreover, the closure of $\Gamma_b$ coincides\footnote{When $t^*\in(0,T)$, whether $\Gamma_b=(t^*,T]$ or $\Gamma_b=[t^*,T]$ depends on whether $b\in C([0,T])$ or not (hence, whether $b(t^*)=0$ or not). Numerically, we can only observe $\Gamma_b=[t^*,T]$, i.e., $b(t^*)>0$.} with $[t^*,T]$.
\end{theorem}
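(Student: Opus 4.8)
\emph{Plan.} The statement synthesises Sections \ref{sec:contb}--\ref{sect:bCont}, and I would prove it in four stages: (1) continuity of the auxiliary value function $w$ of \eqref{valuew} and existence of a boundary $t\mapsto b(t)$ with the stopping set below it; (2) a one-sided lower bound on the growth of $b$ and the ensuing reduction to an auxiliary problem with a \emph{monotone} boundary $\beta$, which delivers continuity of $b$ and $C^1$-regularity of $w$ (hence of the price $v$); (3) the integral equation \eqref{eq:ellint} and its uniqueness in $\Sigma$; (4) the translation to the $(t,x)$-variables, yielding $\ell_{x_0}=x_0 e^{g\cdot}/b$, the form of $\gamma_*$ and the pricing formula \eqref{eq:V0int}.

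\emph{Stage 1.} From \eqref{valuew} each path of $\cP_{t,z}(\tau)$ is non-decreasing in $z$ (since $Z^z$ is pathwise monotone in $z$, $\mu\ge 0$, and $x\mapsto(x-1)^+$ is non-decreasing), so $z\mapsto w(t,z)$ is non-decreasing; as stopping at $\tau=0$ yields $0$ and $w\ge 0$, the stopping set $\cS:=\{w=0\}$ is downward-closed in $z$, whence $\cS\cap(\{t\}\times(0,\infty))=(0,b(t)]$ with $b(t):=\sup\{z>0:w(t,z)=0\}$ ($\sup\varnothing=0$). I would then prove joint continuity of $w$ on $[0,T)\times(0,\infty)$ — handling the discontinuity of the reward at maturity by squeezing $\limsup$ and $\liminf$ as $t\uparrow T$ — so that $\cS$ is closed; a short local argument, using continuity of $f,\mu$ and of the paths of $Z$ to see that $w(t,z)>0$ whenever $f(t)+\mu(t)(z-1)^+>0$, gives $b\equiv 0$ on $[0,t^*]$ and $0\le b(t)\le(1-f(t)/\mu(t))^+$ on $(t^*,T]$, while $w(T,z)=(z-1)^+$ gives $b(T)=1$. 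Thus $b$ lies in $\Sigma$ up to the continuity on $[t^*,T]$ still to be proven.

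\emph{Stage 2 (the crux).} The obstacle is that $b$ need not be monotone, which blocks the classical route to continuity. Following the programme of the Introduction, I would first establish a one-sided growth bound for $b$ — morally $b(t_2)\ge e^{-\lambda_0(t_2-t_1)}b(t_1)$ for $t_1\le t_2$, with $\lambda_0$ depending only on the data — this being Lemma \ref{lem:b'_delta}, where the dichotomy (i)/(ii) of Assumption \ref{A2} is used. Then $\beta(t):=e^{\lambda_0 t}b(t)$ is non-decreasing and a continuous, bijective reparametrisation of $b$, and the key step is to recognise $\beta$ as the optimal stopping boundary of a companion problem obtained from \eqref{valuew} by the change of variables $\widetilde Z_s:=e^{\lambda_0 s}Z_s$, with the induced deterministic rescaling of running and terminal rewards. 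For that companion problem the boundary is monotone, so the classical theory applies and gives continuity of $\beta$, regularity of the companion stopping set in the sense of diffusions, and $C^1$-regularity of the companion value function across $\beta$ (cf.\ \cite{de2015note,peskir2019continuity,deape2020}); undoing the smooth, bi-Lipschitz change of variables transfers continuity of $b$ on $[t^*,T]$ and $C^1$-regularity of $w$, hence of $v$ via \eqref{eq:uv}, so in particular $\ell_{x_0}(T)=x_0 e^{gT}$. Finally $\Gamma_b=\Gamma_\beta$ is an interval of the form $(\bar t,T]$ or $[\bar t,T]$ because $\beta$ is non-decreasing with $\beta(T)=e^{\lambda_0 T}>0$; a further argument (using $f<0$ on $(t^*,T]$) shows $b>0$ throughout $(t^*,T]$, so $\bar t=t^*$, and with $b\equiv 0$ on $[0,t^*]$ this gives $\overline{\Gamma_b}=[t^*,T]$. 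I expect this stage — proving Lemma \ref{lem:b'_delta} and verifying that $\beta$ is genuinely the optimal boundary of a well-posed companion problem covered by the monotone-boundary theory — to be the main difficulty; everything downstream is comparatively standard.

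\emph{Stages 3--4.} With $w\in C^1$ and $\cS$ regular, I would apply the It\^o--Dynkin formula to $s\mapsto e^{-cs}\,{}_s p_t\,w(t+s,Z^z_s)$ on $[0,T-t]$ (no local time at $\partial\cS$ by smooth fit), using that $w$ solves $(\partial_t+\cL_Z-c-\mu(t+\cdot))w=-\big(f(t+\cdot)+\mu(t+\cdot)(z-1)^+\big)$ on $\cC$ and $w\equiv 0$ on $\cS$; taking $\P$-expectations and writing the result through the transition density $\Phi$ of \eqref{eq:Phi} yields the early-exercise-premium representation of $w$, and evaluating it at $z=b(t)$ (where $w=0$) gives \eqref{eq:ellint} for $t\in\Gamma_b$. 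Uniqueness in $\Sigma$ is the standard argument: given $\psi\in\Sigma$ solving \eqref{eq:ellint}, define the candidate $w^\psi$ by the same formula with $\psi$ in place of $b$, show it vanishes on $\{z\le\psi(t)\}$ and satisfies the obstacle (in)equalities elsewhere, conclude $w^\psi=w$ and hence $\psi=b$. For the pricing formula, pass to $v(t,x)=x\,u(t,x_0 e^{gt}/x)$ and apply the It\^o--Dynkin formula to $s\mapsto e^{-rs}\,{}_s p_0\,v(s,X^{x_0}_s)$ on $[0,T]$: in $\cC$ one uses the PDE for $v$, in $\cS$ one uses $v(s,x)=(1-k(s))x$ and the definition \eqref{ftilde} of $f$ — so that the drift there is $e^{-rs}\,{}_s p_0\,X_s(f(s)-\mu(s))$ (cf.\ also \eqref{eq:kito}) — and $v(T,x)=\max\{x_0 e^{gT},x\}$; collecting the three contributions, namely the discounted expected maturity benefit, the surrender-region term on $\{X_s\ge\ell_{x_0}(s)\}$ and the continuation-region mortality term on $\{X_s<\ell_{x_0}(s)\}$, gives exactly \eqref{eq:V0int}. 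The first entrance time of $X^{x_0}$ into $\cS=\{x\ge x_0 e^{gt}/b(t)\}$ is optimal by the general theory, and this is the claimed $\gamma_*$ for $\ell_{x_0}:=x_0 e^{g\cdot}/b$ (so $\ell_{x_0}\equiv+\infty$ on $[0,t^*)$ and $\ell_{x_0}\in C([t^*,T])$), with $V_0=v(0,x_0)$ by \eqref{eq:V_0u(0,1)}.
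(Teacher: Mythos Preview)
Your proposal is correct and follows essentially the same route as the paper: existence of $b$ from monotonicity of $z\mapsto w(t,z)$, the one-sided growth bound of Lemma~\ref{lem:b'_delta} (which the paper proves by working with the smooth approximate level-sets $b_\delta$ via the implicit function theorem and then letting $\delta\downarrow 0$), the passage to the monotone boundary $\beta(t)=e^{\Lambda t}b(t)$ and the companion problem for $\hat w$, then $C^1$-regularity, continuity of $\beta$ (hence of $b$), and finally the integral equation and uniqueness \`a la Peskir. The only execution difference worth noting is that the paper obtains \eqref{eq:V0int} by first deriving the representation \eqref{eq:wint} for $w$ under $\P$ (applying Peskir's change-of-variable formula in the $(t,y)$-coordinates where $\beta$ is monotone) and then undoing the change of measure via \eqref{eq:PQ}, rather than applying It\^o--Dynkin directly to $e^{-rs}\,{_s p_0}\,v(s,X_s)$ under $\Q$ as you propose; both are valid, and your direct route is arguably cleaner once one checks that $b$ (being $e^{-\Lambda t}\beta(t)$ with $\beta$ monotone) is of bounded variation so that the local-time-on-curves formula applies.
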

A formal proof of the theorem is provided at the end of Section \ref{sect:bCont} after a long and delicate analysis of the problem, detailed in the next sections. We emphasise that we obtain numerous results concerning the smoothness of the function $v$ from \eqref{eq:uv} by studying properties of the function $w$ from \eqref{valuew} and recalling that $w(t,z)=u(t,z)-(1-k(t))$. In particular, it will be shown that $w$ is continuously differentiable in the whole space. Besides, we obtain continuity of the function $b$ (equivalently of the optimal surrender boundary $\ell_{x_0}$), at any $t\neq t^*$, despite its non-monotonicity. Continuity of $b$ is important to obtain uniqueness for the solution of the integral equation \eqref{eq:ellint} and for efficient numerical resolution thereof. We also notice that in the integral equation \eqref{eq:ellint} the function $\Phi$ is the transition density of the process $Z$ under the measure $\P$, which points us to the key role played by the change of measure to obtain our integral equation.

\section{Sensitivity analysis and economic interpretation}\label{Sec:NumEvid}

Before delving into the technical details of the proof of Theorem \ref{thm:main}, we illustrate the economic conclusions that can be drawn from our results. We perform a detailed sensitivity analysis of the VA's price and of the optimal surrender boundary with respect to various model parameters. We also find numerically the so-called {\em fair fee}, which makes $V_0=x_0$.

For the numerical illustrations in this section we use the Gompertz-Makeham mortality model. That is, for the force of mortality we assume 
\begin{align}
\label{force}
\mu(t)=A+BC^t,
\end{align}
with $A=0.0001$, $B=0.00035$ and $C=1.075$. These parameter values are taken from  \cite{MacKay} and they yield a life expectancy of $21.7$ years for a $50$ years old individual.
We consider the exponential penalty function from \eqref{kt}.
The benchmark parameters in what follows are: 
\begin{align}\label{eq:benchmk}
\begin{aligned}
\left\{
\begin{array}{c}
\text{$T=10$ years, $\eta=50$ years, surrender charge intensity $K=1.4\%$},\\
\text{annual risk-free rate $r=5\%$, annual volatility $\sigma=20\%$},\\ 
\text{annual fee rate $c=2.5\%$, minimum guaranteed $g=0\%$, initial fund value $x_0=100$}.
\end{array}
\right.
\end{aligned}
\end{align}
All parameters are expressed per annum. When different parameter choices are made, we state it explicitly (e.g., in Table \ref{tab2}). We emphasise that in all our numerical experiments Assumptions \ref{A1} and \ref{A2} are satisfied. 

\subsection{A numerical scheme for the boundary}
As stated in Theorem \ref{thm:main}, the optimal surrender boundary is given by $\ell_{x_0}(t)=x_0 e^{gt}/b(t)$, where $b$ is the unique solution of the integral equation \eqref{eq:ellint}. The numerical solution of this integral equation follows an algorithm based on a Picard iteration scheme that we learned from \cite{yerkin} and that has been used elsewhere in the optimal stopping literature (see, e.g., \cite[Sec.\ 6]{de2020optimal}). 

We take an equally-spaced partition $0=t_0<t_1<\ldots<t_{n-1}<t_n=T$ with $\Delta t:=t_{i+1}-t_i=T/n$. The Picard iteration is initialised with $b^ {0}(t_j)=\frac{1}{2}\big(1+(\frac{t_j}{T})^2\big)$ for $j=0,1,\ldots n$ and we notice that $b^0(T)=1$ in keeping with the theoretical result $b(T)=1$ from Theorem \ref{thm:bCont}. The particular choice of $b^0$ is motivated by numerical efficiency. Other specifications are possible but we observe slower convergence rates of the Picard scheme (always to the same limit). Let $b^ {k}(t_j)$, for $j=0,1,\ldots n$, be the values of the boundary obtained after the $k$-th iteration. Then, the value $b^ {k+1}(t_j)$ for the $(k+1)$-th iteration is computed as the value $0\le \theta\le (1-f(t_j)/\mu(t_j))^+$ that solves the following discretised version of equation \eqref{eq:ellint}: for $j=0,1,\ldots, n-1$
\begin{align*}
e^{-c(T-t_j)}& {_{T-t_j}}p_{t_j}\int_1^\infty \Phi(\theta,T-t_j,y)\ud y\\
&+\Delta t \sum_{i=j+1}^{n}e^{-c(t_i-t_j)} {_{t_i-t_j}}p_{t_j} \int_{b^ {k}(t_i)}^\infty \Phi(b^k(t_i),t_i-t_j,y) H(t_i,y)\ud y=0,
\end{align*}
where we recall the log-normal density $y\mapsto\Phi(z,s,y)$ from \eqref{eq:Phi} and $H(t,x)=\mu(t)(x-1)^++f(t)$ (cf.\ \eqref{H}).
Given a tolerance parameter $\eps>0$, the algorithm stops when 
\[
\max_{j=0,1,\ldots, n}| b^ {k}(t_j)-b^ {k+1}(t_j)|<\varepsilon.
\]
We chose $\eps=10^{-2}$. 
\begin{figure}[ht!]
\centering
\includegraphics[scale=0.5]{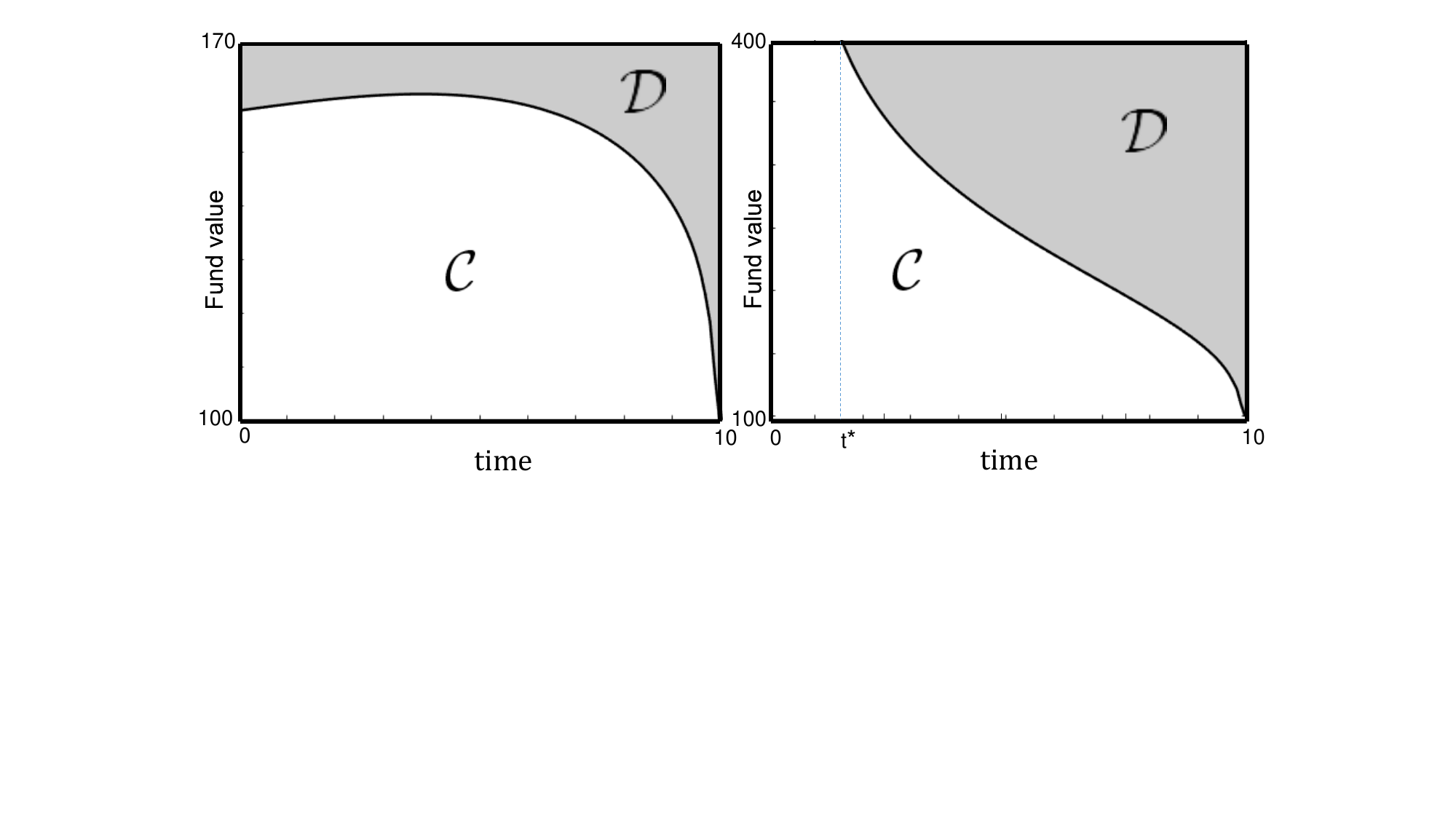}
\vspace{-4.5cm}
\caption{The optimal surrender boundary $t\mapsto \ell_{x_0}(t)$ for $K=1.4\%$ (left plot) and for $K=2.2\%$ (right plot). Notice that in the left plot $t^*=0$ and $b(t^*)>0$; instead, in the right plot $t^*>0$ and $b(t^*)>0$.}
\label{fig1}
\end{figure}

\subsection{Sensitivity to the surrender charge intensity}
Figure \ref{fig1} shows the optimal surrender boundary  $\ell_{x_0}(\cdot)$ with the associated continuation region $\cC$ and stopping region $\mathcal{D}$ (where it is, respectively, optimal to continue and to surrender) for  $K=1.4\%$ (left plot) and $K=2.2\%$ (right plot). In case $K=1.4\%$ (left plot) we observe $t^*=0$, whereas in case $K=2.2\%$ (right plot) we have $t^*=1.5$. In both cases, Assumption \ref{A2} \textit{(ii)} is satisfied.
For $K=1.4\%$, we observe that in the initial years of the contract surrender only occurs for large VA's account values. Approaching the maturity of the contract, surrender is optimal also for VA's account values closer to $x_0$ and, eventually, $\ell_{x_0}(T)=x_0$  (recall that here $g=0$). 
As previously anticipated, this is an example of a non-monotonic surrender boundary. For larger surrender charge intensity ($K=2.2\%$, right plot), the surrender option (SO) is never exercised for $t\in[0,t^*)$, whatever the fund's value.  This shows that large surrender charges in the early years of the contract totally disincentivise surrender. As time approaches the maturity, we observe a decreasing surrender boundary that eventually converges to $\ell_{x_0}(T)=x_0$. As shown in Remark \ref{remarkt*} the contract is never surrendered if $K\ge c$. In particular, $K=c$ is the minimum surrender charge intensity that totally removes the incentive to an early surrender, i.e., it yields $t^*=T$ (see \eqref{eq:t*} and Remark \ref{remarkt*}).

\subsection{Comparing effects of surrender charge and mortality}
In Figure \ref{fig3} we study the sensitivity of the optimal surrender boundary $\ell_{x_0}(\cdot)$ to the surrender charge intensity (left plot) and to the mortality specification (right plot). The boundary increases monotonically and rather rapidly with the surrender charge intensity. That shows a significant impact of the surrender charge on the shape of $\ell_{x_0}(\cdot)$ and on the decision of the policyholder to surrender the contract.
For the sensitivity analysis with respect to the mortality specification, we use the so-called \textit{proportional hazard rate transformation}, introduced in actuarial science by Wang \cite{Wang} (see also \cite{MY}). Given a mortality force $t\mapsto \mu(t)$ representative of the whole population, the transformation is defined as $(1+\bar{\mu})\mu(\cdot)$, with $\bar{\mu} \in (-1,+\infty)$. Roughly speaking, if $\bar{\mu}<0$ (resp.\ $\bar \mu>0$) the individual considers herself healthier (resp.\ unhealthier) than the average in the population.
In this way, we test the sensitivity of $\ell_{x_0}(\cdot)$ to perturbations of the mortality force specified in \eqref {force}. Notice that for $\bar{\mu} \in \{ 0, -0.38, 0.38 \}$ the life expectancy of a $50$ years old individual is respectively $21.7$, $26.6$ and $18.5$ years. We observe that the optimal surrender boundary exhibits no substantial change in its behaviour for the considered values of $\bar{\mu}$. More precisely, moving from $\bar{\mu}=0$ to $\bar{\mu}=\pm 0.38$ causes a maximal shift for  $\ell_{x_0}(\cdot)$ of less than $3\%$. Therefore, we conclude that a change in the mortality specification has a small impact on the optimal surrender boundary. In summary, Figure \ref{fig3} shows that the main driver in the policyholder's decision to surrender the contract is the surrender charge structure, rather than demographic considerations (cf.\ Remark \ref{rem:impact}).

\begin{figure}[ht!]
\centering
\includegraphics[scale=0.5]{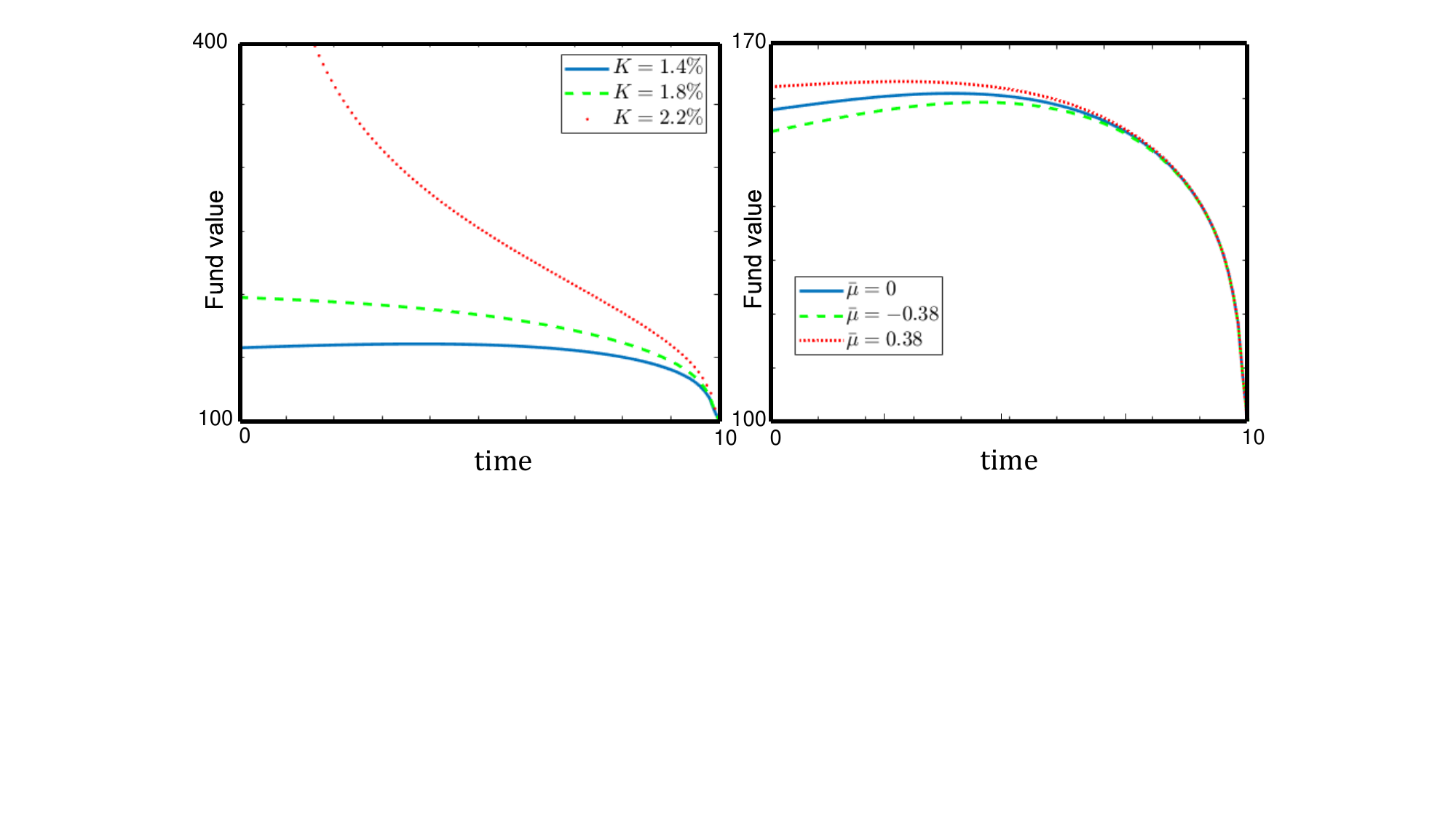}
\vspace{-4.5cm}
\caption{Sensitivity of the optimal surrender boundary $t\mapsto \ell_{x_0}(t)$ with respect to $K$ (left plot) and to $\bar{\mu}$ (right plot).}
\label{fig3}
\end{figure}

\subsection{Sensitivity to fee rate and minimum guaranteed}
In Figure \ref{fig4} we study the sensitivity of the optimal surrender boundary $\ell_{x_0}(\cdot)$ to the constant fee rate $c$ (left plot) and to the minimum guaranteed rate $g$ (right plot).  On the one hand, when the constant fee rate $c$ increases, the policyholder pays a progressively greater percentage of the account value to the insurer. Therefore, the incentive to surrender increases, the stopping region expands and the optimal boundary is pushed downward in the plot. On the other hand, as the guaranteed roll-up rate $g$ increases, the policyholder progressively obtains more benefits from staying in the contract and then the continuation region expands. As a result the optimal boundary $\ell_{x_0}(\cdot)$ is pushed upward in our plot. Both the fee rate and the minimum guaranteed rate have a significant impact on the positioning of the optimal boundary, hence on the policyholder's surrender strategy.

\begin{figure}[ht!]
\centering
\includegraphics[scale=0.5]{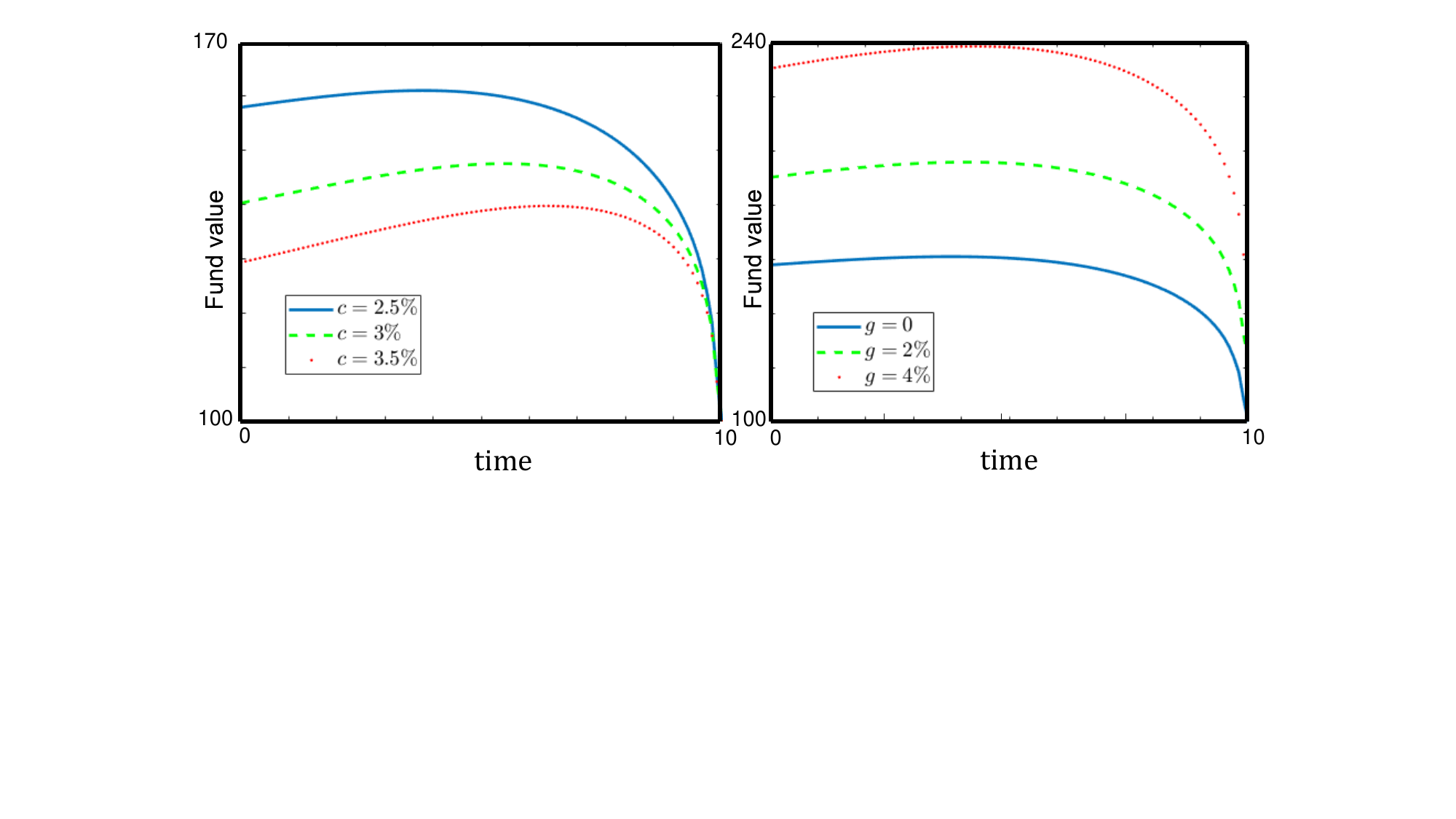}
\vspace{-4.5cm}
\caption{Sensitivity of the optimal surrender boundary $t\mapsto \ell_{x_0}(t)$ with respect to $c$ (left plot) and to $g$ (right plot).}
\label{fig4}
\end{figure}

Within this context, it is interesting to analyse the fee rate that makes the contract fairly priced. In other words, keeping all other parameters fixed and considering the contract's value $V_0=v(0,x_0)$ as a function of the fee rate $c\mapsto V_0(c)=v(0,x_0;c)$, we are after the fee rate $c^*$ for which $V_0(c^*)=x_0$. That is, $c^*$ is the fee that makes the expected present value of the cash inflows and outflows perfectly balanced. We collect in Table \ref{tabella1} the fair fee rates corresponding to the policyholder's age at the issuance of the contract (all other parameters are set as in \eqref{eq:benchmk}).
\begin{table}[h!]
	\begin{center}
		\begin{tabular}{| l | c c c|}
			\hline
			Age at issue & 50 & 60 & 70\\
			Fair fee & 2\% & 2.2\% & 2.5\%\\
			\hline
		\end{tabular}
		\vspace{8pt}
		\caption{Fair fee for different policyholder's initial ages.}
\label{tabella1} 
	\end{center}
\end{table}
The higher the policyholder's age, the higher the fee rate that makes the contract fairly priced. In other words, as the policyholder's initial age increases the demographic risk increases too (recall that in case of death the insurance pays back the full account value). On the insurer's side this risk is compensated with a higher fee in order to make the contract fair. We remark that if the fee increases, while all other parameters are fixed, the value of the VA decreases. Thus charging fees above the fair fee will make the VA's price drop below par.

\begin{table}[hb!]
	\centering
	\begin{tabular}{ |p{2.5cm} p{2cm}|p{2cm}|p{2cm}|p{2cm}|  }
		\hline
		Spread & Scenario &$V_0$&$U_0$&$V^{\text{SO}}$\\
		\hline
		$r-g=5\%$   & A    &87.2&   82.7 & 4.5\\
		& B    &89.07& 82.7    & 6.37\\
		& C   &90.97&   89.96 & 1.01\\
		& D   &92.16& 89.96 & 2.2
		\vspace{0.5cm}\\
		$r-g=3\%$   & A    &93.37&   90.56 & 2.81\\
		& B    &94.52& 90.56  & 3.96\\
		& C   &97.44&   96.75 & 0.69\\
		& D   &98.2&  96.75 & 1.45
		\vspace{0.5cm}\\
		$r-g=1\%$   & A    &103.38&   101.7 & 1.68\\
		& B    &104.04& 101.7    & 2.34\\
		& C   &107.17&   106.71 & 0.46\\
		& D   &107.64& 106.71 & 0.93 \\
		\hline
	\end{tabular}
	\vspace{0.5cm}
	\caption{The values $V_0$ of the VA contract, $U_0$ of the contract without the option of an early surrender and the value $V^{\text{SO}}$ of the SO.}
	\label{tab2}
\end{table}
\subsection{Sensitivity to risk-free and minimum guaranteed rates}
We conclude the section by analysing the impact of the spread between the risk free rate $r$ and the guaranteed minimum rate $g$ on the value of the contract as well as on the value of the embedded surrender option (SO). 
This is accomplished by comparing the value $V_0$ of the VA to the value $U_0$ of its European counterpart (i.e., without SO). We recall that $V_0$ is computed as in \eqref{eq:V0int}. Using independence of $\tau_d$ from $\cF_\infty$, we may rewrite the value $U_0$ of the European contract in \eqref{U0} as 
\begin{align*}
\begin{aligned}
U_0&= \EQ\Big[\int_0^{T}e^{-rt}{_t p_0}\mu(t)\max\{x_0e^{gt},X_t\}\ud t+e^{-r\, T}{_T p_0}\max\{x_0e^{g T},X_{T}\}\Big]\\
&=x_0 \int_0^{T}e^{-\int_0^t\mu(s)\ud s-rt}\mu(t)\Big[ e^{gt}N\Big(\big(\tfrac{\sigma}{2}+\tfrac{\alpha}{\sigma}\big)\sqrt{t}\Big)+e^{(r-c)t}N\Big(\big(\tfrac{\sigma}{2}-\tfrac{\alpha}{\sigma}\big)\sqrt{t}\Big)\Big]\ud t\\
&\quad +x_0 e^{-rT}{_T p_0}\Big[ e^{gT}N\Big(\big(\tfrac{\sigma}{2}+\tfrac{\alpha}{\sigma}\big)\sqrt{T}\Big)+e^{(r-c)T}N\Big(\big(\tfrac{\sigma}{2}-\tfrac{\alpha}{\sigma}\big)\sqrt{T}\Big)\Big],
\end{aligned}
\end{align*} 
where we used the distribution of $X$ (cf.\ \eqref{dynX0}) and $N$ denotes the cumulative standard normal distribution.

Table \ref{tab2} displays values of $V_0$, $U_0$ and $V^{\text{SO}}=V_0-U_0$. The latter is precisely the value of the SO embedded in the contract. We consider different scenarios depending on the values of the fee rate $c$ and of the surrender charge intensity $K$: Scenario A ($c=4\%$, $K=1.8\%$), Scenario B ($c=4\%$, $K=1.4\%$), Scenario C ($c=2.5\%$, $K=1.8\%$) and Scenario D ($c=2.5\%$, $K=1.4\%$). In each scenario we vary the spread $r-g$, keeping all other parameters fixed as in \eqref{eq:benchmk}.  

For each value of the spread, the value of the European contract is not affected by the surrender charge intensity $K$ and it only changes when we vary the fee rate $c$. Naturally $V_0$ is always greater than $U_0$ due to the embedded option.
In all scenarios the values $V_0$ and $U_0$ decrease as the spread $r-g$ increases. That happens because both contracts become financially less appealing to an investor when compared, for example, to bond investments. Conversely, $V^{\text{SO}}$ increases with increasing spread, hence acquiring even more value in relative terms in the overall contract's value. This is in line with the intuition that for larger spread the policyholder has greater financial incentives to exercise the SO. We additionally notice that for  $r-g=1\%$ the values of  $V_0$ and $U_0$ are above par (i.e., above the initial account value $x_0$). Instead when the spread is high enough both contract values are below par. It is hard to imagine that VA's would be traded below par, because that corresponds to situations in which the policyholder pays a price which is lower than the insurer's cost to set up the fund. However, we see that this makes sense mathematically because the fee $c$ and the surrender charge intensity, combined with the large spread $r-g$ can compensate the insurer for the initial loss. 

For any fixed value of the spread, scenario A is the most unfavourable to the policyholder as it presents relatively high values for both $c$ and $K$. More in general, contract values increase moving from scenario A to D.  The SO's value increases with $c$, i.e., the higher is the fee charged to the policyholder the more the policyholder is incentivised to abandon the contract. Instead, $V^{\text{SO}}$ decreases as the surrender charge intensity $K$ increases. 

In conclusion, Table \ref{tab2} suggests that the design of a marketable contract with mild surrender risk should follow the structure of scenario D, i.e., it should feature low values of $c$ and $K$ and, possibly, small spread $r-g$.

\section{Continuity of the value function and optimal boundary}\label{sec:contb}

In this section we begin the theoretical analysis. The state space of our problem is $[0,T]\times(0,\infty)$ and we work with the parametrisation in terms of the function $w$ from \eqref{valuew}.
Clearly, $w(T,z)=(z-1)^+$ by definition. Next we study the behaviour of $z\mapsto w(t,z)$ in the limit for $z\to 0$. Recall the definition of $t^*$ in \eqref{eq:t*}. We start with a simple observation: since $f(s)\ge 0$ for $s\in[0,t_*]$ and the law of $Z_t$ has full support on $(0,\infty)$ for every $t>0$,  then, in the case $t_*>0$, for $t\in[0,t_*)$ we have $w(t,z)>0$ for all $z\in(0,\infty)$. That is easily seen by choosing $\tau=t_*-t$ in \eqref{valuew}:
\begin{align}\label{eq:wlb}
w(t,z)\ge \E\Big[\int_{0}^{t_*-t}e^{-cs} \,  _s p_{t}  \Big(f(t+s)+ \mu(t+s) (Z^z_s-1)^{+}  \Big)  \ud s\Big]>0.
\end{align}
We can further refine this observation as follows.
\begin{lemma}\label{lem:stb}
Assume $t_*\in(0,T)$ and fix $t\in[0,t_*)$. Then, 
\[
w(t,z)>0\quad\text{ and }\quad w(t,z)=\sup_{t_*-t\le \tau\le T-t}\E[\cP_{t,z}(\tau)],\quad\text{ for all $z\in(0,\infty)$,}
\]
where $\cP_{t,z}(\cdot)$ was defined in \eqref{setP}.
That is, it is not optimal to stop at time $t$ and it suffices to pick stopping times in the set $[t_*-t,T-t]$. 
\end{lemma}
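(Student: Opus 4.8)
The plan is to prove both statements via a single ``improvement'' of an arbitrary stopping time: any stopping time that stops strictly before the (time-shifted) threshold $t_*-t$ can be replaced by its maximum with the constant $t_*-t$ without decreasing the reward. The strict positivity $w(t,z)>0$ needs nothing new: it is exactly \eqref{eq:wlb}, obtained by plugging the deterministic time $\tau=t_*-t$ into \eqref{valuew}, which bounds $w(t,z)$ below by $\E\big[\int_0^{t_*-t}e^{-cs}\,{_s p_t}\big(f(t+s)+\mu(t+s)(Z^z_s-1)^+\big)\ud s\big]$; this is strictly positive because $f\ge0$ on $[0,t_*]$, $\mu\ge0$, and the law of $Z^z_s$ has full support on $(0,\infty)$ for every $s>0$.

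For the representation formula, I would fix $z\in(0,\infty)$ and an arbitrary $\bF$-stopping time $\tau\in[0,T-t]$. Since $t<t_*<T$, the constant $t_*-t$ lies in $(0,T-t)$, so $\tau':=\tau\vee(t_*-t)$ is again a stopping time, now with values in $[t_*-t,T-t]$. The core step is the pathwise inequality $\cP_{t,z}(\tau')\ge \cP_{t,z}(\tau)$ $\P$-a.s. On $\{\tau\ge t_*-t\}$ this is trivial since $\tau'=\tau$. On $\{\tau<t_*-t\}$ we have $\tau'=t_*-t$, and because $\tau<t_*-t<T-t$ the terminal indicator term of \eqref{setP} vanishes for both $\tau$ and $\tau'$; hence
\[
\cP_{t,z}(\tau')-\cP_{t,z}(\tau)=\int_{\tau}^{t_*-t}e^{-cs}\,{_s p_t}\Big(f(t+s)+\mu(t+s)\big(Z^z_s-1\big)^{+}\Big)\ud s .
\]
This is non-negative because for $s$ in the range of integration $t+s\le t_*$ forces $f(t+s)\ge0$, while $e^{-cs}>0$, ${_s p_t}\ge0$, $\mu\ge0$ and $(Z^z_s-1)^+\ge0$.

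Taking expectations gives $\E[\cP_{t,z}(\tau')]\ge\E[\cP_{t,z}(\tau)]$ for every $\tau\in[0,T-t]$, with $\tau'\in[t_*-t,T-t]$; passing to the supremum over $\tau$ yields $w(t,z)\le\sup_{t_*-t\le\tau\le T-t}\E[\cP_{t,z}(\tau)]$, and the reverse inequality is obvious since $[t_*-t,T-t]\subseteq[0,T-t]$, so the two coincide. Since $\cP_{t,z}(0)=0$ (the integral is empty and $0\ne T-t$), this also gives $0=\E[\cP_{t,z}(0)]<w(t,z)$, i.e.\ it is not optimal to stop at time $t$. The argument is otherwise routine; the only point that deserves a moment's care is the vanishing of both terminal indicator terms on $\{\tau<t_*-t\}$, which is precisely where the hypothesis $t_*<T$ is used.
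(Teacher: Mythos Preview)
Your proof is correct and follows essentially the same approach as the paper: both define the improved stopping time $\tau'=\tau\vee(t_*-t)$ (the paper writes it as $(t_*-t)\ind_{\{\tau\le t_*-t\}}+\tau\ind_{\{\tau>t_*-t\}}$, which is the same random variable), compare $\cP_{t,z}(\tau')$ and $\cP_{t,z}(\tau)$ pathwise on the two events using $f\ge0$ on $[0,t_*]$, and conclude by taking the supremum. Your explicit remark that both terminal indicator terms vanish on $\{\tau<t_*-t\}$ because $t_*<T$ is a detail the paper leaves implicit, but otherwise the arguments coincide.
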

\begin{proof}
The lower bound $w(t,z)>0$ follows by \eqref{eq:wlb}. Setting $\tau':=(t_*-t)\ind_{\{\tau\le t_*-t\}}+\tau\ind_{\{\tau>t_*-t\}}$ for a stopping time $\tau\in[0,T-t]$, it is easy to check that $\tau'$ is a stopping time with values in $[t_*-t,T-t]$. Recalling that $f(s)\ge 0$ for $s\in[0,t_*]$ we obtain
\begin{align*}
\ind_{\{\tau\le t_*-t\}}\cP_{t,z}(\tau)&=\ind_{\{\tau\le t_*-t\}}\int_{0}^{\tau}\!\!e^{-cs}  {_s p_{t}}  \Big(f(t\!+\!s)\!+\! \mu(t\!+\!s) \big(Z^z_s\!-\!1\big)^{+}  \Big)  \ud s\\
&\le \ind_{\{\tau\le t_*-t\}}\int_{0}^{t_*-t}\!\!e^{-cs}  {_s p_{t}}  \Big(f(t\!+\!s)\!+\! \mu(t\!+\!s) \big(Z^z_s\!-\!1\big)^{+}  \Big)  \ud s=\ind_{\{\tau\le t_*-t\}}\cP_{t,z}(\tau').
\end{align*}
Moreover, $\ind_{\{\tau> t_*-t\}}\cP_{t,z}(\tau)=\ind_{\{\tau> t_*-t\}}\cP_{t,z}(\tau')$.
Then 
\[
\E[\cP_{t,z}(\tau)]=\E[\ind_{\{\tau\le t_*-t\}}\cP_{t,z}(\tau)+\ind_{\{\tau> t_*-t\}}\cP_{t,z}(\tau)]\le \E[\cP_{t,z}(\tau')].
\]
The above implies $w(t,z)\le \sup_{t_*-t\le \tau\le T-t}\E[\cP_{t,z}(\tau)]$. The reverse inequality is obvious and therefore the proof is complete.
\end{proof}

\begin{lemma}\label{lem:wincrconv}
For every $t\in[0,T]$, the function $z\mapsto w(t,z)$ is non-decreasing and convex with $\lim_{z\to\infty}w(t,z)=\infty$ and
	\begin{equation}\label{eq:w(t,0+)}
		w(t,0+):=\lim_{z\downarrow 0}w(t,z)=
		\begin{cases}
			\int_0^{t^*-t}e^{-cs}\, _s p_t f(t+s)\ud s \quad & t\in[0,t^*),\\
			0 \quad & t\in [t^*,T].
		\end{cases}
	\end{equation}
\end{lemma}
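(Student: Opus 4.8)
The plan is to base everything on the scaling identity $Z^z_s=z\,Z^1_s$, where $Z^1$ solves \eqref{eq:Z} with $Z^1_0=1$; this reduces the monotonicity and convexity of $z\mapsto w(t,z)$ to pathwise properties of the reward $\cP_{t,z}(\tau)$ from \eqref{setP}. For each fixed stopping time $\tau$ and each $\omega$, the map $z\mapsto(Z^z_s-1)^+=(z\,Z^1_s-1)^+$ is non-decreasing and convex for every $s$, while $f(t+s)$ is independent of $z$; since the weights $e^{-cs}{}_{s}p_{t}\mu(t+s)$ and $e^{-c(T-t)}{}_{T-t}p_{t}$ are non-negative, $z\mapsto\cP_{t,z}(\tau)$ is non-decreasing and convex. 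From the crude estimate $\cP_{t,z}(\tau)\le\int_0^{T-t}e^{-cs}{}_{s}p_{t}\big(|f(t+s)|+\mu(t+s)z\,Z^1_s\big)\ud s+e^{-c(T-t)}{}_{T-t}p_{t}\,z\,Z^1_{T-t}$, together with $\E[Z^1_s]=e^{\alpha s}$ and Fubini (using that $\mu$ is continuous, hence bounded on $[t,T]$, so ${}_{s}p_{t}>0$), one gets $w(t,z)<\infty$. Hence $z\mapsto\E[\cP_{t,z}(\tau)]$ is finite, non-decreasing and convex, and these properties are preserved by the supremum over $\tau$.

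For $\lim_{z\to\infty}w(t,z)=\infty$ (the case $t=T$ being trivial since $w(T,z)=(z-1)^+$), I would plug the admissible choice $\tau=T-t$ into \eqref{valuew} and keep only the terminal term: $w(t,z)\ge e^{-c(T-t)}{}_{T-t}p_{t}\,\E[(z\,Z^1_{T-t}-1)^+]+\int_0^{T-t}e^{-cs}{}_{s}p_{t}f(t+s)\ud s\ge e^{-c(T-t)}{}_{T-t}p_{t}\big(z\,e^{\alpha(T-t)}-1\big)+C$, which diverges because ${}_{T-t}p_{t}>0$.

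The substantive part is $w(t,0+)$, whose existence is already guaranteed by monotonicity. The key observation is that $\sup_{s\in[0,T-t]}Z^z_s=z\sup_{s\in[0,T-t]}Z^1_s\downarrow 0$ a.s.\ as $z\downarrow 0$, so $(Z^z_s-1)^+\downarrow 0$ for all $s$ simultaneously, pathwise. For the upper bound I would bound, uniformly over $\tau$, $\E[\cP_{t,z}(\tau)]\le G^*+A(z)+B(z)$, where $G^*:=\sup_{r\in[0,T-t]}\int_0^r e^{-cs}{}_{s}p_{t}f(t+s)\ud s$ (using that $\int_0^{\tau}e^{-cs}{}_{s}p_{t}f(t+s)\ud s\le G^*$ pathwise, the integrand being deterministic), while $A(z):=\E[\int_0^{T-t}e^{-cs}{}_{s}p_{t}\mu(t+s)(z\,Z^1_s-1)^+\ud s]$ and $B(z):=e^{-c(T-t)}{}_{T-t}p_{t}\,\E[(z\,Z^1_{T-t}-1)^+]$; dominated convergence (for $z\le1$ the integrands are dominated by $Z^1_s$, integrable by the computation above) gives $A(z),B(z)\to0$, whence $w(t,0+)\le G^*$. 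For the matching lower bound I would use \emph{deterministic} stopping times $r\in[0,T-t)$: since the $\mu$-term is non-negative, $w(t,z)\ge\int_0^r e^{-cs}{}_{s}p_{t}f(t+s)\ud s$ for all $z$, so $w(t,0+)\ge\sup_{r\in[0,T-t)}\int_0^r e^{-cs}{}_{s}p_{t}f(t+s)\ud s=G^*$, the last equality by continuity of $r\mapsto\int_0^r e^{-cs}{}_{s}p_{t}f(t+s)\ud s$.

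It then remains to read off $G^*$ from the sign structure of $f$ in Assumption \ref{A2}. If $t\ge t^*$, then for $s\in(0,T-t]$ we have $t+s>t^*$, so the integrand $e^{-cs}{}_{s}p_{t}f(t+s)$ is $\le 0$ on $[0,T-t]$ and the maximum $G^*=0$ is attained at $r=0$, matching \eqref{eq:w(t,0+)}. If $t<t^*$, the integrand is $\ge 0$ on $[0,t^*-t]$ and $<0$ on $(t^*-t,T-t]$, so $r\mapsto\int_0^r(\cdots)$ increases on $[0,t^*-t]$ and decreases afterwards, giving $G^*=\int_0^{t^*-t}e^{-cs}{}_{s}p_{t}f(t+s)\ud s$, again as claimed. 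I expect the delicate point to be the $z\downarrow0$ interchange — specifically, obtaining the upper bound \emph{uniformly in $\tau$} and reducing the supremum over stopping times of the $f$-part to a deterministic optimisation — since the random time beyond which $(Z^z_s-1)^+$ vanishes cannot be inserted directly into a supremum over stopping times.
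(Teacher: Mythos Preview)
Your proof is correct. Monotonicity, convexity and the limit at infinity are handled exactly as in the paper (pathwise properties of $z\mapsto(zZ^1_s-1)^+$, preserved by supremum; the choice $\tau=T-t$ for the divergence).

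For the limit $w(t,0+)$ your organisation differs slightly from the paper's. The paper first invokes Lemma~\ref{lem:stb} to restrict to stopping times $\tau\ge t^*-t$ when $t<t^*$, and then uses $f(s)<0$ on $(t^*,T]$ to bound the $f$-integral by $\int_0^{t^*-t}e^{-cs}{}_sp_t f(t+s)\ud s$ directly; the residual term $\varphi(t,z_n)$ is your $A(z)+B(z)$. You instead introduce the deterministic quantity $G^*=\sup_{r\in[0,T-t]}\int_0^r e^{-cs}{}_sp_t f(t+s)\ud s$, bound the $f$-part of $\cP_{t,z}(\tau)$ pathwise by $G^*$ for \emph{any} stopping time, and only afterwards identify $G^*$ from the sign structure of $f$. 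This is a cleaner, more unified treatment: it handles both cases $t<t^*$ and $t\ge t^*$ in one stroke and makes no appeal to Lemma~\ref{lem:stb}. The paper's route, on the other hand, ties in more explicitly with the (already established) fact that stopping before $t^*-t$ is never optimal.
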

\begin{proof}
Monotonicity easily follows because $z\mapsto (Z^{z}_s(\omega)-1)^+$ is non-decreasing. Moreover, taking $\tau=T-t$ in \eqref{valuew} we obtain the lower bound 
\[
w(t,z)\ge\E\Big[\int_0^{T-t}e^{-cs}{_s}p_tf(t+s)\ud s\Big]+ e^{-c(T-t)}{_{T-t} p_t}\E[(Z^z_{T-t}-1)^+]
\] 
and by monotone convergence we conclude that $\lim_{z\to\infty}w(t,z)=\infty$. For the convexity, let $z_1,z_2\in(0,\infty)$ and $\lambda\in(0,1)$. Set $z_\lambda=\lambda z_1+(1-\lambda)z_2$. Then $Z^{z_\lambda}=\lambda Z^{z_1}+(1-\lambda)Z^{z_2}$ by linearity of the geometric Brownian motion with respect to its initial condition and $(Z^{z_\lambda}-1)^+\le \lambda (Z^{z_1}-1)^++(1-\lambda)(Z^{z_2}-1)^+$ by convexity of the positive part. Combining these inequalities we obtain for any stopping time $\tau\in[0,T-t]$
\begin{align*}
&\E\big[\cP_{t,z_\lambda}(\tau)\big]\le \E\big[\lambda\cP_{t,z_1}(\tau)+(1-\lambda)\cP_{t,z_2}(\tau)\big]\le \lambda w(t,z_1)+(1-\lambda)w(t,z_2).
\end{align*}
By arbitrariness of $\tau$ we conclude that $w(t,z_\lambda)\le \lambda w(t,z_1)+(1-\lambda)w(t,z_2)$ as needed.

It remains to calculate the limit in \eqref{eq:w(t,0+)}. Clearly the limit exists by monotonicity of $w(t,\,\cdot\,)$. Let $(z_n)_{n\in\bN}\subseteq (0,\infty)$ be such that $z_n\downarrow 0$. With no loss of generality we assume $t_*\in(0,T)$ as the cases $t_*=0$ and $t_*=T$ are easier and can be treated with the same arguments. 

First, consider $t\in[0,t^*)$. The lower bound \eqref{eq:wlb} holds for $z=z_n$. Then, letting $n\to\infty$ we have
\begin{align}\label{eq:w(t,0+)a}
w(t,0+)\geq \int_0^{t^*-t} e^{-cs} \, _sp_t f(t+s)\ud s.
\end{align}
For the opposite inequality, we recall that it suffices to take stopping times $\tau$ larger than $t_*-t$ (cf.\ Lemma \ref{lem:stb}). Since $f(s)<0$ for $s>t_*$, then for any $\tau\in[t_*-t,T-t]$ 
\begin{align}\label{eq:w(t,0+)b}
\begin{aligned}
\E[\cP_{t,z_n}(\tau)]&\le \int_0^{t^*-t} e^{-cs} \, _sp_t f(t+s)\ud s\\
&\quad+\E\bigg[\int_0^{T-t} e^{-cs} \, _sp_t\mu(t+s)\big(Z^{z_n}_s-1\big)^+\ud s+e^{-c(T-t)}\, _{T-t}p_t \big(Z^{z_n}_{T-t}-1\big)^+\Big]\\
&=:\int_0^{t^*-t} e^{-cs} \, _sp_t f(t+s)\ud s+\varphi(t,z_n).
\end{aligned}	
\end{align}	
It follows that 
\[
w(t,z_n)\le \int_0^{t^*-t} e^{-cs} \, _sp_t f(t+s)\ud s+\varphi(t,z_n)
\]
and it is easy to check by monotone convergence that $\lim_{n\to\infty}\varphi(t,z_n)=0$. Combining with \eqref{eq:w(t,0+)a} we obtain \eqref{eq:w(t,0+)} for $t\in[0,t_*)$.

Now let $t\in[t^*,T]$. By the same argument as in \eqref{eq:w(t,0+)b} we have $w(t,z_n)\le \varphi(t,z_n)$, upon noticing that $f(s)\le 0$ for $s\in[t_*,T]$. Then $w(t,0+)\le 0$. Since $w(t,z)\ge 0$ for all $(t,z)\in[0,T]\times(0,\infty)$, then \eqref{eq:w(t,0+)} holds also for $t\in[t_*,T]$. 
\end{proof}

We now study continuity of the value function. First, we establish Lipschitz continuity in the $z$-variable, uniformly in $t$, and then $\frac12$-H\"older continuity in time, locally uniformly in space.

\begin{lemma}\label{lem:w(z)Lip}
	There exists $C>0$ such that for every $t\in[0,T]$ and $z_1,z_2\in(0,\infty)$
	\begin{equation}\label{eq:w(z)Lip}
	|w(t,z_2)-w(t,z_1)|\leq C |z_2-z_1|.
	\end{equation}
\end{lemma}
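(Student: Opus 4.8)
The plan is to exploit the representation $w(t,z)=\sup_{0\le\tau\le T-t}\E[\cP_{t,z}(\tau)]$ from \eqref{setP} together with the elementary fact that the supremum of a family of functions that are all Lipschitz with a common constant is again Lipschitz with that constant. Concretely, for $z_1,z_2\in(0,\infty)$ and any stopping time $\tau\in[0,T-t]$, I would estimate $\cP_{t,z_2}(\tau)-\cP_{t,z_1}(\tau)$ pathwise. The only $z$-dependence in $\cP_{t,z}(\tau)$ enters through the terms $\mu(t+s)(Z^z_s-1)^+$ in the integrand and $(Z^z_{T-t}-1)^+$ in the boundary term. Since $x\mapsto (x-1)^+$ is $1$-Lipschitz, we get $|(Z^{z_2}_s-1)^+-(Z^{z_1}_s-1)^+|\le |Z^{z_2}_s-Z^{z_1}_s|$, and by linearity of the geometric Brownian motion in its initial datum, $Z^{z_2}_s-Z^{z_1}_s=(z_2-z_1)\,Z^1_s$, where $Z^1_s=\exp((\alpha-\tfrac{\sigma^2}{2})s+\sigma W_s)$.

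Putting these together, and using that $\mu$ is continuous hence bounded on $[0,T]$, say $\mu(t)\le \bar\mu$, and $_sp_t\le 1$, $e^{-cs}\le 1$, I obtain
\begin{align*}
\big|\cP_{t,z_2}(\tau)-\cP_{t,z_1}(\tau)\big|\le |z_2-z_1|\bigg(\bar\mu\int_0^{T-t}Z^1_s\,\ud s+Z^1_{T-t}\bigg)\le |z_2-z_1|\bigg(\bar\mu\int_0^{T}Z^1_s\,\ud s+\sup_{s\in[0,T]}Z^1_s\bigg).
\end{align*}
Taking expectations, $\E[\cP_{t,z_2}(\tau)]-\E[\cP_{t,z_1}(\tau)]\le C|z_2-z_1|$ with $C:=\bar\mu\,\E\big[\int_0^{T}Z^1_s\,\ud s\big]+\E\big[\sup_{s\in[0,T]}Z^1_s\big]$, a finite constant independent of $t$, $z_1$, $z_2$ and $\tau$ (finiteness of $\E[\sup_{s\le T}Z^1_s]$ follows from Doob's maximal inequality applied to the submartingale $Z^1$, or simply from $\E[\sup_{s\le T}e^{\sigma W_s}]<\infty$; finiteness of $\E[\int_0^T Z^1_s\ud s]=\int_0^T e^{\alpha s}\ud s<\infty$ is immediate by Tonelli). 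Taking the supremum over $\tau$ on the left gives $w(t,z_2)-w(t,z_1)\le C|z_2-z_1|$, and exchanging the roles of $z_1,z_2$ yields \eqref{eq:w(z)Lip}.

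There is no real obstacle here; the argument is routine once one notices the pathwise linearity $Z^{z_2}_s-Z^{z_1}_s=(z_2-z_1)Z^1_s$ and the $1$-Lipschitz property of $(\cdot-1)^+$. The only mild technical point to be careful about is ensuring the dominating random variable $\bar\mu\int_0^T Z^1_s\,\ud s+\sup_{s\le T}Z^1_s$ is genuinely integrable, which justifies both the interchange of supremum and expectation (only the easy direction $\sup\E\le\E\sup$ is needed, so in fact no interchange theorem is required) and the finiteness of $C$; this is where I would invoke standard moment bounds for geometric Brownian motion.
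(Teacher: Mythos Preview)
Your proof is correct and follows essentially the same route as the paper: both exploit the $1$-Lipschitz property of $(\cdot-1)^+$, the pathwise linearity $Z^{z_2}_s-Z^{z_1}_s=(z_2-z_1)Z^1_s$, and boundedness of $\mu$ on $[0,T]$. The only cosmetic difference is that the paper computes $\E[Z^1_s]=e^{\alpha s}$ explicitly (yielding $C=e^{|\alpha|T}(1+\mu_M T)$), whereas you pass through $\sup_{s\le T}Z^1_s$ and invoke Doob's inequality; both give a finite constant independent of $(t,z_1,z_2,\tau)$.
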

\begin{proof}
	Let $t\in[0,T]$ and $z_1,z_2\in(0,\infty)$. Without loss of generality, assume $z_1\leq z_2$. Since $z\mapsto w(t,z)$ is non-decreasing, we immediately have that
$w(t,z_2)-w(t,z_1)\geq 0$.
For the reverse inequality we notice that 
\begin{align*}
w(t,z_2)-w(t,z_1)&\le \sup_{0\le \tau\le T-t}\E\big[\cP_{t,z_2}(\tau)-\cP_{t,z_1}(\tau)\big]\\
&\le \sup_{0\le \tau\le T-t}\E\Big[ \int_{0}^{\tau}e^{-c\,s} \,  _s p_{t}  \mu(t+s) \big((Z^{z_2}_s-1)^{+}-(Z^{z_1}_s-1)^{+}\big) \ud s \notag\\
	&\qquad\qquad\qquad+\ind_{\{\tau= T-t \}}e^{-c (T-t)}  \, _{T-t} p_{t} \big((Z^{z_2}_{T-t}-1)^{+}-(Z^{z_1}_{T-t}-1)^{+}\big) \Big]\\
&\le \E\Big[\mu_M\int_{0}^{T-t} \big|Z^{z_2}_s-Z^{z_1}_s\big| \ud s+\big|Z^{z_2}_{T-t}-Z^{z_1}_{T-t}\big| \Big],	
\end{align*}	
where $\mu_M:=\max_{s\in[0,T]}\mu(s)$ (recall Assumption \ref{A1}).
For every $s\in[0,T-t]$, we have 
\begin{align*}
\E\big[|Z^{z_2}_s-Z^{z_1}_s|\big]=(z_2-z_1)\E[Z^1_s]=(z_2-z_1)e^{\alpha s},
\end{align*}
with $\alpha$ as in \eqref{eq:alpha}.
Therefore,
$0\le w(t,z_2)-w(t,z_1)\leq C|z_2-z_1|$, with $C:=e^{|\alpha|T}(1+\mu_M T)$.
\end{proof}

\begin{lemma}\label{lem:w(t)Cont}
There is $C>0$ such that for every $z\in(0,\infty)$ and $0\le t< u\le T$ we have
\begin{align}\label{eq:holder}
|w(t,z)-w(u,z)|\le C\big[(1+z)(u-t)+z\sqrt{u-t}\big].
\end{align}
\end{lemma}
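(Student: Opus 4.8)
The plan is to compare the value functions at times $t$ and $u$ by using a near-optimal stopping time for one of them, suitably truncated, as a competitor for the other. The natural direction is to start from an $\varepsilon$-optimal stopping time $\tau_u$ for $w(u,z)$ and transfer it to the problem at time $t$, and vice versa, so that we obtain a two-sided estimate. More precisely, given a stopping time $\tau \in [0, T-u]$ for $w(u,z)$, the same $\tau$ is admissible in the problem $w(t,z)$ since $T - u \le T - t$; conversely, given a stopping time $\tau \in [0, T-t]$ for $w(t,z)$, we truncate it to $\tau \wedge (T-u)$ to make it admissible at time $u$. In both cases we must estimate the difference $\E[\cP_{t,z}(\tau)] - \E[\cP_{u,z}(\tau)]$ (and the analogous one with the truncated time), which decomposes into: (a) the difference of the integrands coming from the mismatch between $_s p_t$ and $_s p_u$ and between $f(t+s), \mu(t+s)$ and $f(u+s), \mu(u+s)$; (b) the difference in the length of the integration/time horizon; and (c) the difference in the terminal terms $\ind_{\{\tau = T-t\}}$ versus $\ind_{\{\tau \wedge (T-u) = T-u\}}$.

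For term (a), Assumption \ref{A1} gives that $f$ and $\mu$ are $C^1$ on $[0,T]$, hence Lipschitz, so $|f(t+s) - f(u+s)| + |\mu(t+s) - \mu(u+s)| \le C(u-t)$; similarly $|_s p_t - {_s p_u}| \le C(u-t)$ since $(t,s)\mapsto {_s p_t} = \exp(-\int_0^s \mu(t+v)\,dv)$ is Lipschitz in $t$. Multiplying by $e^{-cs}\le 1$ and integrating over $s\in[0,T]$, and using $\E[(Z^z_s - 1)^+] \le \E[Z^z_s] = z e^{\alpha s} \le z e^{|\alpha|T}$, this contributes a term of order $(1+z)(u-t)$. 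For term (b), the "extra" slice of integration of length $|(T-t) - (T-u)| = u-t$ (or, when we do not truncate but $\tau$ already lies in $[0,T-u]$, there is simply no extra slice) contributes, by the same bound on the integrand, at most $C(1+z)(u-t)$. The terminal term (c) is where a bit more care is needed: one compares $e^{-c(T-t)}{_{T-t}p_t}(Z^z_{T-t}-1)^+$ with $e^{-c(T-u)}{_{T-u}p_u}(Z^z_{T-u}-1)^+$. Writing $(Z^z_{T-t}-1)^+ - (Z^z_{T-u}-1)^+$ and using $|(a-1)^+ - (b-1)^+| \le |a-b|$ together with the standard $L^1$ bound $\E|Z^z_{T-t} - Z^z_{T-u}| \le C z \sqrt{u-t}$ (from the Lipschitz-in-time estimate for geometric Brownian motion, or directly from $Z^z_s = z\exp((\alpha - \sigma^2/2)s + \sigma W_s)$ and $\E|e^{\sigma(W_{s'}-W_s)} - 1| \le C\sqrt{s'-s}$) produces the $z\sqrt{u-t}$ term; the prefactors $e^{-c(T-t)}, {_{T-t}p_t}$ differ by $O(u-t)$ and, multiplied by $\E[(Z^z_{T-u}-1)^+]\le z e^{|\alpha|T}$, give another $O(z(u-t))$ contribution.

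Putting these together: for one inequality, pick $\tau_u$ $\varepsilon$-optimal for $w(u,z)$, view it as admissible at $t$, and get $w(t,z) \ge \E[\cP_{t,z}(\tau_u)] \ge \E[\cP_{u,z}(\tau_u)] - C[(1+z)(u-t) + z\sqrt{u-t}] \ge w(u,z) - \varepsilon - C[(1+z)(u-t)+z\sqrt{u-t}]$. For the reverse, pick $\tau_t$ $\varepsilon$-optimal for $w(t,z)$, set $\sigma_t := \tau_t \wedge (T-u)$, which is admissible at $u$, and estimate $w(u,z) \ge \E[\cP_{u,z}(\sigma_t)] \ge \E[\cP_{t,z}(\tau_t)] - C[(1+z)(u-t)+z\sqrt{u-t}]$, where now the extra work is controlling $\E[\cP_{t,z}(\tau_t)] - \E[\cP_{u,z}(\sigma_t)]$: the running integrals differ on $\{\sigma_t < \tau_t\} \subseteq \{\tau_t > T-u\}$ by the slice over $[T-u, \tau_t] \subseteq [T-u, T-t]$, bounded as in (b), plus the integrand mismatch (a), plus a comparison of terminal terms — here one uses that when $\tau_t > T-u$ the competitor $\sigma_t$ lands exactly on the terminal date $T-u$ and picks up the terminal payoff, which again is controlled by (c) and the fact that on $\{\tau_t = T-t\}$ the original also has a terminal term. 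Letting $\varepsilon \downarrow 0$ and taking absolute values yields \eqref{eq:holder} with a suitable $C$. The main obstacle, and the part requiring the most attention, is the bookkeeping of the terminal indicator terms under truncation — making sure that the discontinuity of the reward at maturity (the $\ind_{\{\tau = T-t\}}$ factor) does not produce an uncontrolled jump when one time horizon is shorter than the other; the key point that makes it work is that the terminal payoff $(Z^z_{\cdot}-1)^+$ is itself $L^1$-continuous in time and bounded in $L^1$ by $Cz$, so the discontinuity "in the indicator" is tamed by the continuity of what it multiplies.
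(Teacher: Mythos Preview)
Your second direction (truncating a near-optimal $\tau_t$ for $w(t,z)$ to $\sigma_t=\tau_t\wedge(T-u)$) is fine and matches the paper. The first direction, however, has a genuine gap. When you take $\tau_u$ $\varepsilon$-optimal for $w(u,z)$ and ``view it as admissible at $t$'', on the event $\{\tau_u=T-u\}$ the reward $\cP_{u,z}(\tau_u)$ contains the terminal term $e^{-c(T-u)}{_{T-u}p_u}(Z^z_{T-u}-1)^+$, whereas $\cP_{t,z}(\tau_u)$ carries \emph{no} terminal term because $\tau_u\le T-u<T-t$ forces $\ind_{\{\tau_u=T-t\}}=0$. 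Thus
\[
\E\big[\cP_{u,z}(\tau_u)-\cP_{t,z}(\tau_u)\big]
\ge \E\big[\ind_{\{\tau_u=T-u\}}e^{-c(T-u)}{_{T-u}p_u}(Z^z_{T-u}-1)^+\big]-C(1+z)(u-t),
\]
and the first term on the right is of order $z$, not $z\sqrt{u-t}$. Your step (c) compares two terminal terms, but here there is only one; the ``$L^1$-continuity'' argument does not help because the $t$-terminal payoff is simply absent. You also explicitly rule out an extra running slice (``there is simply no extra slice''), so nothing compensates.

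The paper's fix is to map the terminal event to the terminal event: from $\tau\in[0,T-u]$ define $\sigma^\tau:=\tau\ind_{\{\tau<T-u\}}+(T-t)\ind_{\{\tau=T-u\}}$, which is admissible for $w(t,z)$ and satisfies $\{\sigma^\tau=T-t\}=\{\tau=T-u\}$. Now both rewards carry a terminal term on the same event, and the comparison you describe in (c) becomes legitimate; the extra running integral over $[T-u,T-t]$ that appears on $\{\tau=T-u\}$ is a genuine slice of length $u-t$ and is absorbed into term (b). With this single modification your outline goes through.
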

\begin{proof}
Fix $z\in(0,\infty)$ and $0\le t<u\le T$. For any stopping time $\tau\in[0,T-u]$ admissible for $w(u,z)$ we define $\sigma^\tau:=\tau\ind_{\{\tau<T-u \}}+(T-t)\ind_{\{\tau=T-u \}}$. Then $\sigma^\tau\in[0,T-t]$ is a stopping time admissible for $w(t,z)$ and $w(t,z)\ge \E[\cP_{t,z}(\sigma^\tau)]$. It follows that, for any $\tau\in[0,T-u]$ 
\begin{align*}
&\E[\cP_{u,z}(\tau)]-w(t,z)\\
&\leq\E\Big[\int_0^{\tau}\!\! e^{-cs}\Big(\big( {_s p_{u}}f(u\!+\!s)\!-\! {_s p_{t}} f(t\!+\!s)\big)\!+\!\big( {_s p_{u}}\mu(u\!+\!s)\!-\! {_s p_{t}}\mu(t\!+\!s)\big)\big(Z^z_s\!-\!1\big)^+\Big)\ud s\Big]\\
&\quad-\E\Big[\ind_{\{\tau=T-u \}} \int_{T-u}^{T-t} e^{-cs} {_s p_{t}}\Big(f(t+s)+\mu(t+s)(Z^z_s-1)^+ \Big)\ud s \Big]\\
&\quad+\E\Big[\ind_{\{\tau=T-u\}}\Big(e^{-c(T-u)} {_{T-u}p_{u}}(Z^z_{T-u}-1)^+ - e^{-c(T-t)}{_{T-t}p_{t}}(Z^z_{T-t}-1)^+\Big)\Big].
\end{align*}	
Since the functions $\mu(s)$, $f(s)$ and $_s p_t$ are continuously differentiable and bounded on $[0,T]$ (Assumption \ref{A1}), we then have
\begin{align*}
&\E[\cP_{u,z}(\tau)]-w(t,z)\\
&\le c_1(u-t)\Big(1+\int_0^{T-u}\E\big[Z^z_s\big]\ud s+\E\big[Z^z_{T-u}\big]\Big)+\E\big[\big|Z^z_{T-u}-Z^z_{T-t}\big|\big],
\end{align*}
for a suitable constant $c_1>0$, that may change form line to line below, but it is independent of $u$, $t$ and $z$. By arbitrariness of $\tau$ and recalling $\E[Z^z_s]=ze^{\alpha s}$ we deduce
\begin{align}\label{eq:wt0}
\begin{aligned}
w(u,z)-w(t,z)
&\le c_1(u-t)\big(1+z\big)+ z\,\E\big[\big|Z^1_{T-u}-Z^1_{T-t}\big|^2\big]^\frac12\\
&\le c_1(u-t)\big(1+z\big)+ c_1 z\,\sqrt{u-t}.
\end{aligned}
\end{align} 
where we use H\"older inequality for the second term in the first inequality and the final expression follows by standard estimates for SDEs.

For the reverse inequality, we notice that for any stopping time $\tau\in[0,T-t]$, admissible for $w(t,z)$, the stopping time $\tau^u:=\tau\wedge (T-u)$ is admissible for $w(u,z)$ and $w(u,z)\ge \E[\cP_{u,z}(\tau^u)]$. Then, for any $\tau\in[0,T-t]$,
\begin{align*}
&\E\big[\cP_{t,z}(\tau)\big]-w(u,z)\\
&\le \E\Big[\int_0^{\tau^u}\!\!\! e^{-cs}\Big(\big( {_s p_{t}}f(t\!+\!s)\!-\! {_s p_{u}} f(u\!+\!s)\big)\!+\!\big({_s p_{t}}\mu(t\!+\!s)\!-\!{_s p_{u}}\mu(u\!+\!s)\big)(Z^z_s\!-\!1)^+\Big)\ud s\Big]\\
&\quad+\E\Big[\int_{\tau^u}^{\tau} e^{-cs}{_s p_{t}}\big(f(t+s)+\mu(t+s)(Z^z_s-1)^+ \big)\ud s\Big]\\
&\quad+\E\Big[\ind_{\{\tau=T-t \}}e^{-c(T-t)} {_{T-t}p_{t}}(Z^z_{T-t}-1)^+-\ind_{\{\tau\geq T-u \}}e^{-c(T-u)} {_{T-u}p_{u}(Z^z_{T-u}-1)^+} \Big].
\end{align*}
Using again the smoothness of $f(s)$, $\mu(s)$ and $_s p_t$, and noticing that $0\le \tau-\tau^u\le (u-t)$ we obtain
\begin{align}\label{eq:wt2}
\begin{aligned}
&\E\big[\cP_{t,z}(\tau)\big]-w(u,z)\\
&\le c_2(u-t)\Big(1+\int_0^{T-u}\E\big[Z^z_s\big]\ud s\Big)+c_2\int_{T-u}^{T-t}\E\big[Z^z_s\big]\ud s\\
&\quad+\E\Big[\ind_{\{\tau=T-t \}}e^{-c(T-t)} {_{T-t}p_{t}}(Z^z_{T-t}-1)^+-\ind_{\{\tau\geq T-u \}}e^{-c(T-u)} {_{T-u}p_{u}(Z^z_{T-u}-1)^+} \Big],
\end{aligned}
\end{align}
where $c_2>0$ is a constant that may vary from line to line below but it is independent of $u$, $t$ and $z$.
Since $\{\tau=T-t \}\subseteq\{\tau\geq T-u \}$ then $\ind_{\{\tau=T-t \}}\leq \ind_{\{\tau\geq T-u \}}$ and
\begin{align}\label{eq:wt3}
\begin{aligned}
&\E\Big[\ind_{\{\tau=T-t \}}e^{-c(T-t)} {_{T-t}p_{t}}(Z^z_{T-t}-1)^+-\ind_{\{\tau\geq T-u \}}e^{-c(T-u)} {_{T-u}p_{u}(Z^z_{T-u}-1)^+} \Big]\\
&\le \E\Big[\ind_{\{\tau\ge T-u \}}e^{-c(T-u)} \Big({_{T-t}p_{t}}(Z^z_{T-t}-1)^+-{_{T-u}p_{u}(Z^z_{T-u}-1)^+}\Big) \Big]\\
&\le c_2(u-t)z\E\big[Z^1_{T-u}\big]+z\E\big[\big|Z^1_{T-u}-Z^1_{T-t}\big|\big].
\end{aligned}
\end{align}
Combining \eqref{eq:wt2} and \eqref{eq:wt3}, recalling $\E[Z^1_s]=e^{\alpha s}$ and arbitrariness of $\tau$, we obtain
\begin{align*}
w(t,z)-w(u,z)&\le c_2(u-t)\big(1+z\big)+z\E\big[\big|Z^1_{T-u}-Z^1_{T-t}\big|\big]\\
&\le c_2(u-t)\big(1+z\big)+c_2 z\sqrt{u-t},
\end{align*}
with the same SDE estimate as in \eqref{eq:wt0}. Now \eqref{eq:holder} follows from the bound above and \eqref{eq:wt0}.
\end{proof}

As usual in optimal stopping theory, we let
\begin{align}\label{setC0}
\mathcal{C}=& \big\{ (t,z)\in [0,T)\times (0,\infty) : w(t,z)>0 \big\} 
\end{align}
and
\begin{align}\label{setS0}
\mathcal{D}=& \big\{ (t,z)\in [0,T)\times (0,\infty) : w(t,z)=0 \big\}\cup\big(\{T \}\times(0,\infty)\big) 
\end{align}
be respectively the so-called continuation and stopping regions. Thanks to Lemmas \ref{lem:w(z)Lip} and \ref{lem:w(t)Cont} and setting
\begin{align}\label{H}
H(t,z):=\mu(t)(z-1)^{+}+f(t),
\end{align}
it is well-known (cf.,\ e.g., \cite[Ch.\ III]{peskir2006optimal}) that $w\in C^{1,2}(\cC)$ and it solves the free-boundary problem 
\begin{align}\label{eq:fbp}
\begin{aligned}
w_t(t,z)+\big(\cL w\big)(t,z)-\big(c+\mu(t)\big)w (t,z)+H(t,z)&=0, \qquad\quad\qquad (t,z)\in\cC,\\
w(t,z)&=0, \qquad \quad\qquad(t,z)\in\partial\cC,\text{ with } t<T,\\
w(T,z)&=(z-1)^+,\qquad z\in(0,\infty),
\end{aligned}
\end{align}
where $\cL$ is the infinitesimal generator of the process $Z$, defined as
\begin{align}\label{eq:L}
\big(\cL \varphi\big)(t,z) = \alpha z \varphi_z(t,z)+\tfrac{\sigma^2}{2} z^2 \varphi_{zz}(t,z),
\end{align}
for sufficiently smooth functions $\varphi:[0,T]\times(0,\infty)\to\R$.
Here $\varphi_t$, $\varphi_x$ and $\varphi_{xx}$ represent the partial derivatives of the function $\varphi$ with respect to time and space.

From the continuity of $w$ and classical optimal stopping theory (cf.\ \cite[Ch.\ 3, Sec.\ 4]{shiryaev2007optimal}), we also deduce that the stopping time 
\begin{align}\label{eq:tau*1}
\tau^*_{t,z}=\inf\{s\ge 0:(t+s,Z^z_s)\in\cD\}
\end{align}
is the smallest optimal stopping time for our problem \eqref{valuew}. Moreover, the process $(U^{t,z}_{s\wedge\tau})_{s\in[0,T-t]}$ is a continuous super-martingale for any stopping time $\tau$ and a martingale for $\tau=\tau^*_{t,z}$, where
\begin{align}\label{eq:UD}
\begin{aligned}
&U^{t,z}_{s}:=D^t(s)w\big(t+s,Z^z_{s}\big)+\int_0^{s}D^t(u)H(t+u,Z^z_u)\ud u\\
&\text{ with }D^t(s):=\exp\Big(-cs-\int_0^s\mu(t+\theta)\ud\theta\Big).
\end{aligned}
\end{align}
It is therefore important to determine the form of $\cC$ and $\cD$. 

Since $z \mapsto w(t,z)$ is non-decreasing (Lemma \ref{lem:wincrconv}), then $(t,z_0)\in \mathcal{C}$ implies $(t,z)\in \mathcal{C}$, for all $z>z_0$. Hence, defining the function $b:[0,T) \to  [0,\infty]$ by
$b(t):=\inf\{z>0: w(t,z)>0 \}$, we obtain 
\begin{align}\label{setC1}
\begin{aligned}
\cC&= \big\{ (t,z)\in [0,T)\times (0,\infty) : z>b(t) \big\}\\
\cD&= \big\{ (t,z)\in [0,T)\times (0,\infty) : z\le b(t) \big\}\cup\big(\{T\}\times(0,\infty)\big).
\end{aligned}
\end{align}
Since $\lim_{z \uparrow \infty}w(t,z)=\infty$ (Lemma \ref{lem:wincrconv}), then $b(t)<\infty$ for every $t\in[0,T)$.

First of all, we recall that \eqref{eq:wlb} implies 
\begin{align}\label{eq:sC}
[0,t^*)\times(0,\infty)\subseteq\cC\quad (\text{i.e., }b(t)=0\text{ for every }t\in[0,t^*)).
\end{align}
Next, we define the set
$\mathcal{R}:=\{(t,z)\in[0,T)\times (0,\infty): H(t,z)>0\}$. For $(t,z)\in\mathcal{R}$, taking $\tau_\mathcal{R}=\inf\{s\ge 0: (t\!+\!s,Z^z_s)\notin \mathcal{R}\}\wedge(T\!-\!t)$, we obtain $w(t,z)\ge \E[\cP_{t,z}(\tau_\mathcal{R})]>0$. Hence, $\mathcal R\subset\mathcal{C}$. 

Our main result about the optimal stopping boundary $t\mapsto b(t)$ is stated in the next theorem.
\begin{theorem}\label{thm:bCont}
The optimal boundary $t\mapsto b(t)$ is continuous at any $t\in[0,T]$ for $t\neq t^*$ with $b(T):=\lim_{t\uparrow T}b(t)=1$.
\end{theorem}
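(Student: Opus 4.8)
The plan is to prove continuity of $b$ by first transferring the problem to a parametrisation where the boundary is \emph{monotone}, proving continuity there with standard tools, and then pulling the result back. Concretely, on $[t^*,T]$ the boundary $b$ is not monotone, but the paper's strategy (outlined in the introduction) is to obtain a lower bound on the growth rate of $b$, say $\dot b(t)\ge -\kappa(t)$ in an appropriate weak sense, with $\kappa$ an explicit locally integrable function. This is the content of the key Lemma~\ref{lem:b'_delta}, whose proof uses the two technical alternatives in Assumption~\ref{A2}(i)--(ii). Granting such a bound, one defines a strictly increasing $C^1$ change of the space variable — equivalently an increasing reparametrisation $t\mapsto\beta(t)$ with $\beta(t):=b(t)\exp(\int_0^t\kappa(s)\,\mathrm ds)$ (up to normalisation) — which is a bijective, bi-continuous transformation of $b$ and which, by a direct computation on the SDE \eqref{eq:Z} and the free-boundary problem \eqref{eq:fbp}, is the optimal stopping boundary of an auxiliary optimal stopping problem whose value function inherits the regularity established in Lemmas~\ref{lem:w(z)Lip} and \ref{lem:w(t)Cont}.

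Once $\beta$ is known to be monotone, I would invoke the now-classical argument for monotone optimal stopping boundaries (as in \cite{de2015note}, \cite{peskir2019continuity}, \cite{cdeap2023}): right-continuity of a monotone boundary is automatic from the closedness of the stopping set $\cD$ and the fact that the stopping set lies below the boundary; left-continuity is the substantive point and is proved by contradiction. Assume $\beta$ has a downward jump at some $t_0\in(t^*,T)$, i.e.\ $\beta(t_0-)>\beta(t_0)$. Then a whole rectangle $(t_0-\delta,t_0)\times(\beta(t_0),\beta(t_0-))$ sits in the continuation region $\cC$ while the segment $\{t_0\}\times(\beta(t_0),\beta(t_0-))$ is in $\cD$, so $w\equiv 0$ there. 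Using $w\in C^{1,2}(\cC)$, the PDE in \eqref{eq:fbp}, and the smooth-fit / super-martingale structure of $U^{t,z}$ from \eqref{eq:UD}, one derives that $w_t$ and $w_{zz}$ must blow up or that $H(t_0,z)\le 0$ on that segment; but by the definition of $t^*$ in \eqref{eq:t*} and the sign of $f$, for $t_0>t^*$ one can choose $z$ close to the jump so that $H(t_0,z)=\mu(t_0)(z-1)^++f(t_0)$ has a controlled sign, producing a contradiction with the local behaviour of $w$ (a standard argument integrating the PDE over a shrinking box, as in \cite[Ch.~III]{peskir2006optimal}). This yields continuity of $\beta$ on $(t^*,T)$, and then continuity of $b$ on $(t^*,T)$ follows because $b=\beta\cdot\exp(-\int_0^\cdot\kappa)$ is a continuous function of a continuous function.

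It remains to handle the two endpoints and the point $t^*$. At the maturity $T$, I would prove $b(T-):=\lim_{t\uparrow T}b(t)=1$ by squeezing: since $\cR\subset\cC$ and $H(t,z)=\mu(t)(z-1)^++f(t)$, for $z>1$ fixed and $t$ close enough to $T$ we have $f(t)$ close to $f(T)$ and the term $\mu(t)(z-1)^+>0$ dominates if $z>1$ is large enough — but more sharply, for any $z>1$ one gets $(t,z)\in\cC$ for $t$ near $T$ by a short-time estimate on $\E[\cP_{t,z}(\tau)]$ with $\tau$ a small hitting time, giving $\limsup_{t\uparrow T}b(t)\le 1$; conversely for $z<1$, continuity of $w$ and $w(T,z)=(z-1)^+=0$ together with $f(T)<0$ (when $t^*<T$; the $k(T)=0$ forces $f(T)=-c-\dot k(T)\le$ something, to be checked) force $(t,z)\in\cD$ for $t$ near $T$, giving $\liminf_{t\uparrow T}b(t)\ge 1$. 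For $t^*$ itself, \eqref{eq:sC} gives $b\equiv 0$ on $[0,t^*)$, so $b$ is trivially continuous there and the only possible discontinuity is the jump at $t^*$ noted in the footnote to Theorem~\ref{thm:main} (whence the exclusion $t\neq t^*$ in the statement). \textbf{The main obstacle} I anticipate is establishing the growth-rate lower bound on $b$ (Lemma~\ref{lem:b'_delta}) that powers the whole reparametrisation: this is exactly where the lack of monotonicity bites, where Assumption~\ref{A2}(i)--(ii) must be used, and where neither the classical monotone-boundary machinery nor the Lipschitz-boundary methods of \cite{de2019free}, \cite{de2019lip} apply directly — the estimate must come from a careful analysis of $w_t$ via differentiating the free-boundary problem and exploiting the sign structure of $f$ and $\dot f - \mu f$.
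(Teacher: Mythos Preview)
Your overall architecture matches the paper's: transform $b$ into a monotone boundary $\beta$, prove continuity of $\beta$ by the standard monotone-boundary argument, and pull back. You also correctly identify Lemma~\ref{lem:b'_delta} as the crux. There are, however, three points where your proposal is off or incomplete.

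\textbf{(1) The growth bound is on $\dot b/b$, not on $\dot b$.} The paper obtains (via the $\delta$-level sets $b_\delta$ and the implicit-function formula $\dot b_\delta=-w_t/w_z$) a bound of the form $\dot b_\delta(t)/b_\delta(t)\ge -\Lambda$ for a \emph{constant} $\Lambda$, then passes $\delta\downarrow 0$. This is what makes $\beta(t):=e^{\Lambda t}b(t)$ non-decreasing. Your bound $\dot b\ge -\kappa(t)$ would give an additive shift $b(t)+\int_0^t\kappa$ monotone, not the multiplicative one you wrote; as stated, your $\beta(t)=b(t)\exp(\int_0^t\kappa)$ is not forced to be monotone by $\dot b\ge -\kappa$.

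\textbf{(2) Jump direction.} Since $\beta$ is non-decreasing, the only possible failure of left-continuity is $\beta(t_0-)<\beta(t_0)$, not $\beta(t_0-)>\beta(t_0)$. The rectangle in $\widehat\cC$ is then $[0,t_0)\times[y_1,y_2]$ with $[y_1,y_2]\subset(\beta(t_0-),\beta(t_0))$, and the segment $\{t_0\}\times[y_1,y_2]$ lies in $\widehat\cD$. The contradiction (integrate the PDE against a test function and let $t\uparrow t_0$) then forces $\hat H(t_0,\cdot)\equiv 0$ on an interval, which is impossible. Your geometric picture is essentially right once the inequality is flipped, but as written it describes a situation a non-decreasing function cannot have.

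\textbf{(3) $C^1$ regularity of $w$ is not free.} In the left-continuity step you invoke ``smooth-fit / super-martingale structure'' as if given. In this problem smooth fit is itself a major intermediate result: the paper devotes all of Section~\ref{sec:wSmooth} to proving $w\in C^1([0,T)\times(0,\infty))$, and the key input (Lemma~\ref{lem:tau_nTo0}, that $\tau^*_{t_n,z_n}\to 0$ along sequences approaching $\partial\cC$) \emph{uses} the monotonicity of $\beta$ to obtain regularity of the stopping set in the sense of diffusions. So the logical order is: growth bound $\Rightarrow$ $\beta$ monotone $\Rightarrow$ regularity/$\tau^*\to 0$ $\Rightarrow$ $w\in C^1$ $\Rightarrow$ left-continuity of $\beta$. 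Your sketch compresses this and risks circularity; be explicit that $C^1$ must be established \emph{after} monotonicity of $\beta$ and \emph{before} the jump argument. The terminal-limit argument you give is fine and is essentially Proposition~\ref{prop:limsupT} combined with monotonicity of $\beta$.
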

Continuity of $b$ on $[0,t^*)$ is obvious by \eqref{eq:sC}. Instead, the proof for the remainder of the theorem is highly non-trivial. In our problem it is not possible to establish monotonicity of the boundary (indeed that fails as illustrated, e.g., in Figure \ref{fig2}), which is a structural assumption for known (probabilistic) results on continuity of the optimal boundary (cf.\ \cite{de2015note} and \cite{peskir2019continuity}). We must instead rely upon a change of coordinates, linked to specific bounds on the increments of $b$ (a sort of one-sided Lipschitz condition). For that we need an in-depth analysis of the properties of the value function $w$, which will be developed in the next sections. The proof of Theorem \ref{thm:bCont} will be provided in Section \ref{sect:bCont}.

Before closing this section we can state an initial property of $b$ near the maturity and an associated corollary for the optimal stopping time.
\begin{proposition}\label{prop:limsupT}
If $t_*\in[0,T)$, then $\limsup_{t\to T}b(t)=1$.
\end{proposition}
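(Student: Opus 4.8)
The plan is to show that near maturity the stopping set stays strictly below level $1$ but also accumulates there; together these give $\limsup_{t\to T}b(t)=1$. For the \emph{lower bound} $\limsup_{t\to T}b(t)\geq 1$, I would argue that any point $(t,z)$ with $z<1$ and $t$ close to $T$ belongs to the continuation set $\cC$, so that $b(t)\geq 1$ whenever it is positive and $t$ is near $T$. Indeed, at such a point the terminal payoff $(Z^z_{T-t}-1)^+$ is typically zero (since $Z^z$ is close to $z<1$), but the running reward $\int_0^{\tau}e^{-cs}\,{}_sp_t\,(f(t+s)+\mu(t+s)(Z^z_s-1)^+)\,\ud s$ can be made nonnegative by stopping at $\tau=T-t$: the key point is that for $s$ in a right-neighbourhood of $t^*$ the function $f$ may be negative, but as $t\to T$ the interval $[t,T]$ shrinks and, since $t_*<T$, the contribution of $f$ over $[t,T]$ is $O(T-t)$ while the positive fluctuations of $(Z^z_s-1)^+$ — though unlikely — integrate to a quantity that I must show dominates. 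More carefully, I would instead use the explicit value $w(t,0+)=0$ for $t\in[t^*,T]$ from Lemma \ref{lem:wincrconv} together with a local analysis: pick any $z<1$; by continuity of $w$ (Lemma \ref{lem:w(t)Cont}, Lemma \ref{lem:w(z)Lip}) and $w(T,z)=(z-1)^+=0$, together with the fact that $H(t,z)=\mu(t)(z-1)^++f(t)=f(t)$ can be made to have a sign we control, one checks that $(t,z)\notin\cC$ forces $t$ bounded away from $T$ — but this needs the region $\cR=\{H>0\}$ which, for $z<1$, equals $\{f(t)>0\}=\{t<t^*\}$, hence gives nothing near $T$. So the lower bound genuinely requires exploiting the terminal condition: I would show that for $z<1$ fixed, $w(t,z)\le C(T-t)$ for some constant, using that stopping before $T-t$ gives at most $\int_0^\tau e^{-cs}(f^+(t+s)+\mu(t+s)\E[(Z^z_s-1)^+])\ud s = O(T-t)$ (because $\E[(Z^z_s-1)^+]\to (z-1)^+=0$ as $s\to 0$, uniformly), while stopping at $T-t$ adds $e^{-c(T-t)}{}_{T-t}p_t\,\E[(Z^z_{T-t}-1)^+]=O(\sqrt{T-t})$; hence $w(t,z)\to 0$, but that alone does not show $w(t,z)=0$. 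The resolution: combine this with convexity and monotonicity in $z$ — if $b(t_n)\to L<1$ along some sequence $t_n\uparrow T$, then points $(t_n,z)$ with $L<z<1$ lie in $\cC$, and I would derive a contradiction by showing $w(t_n,z)$ must then be bounded below by a term of order $(z-b(t_n))$ that does not vanish, using the super-harmonic characterisation and the local behaviour of $w$ near the boundary.

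For the \emph{upper bound} $\limsup_{t\to T}b(t)\leq 1$, I would show that for any $z>1$, the point $(t,z)\in\cR\subseteq\cC$ for $t$ close enough to $T$: indeed $H(t,z)=\mu(t)(z-1)+f(t)$, and since $f(t)\to f(T)$ which, under Assumption \ref{A2} with $t_*<T$, is $<0$, we need $\mu(t)(z-1)>-f(t)$, i.e. $z>1-f(t)/\mu(t)$. This shows $(t,z)\in\cR\subseteq\cC$ only for $z$ above $1-f(t)/\mu(t)>1$, which goes the wrong way. So instead I would argue directly with the value function: for $z>1$ fixed, stopping at $\tau=T-t$ yields $w(t,z)\ge e^{-c(T-t)}{}_{T-t}p_t\,\E[(Z^z_{T-t}-1)^+] + \int_0^{T-t}e^{-cs}{}_sp_t(f(t+s)+\mu(t+s)\E[(Z^z_s-1)^+])\ud s$; the first term converges to $z-1>0$ as $t\to T$, while the integral is $O(T-t)\to 0$; hence $w(t,z)\ge (z-1)/2>0$ for $t$ sufficiently close to $T$, so $(t,z)\in\cC$ and thus $b(t)\le z$. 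Letting $z\downarrow 1$ gives $\limsup_{t\to T}b(t)\le 1$.

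Combining the two bounds yields $\limsup_{t\to T}b(t)=1$.

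\medskip

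The main obstacle, as flagged above, is the lower bound $\limsup_{t\to T}b(t)\ge 1$, equivalently showing that for $z<1$ the point $(t,z)$ is in the \emph{stopping} set $\cD$ for all $t$ in a left-neighbourhood of $T$ — or at least along a sequence $t_n\uparrow T$. The difficulty is that we have only shown $w(t,z)\to 0$ for $z<1$, not $w(t,z)=0$, and the standard trick (the reward-density $H$ being negative on a neighbourhood, forcing immediate stopping) does not directly apply because $H(t,z)=f(t)$ for $z<1$ and $f(t)$ could be close to $0$ or even the sign-change at $t^*$ is far from $T$. I expect to close this by a comparison/verification argument: construct a supermartingale dominating the reward that certifies $w(t,z)=0$ for $t$ near $T$ when $z<1$, using Itô's formula applied to a suitable function built from $(z-1)^+$ plus a correction of order $(T-t)$, exploiting that $\E[(Z^z_s-1)^+]$ is exponentially small for small $s$ when $z<1$ while $f$ integrates to something of order $T-t$ with a sign that, after careful bookkeeping under Assumption \ref{A2}, does not spoil the supermartingale inequality. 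Alternatively — and this may be cleaner — I would invoke the law-of-iterated-logarithm / strong-Markov local behaviour of $Z$ near $s=0$ to show that starting from $z<1$, the probability of $(Z^z_s-1)^+$ contributing enough to overcome the (small but possibly negative, possibly positive) running term $f$ over $[t,T]$ is too small, making stopping immediately optimal; here the precise interplay between the sign of $f$ near $T$ (negative, by $t_*<T$ and Assumption \ref{A2}) and $c>0$ in the discount is what makes $w(t,z)=0$.
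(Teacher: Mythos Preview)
Your upper bound argument ($\limsup_{t\to T}b(t)\le 1$) is correct and essentially matches the paper's: for $z>1$, the terminal payoff $(z-1)^+>0$ combined with continuity of $w$ forces $w(t,z)>0$ for $t$ near $T$, hence $b(t)\le z$.

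The lower bound has a genuine gap. You correctly identify that the difficulty is showing $w(t,z)=0$ (equivalently $(t,z)\in\cD$) for $z<1$ and $t$ near $T$, and that the strict negativity of $f$ on $(t^*,T]$ should be decisive. But your suggestions (build a supermartingale from $(z-1)^+$ plus a correction; invoke LIL-type estimates) remain heuristic, and your opening attempt contains a slip: placing $(t,z)$ with $z<1$ in $\cC$ would give $b(t)<z$, not $b(t)\ge z$.

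The paper's argument is by contradiction and uses the \emph{martingale} property of the value function in $\cC$ rather than a supermartingale comparison. Assume $\limsup_{t\to T}b(t)<1$; then $b(t)<b_0<1$ for all $t$ in a left-neighbourhood of $T$. Fix a strip $(z_1,z_2)\subset(b_0,1)$, pick $z\in(z_1,z_2)$, and let $\sigma_z$ be the first exit time of $Z^z$ from $(z_1,z_2)$. Since $z_1>b_0>b(t)$, the point $(t,z)$ is in $\cC$ and in fact $\tau^*_{t,z}\ge\sigma_z\wedge(T-t)$. The martingale property of \eqref{eq:UD} then gives
\[
w(t,z)=\E\Big[\int_0^{\sigma_z\wedge(T-t)}D^t(s)H(t+s,Z^z_s)\,\ud s + D^t\big(\sigma_z\wedge(T-t)\big)\,w\big(t+\sigma_z\wedge(T-t),Z^z_{\sigma_z\wedge(T-t)}\big)\Big].
\]
While $Z^z$ stays in $(z_1,z_2)\subset(0,1)$ we have $H(t+s,Z^z_s)=f(t+s)\le -d_0<0$ (this is exactly where $t^*<T$ enters). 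On $\{\sigma_z>T-t\}$ the terminal term is $w(T,Z^z_{T-t})=(Z^z_{T-t}-1)^+=0$ since $Z^z_{T-t}<z_2<1$; on $\{\sigma_z\le T-t\}$ one bounds $w$ by a constant. The key quantitative step is the exit-time estimate $\P(\sigma_z\le T-t)\le C(T-t)^{p/2}$ for any $p\ge 2$ (Markov's inequality plus standard SDE moment bounds). Combining,
\[
w(t,z)\le -d_1\,\E[\sigma_z\wedge(T-t)] + d_1\,\P(\sigma_z\le T-t)\le -d_3(T-t)\big(\P(\sigma_z>T-t)-(T-t)^{p/2-1}\big),
\]
which is strictly negative for $t$ close enough to $T$ (take $p>2$), contradicting $w\ge 0$.

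What you were missing is the reduction via the exit time $\sigma_z$ and the scale separation it delivers: the accumulated negative running reward is of order $T-t$, while the probability of escaping the strip before maturity is of higher order $(T-t)^{p/2}$. Your supermartingale idea could in principle be pushed through, but the paper's route via the martingale representation in $\cC$ is cleaner and requires no auxiliary construction.
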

\begin{proof}
It is easy to prove that $\limsup_{t\to T}b(t)\leq 1$. Indeed, for $z>1$, Lemmas \ref{lem:w(z)Lip} and \ref{lem:w(t)Cont} and $w(T,z)=(z-1)^+>0$ imply that there exists $\eps\in(0,T)$ such that $w(t,z)>0$ for every $t\in[T-\eps,T)$. Therefore, $[T-\eps,T)\times(1,\infty)\in\cC$, which implies $\limsup_{t\to T}b(t)\leq 1$.

For the reverse inequality we need a bit more work.
Arguing by contradiction, let us assume that $\limsup_{t\to T}b(t)< 1$. Then, there exists $b_0<1$ and $\eps\in(0,T)$ such that $b(t)<b_0$ for every $t\in(T-\eps,T)$. Let $z_1<z_2$ be such that $(z_1,z_2)\subset(b_0,1)$ and fix $(t,z)\in(T-\eps,T)\times(z_1,z_2)$. With no loss of generality we also assume $t> t^*$. Define
$\sigma_{z}:=\inf\{t\geq 0: Z^z_t\notin(z_1,z_2) \}$ 	and notice that, since $b(t)<b_0<z_1$ for $t\in(T-\eps,T)$ then it must be $\tau^*_{t,z}\geq \sigma_z\wedge(T-t)$, $\P$-a.s. By the martingale property of the value function (recall \eqref{eq:UD}), we have 
\[
w(t,z)=\E\Big[\int_0^{\sigma_z\wedge(T-t)}D^t(s)H(t\!+\!s,Z^z_s)\ud s\!+\!D^t(\sigma_z\wedge(T\!-\!t))w\big(t\!+\!\sigma_z\wedge(T\!-\!t), Z^z_{\sigma_z\wedge(T-t)}\big)\Big].
\]
For every $s\leq\sigma_z$ we have $Z^z_s\le z_2<1$ and therefore
$H(t+s,Z^z_s)\leq -d_0$ for some $d_0>0$. Then, the bound $1\ge D^t(s)\ge D^t(T-t)\ge \min_{t\in[0,T]}D^t(T-t)>0$  yields
\[
w(t,z)\leq -d_1\E\big[\sigma_z\wedge(T-t)\big]+\E\big[w\big(t+\sigma_z\wedge(T-t), Z^z_{\sigma_z\wedge(T-t)}\big)\big],
\]
for some constant $d_1>0$. The constant $d_1$ may vary from line to line, it may depend on $z_1$ and $z_2$ but it is independent of $t$ and $z$.
Since on the event $\{\sigma_z>T-t \}$ we have
\[
w\big(t+\sigma_z\wedge(T-t), Z^z_{\sigma_z\wedge(T-t)}\big)=w(T,Z^z_{T-t})=(Z_{T-t}^z-1)^+=0,
\]
and $w(t,\,\cdot\,)$ is non-decreasing (Lemma \ref{lem:wincrconv}), then
\begin{align}\label{eq:w<}
\begin{aligned}
w(t,z)&\leq-d_1\E\big[\sigma_z\wedge(T-t)\big]+\E\big[\ind_{\{\sigma_z\leq T-t\}}w(t+\sigma_z,z_2)\big]\\
&\leq-d_1\big(\E\big[\sigma_z\wedge(T-t)\big]-\P\big(\sigma_z\leq T-t\big)\big),
\end{aligned}
\end{align}
where for the second inequality we used that $w$ is bounded on compacts and we have changed $d_1$ as needed.

For $\theta_z:=\min\{z_2-z,z-z_1 \}>0$, Markov's inequality yields for any $p\in[2,\infty)$
\[
\P(\sigma_z\leq T-t)= \P\Big(\sup_{0\leq s\leq T-t} |Z^z_s-z|\geq\theta_z\Big)\leq \frac{z^p}{\theta^p_z}\E\Big[\sup_{0\leq s\leq T-t} |Z^1_s-1|^p\Big]\le \frac{z^p}{\theta^p_z}d_2(T-t)^{\frac{p}{2}},
\]
where $d_2>0$ is independent of $(t,z)$ and the final inequality is by standard estimates for SDEs (see, e.g., \cite[Theorem 9.1]{baldi2017stochastic}).
Then, relabelling all constants as $d_3=d_3(z)>0$ and continuing from \eqref{eq:w<}, 
\begin{align*}
w(t,z)&\leq -d_3\big(\E\big[\sigma_z\ind_{\{\sigma_z\le T-t \}}\big]+(T-t)\P(\sigma_z> T-t)-(T-t)^{\frac{p}{2}}\big)\\
&\leq -d_3 (T-t)\big(\P(\sigma_z> T-t)-(T-t)^{\frac{p}{2}-1} \big).
\end{align*}
Since $\P(\sigma_z> T-t)\to 1$ as $t\to T$ and we can choose $p>2$, then there exists $t\in(T-\eps,T)$ such that $w(t,z)<0$, which is a contradiction. This shows that $\limsup_{t\to T}b(t)\ge 1$ which, combined with the first part of the proof yields the claim in the proposition.
\end{proof}

\begin{corollary}\label{lem:Z>1}
For every $(t,z)\in[0,T]\times[0,\infty)$, we have $\{\tau^*_{t,z}=T-t \}\subseteq\{Z^z_{T-t}\geq 1\}$.
\end{corollary}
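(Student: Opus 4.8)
This corollary is a pathwise consequence of Proposition \ref{prop:limsupT}. The plan is: on the event $\{\tau^*_{t,z}=T-t\}$ the trajectory $s\mapsto(t+s,Z^z_s)$ remains in the continuation region $\cC$ for every $s<T-t$, hence stays strictly above the boundary, i.e.\ $Z^z_s>b(t+s)$; since Proposition \ref{prop:limsupT} forces $b$ to revisit the level $1$ along a sequence of times approaching the maturity, the continuity of the paths of $Z^z$ then yields $Z^z_{T-t}\ge 1$.

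In detail, I would fix $t\in[0,T)$ and $z>0$ (when $z=0$ the process $Z^z$ is identically zero and never meets $\cD\subseteq[0,T]\times(0,\infty)$, so $\{\tau^*_{t,z}=T-t\}=\varnothing$ and there is nothing to prove), and set $A:=\{\tau^*_{t,z}=T-t\}$. Recalling that, by \eqref{eq:tau*1}, $\tau^*_{t,z}$ is the first entry time of $(t+s,Z^z_s)_{s\ge0}$ into $\cD$, and that by \eqref{setC1} one has $(u,y)\in\cD\cap([0,T)\times(0,\infty))$ precisely when $y\le b(u)$, on $A$ we must have $Z^z_s>b(t+s)$ for every $s\in[0,T-t)$.

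Next, invoke Proposition \ref{prop:limsupT} (legitimate since $t_*\in[0,T)$): $\limsup_{u\uparrow T}b(u)=1$, so there exists $u_n\uparrow T$ with $t<u_n<T$ and $b(u_n)\to 1$. Put $s_n:=u_n-t$, so that $s_n\uparrow T-t$ and $s_n<T-t$ for every $n$; hence, on $A$, $Z^z_{s_n}>b(u_n)$ for all $n$. Letting $n\to\infty$ and using continuity of the sample paths of $Z^z$ (so $Z^z_{s_n}\to Z^z_{T-t}$ $\P$-a.s.), we may pass to the limit in the inequality $Z^z_{s_n}>b(u_n)$ to conclude $Z^z_{T-t}\ge1$ on $A$, which is the asserted inclusion.

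I do not anticipate any real difficulty: apart from path continuity, the only ingredient is Proposition \ref{prop:limsupT}, and the conclusion follows by a single limiting argument. The one point requiring a little care is that the inequalities $Z^z_{s_n}>b(u_n)$ are strict, so that passing to the limit yields only the non-strict bound $Z^z_{T-t}\ge1$ — which is, however, exactly what is claimed.
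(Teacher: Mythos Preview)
Your argument is correct and is essentially identical to the paper's own proof: both pick a sequence $s_n\uparrow T-t$ along which $b(t+s_n)\to 1$ (via Proposition~\ref{prop:limsupT}), use that $Z^z_{s_n}>b(t+s_n)$ on $\{\tau^*_{t,z}=T-t\}$, and pass to the limit by path continuity. Your explicit treatment of the degenerate case $z=0$ is a small extra that the paper omits, and your parenthetical ``legitimate since $t_*\in[0,T)$'' matches the paper's implicit standing assumption for this part of the analysis.
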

\begin{proof}
By Proposition \ref{prop:limsupT}, there exists a sequence $(s_n)_{n\in\bN}$ such that $s_n\to T-t$ as $n\to\infty$ and $\lim_{n\to\infty} b(t+s_n)=1$. Therefore, for every $\eps>0$ there exists $\bar{n}_\eps\in\bN$ such that $b(t+s_n)\geq 1-\eps$, for every $n\geq\bar{n}_\eps$. Now let $\omega\in\{\tau^*_{t,z}=T-t \}$. Then, it must be $Z^z_{s_n}(\omega)>b(t+s_n)\geq 1-\eps$ for every $n\geq\bar{n}_\eps$.
Letting $n\to\infty$ we obtain $Z^z_{T-t}(\omega)\ge 1-\eps$ and, by arbitrariness of $\eps>0$, we obtain $Z_{T-t}(\omega)\geq 1$.
\end{proof}

\section{A monotone transformed boundary}\label{sec:beta}
In this section we construct a function $t\mapsto\beta(t)$ as a bijective, continuous transformation of the optimal boundary $t\mapsto b(t)$. The advantage of working with this transformation is that $\beta$ can be shown to be the \textit{monotone} optimal boundary of an auxiliary optimal stopping problem. That enables the use of some known ideas to establish continuity of $\beta$ and then transfer such continuity to $b$. In order to deliver our agenda, we first need to study the partial derivatives of the value function, $w_t$ and $w_z$.

The next lemma provides a formula for $w_z$.
\begin{lemma}\label{lem:w_z}
For every $(t,z)\in\big([0,T)\times(0,\infty)\big)\setminus\partial\cC$, we have
\begin{align}\label{eq:w_z}
\begin{aligned}
w_z(t,z)=\E\Big[\int_0^{\tau^*_{t,z}}e^{-cs} {_s p_{t}}\mu(t+s)Z^1_s\ind_{\{Z^z_s\geq 1\}}\ud s+\ind_{\{\tau^*_{t,z}=T-t \}} e^{-c(T-t)} {_{T-t}p_{t}}Z^1_{T-t}\Big].
\end{aligned}
\end{align}  
In particular, $w_z\in L^\infty([0,T]\times(0,\infty))$ and $w_z(t,z)>0$ for every $(t,z)\in\cC$.
\end{lemma}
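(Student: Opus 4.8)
The plan is to differentiate the expression for $w$ under the expectation, using the fact that $Z^z_s$ depends smoothly (indeed linearly in the multiplicative sense) on its starting point $z$. First I would fix $(t,z)\notin\partial\cC$ and recall from \eqref{eq:tau*1} that $\tau^*_{t,z}$ is optimal and, by the martingale property of $(U^{t,z}_{s\wedge\tau^*})$ in \eqref{eq:UD}, one has the representation
\[
w(t,z)=\E\Big[\int_0^{\tau^*_{t,z}}\!\!e^{-cs}\,{_s p_t}\,H(t+s,Z^z_s)\,\ud s+\ind_{\{\tau^*_{t,z}=T-t\}}e^{-c(T-t)}\,{_{T-t}p_t}\,(Z^z_{T-t}-1)^+\Big],
\]
where $H$ is as in \eqref{H}. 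The strategy is then: (i) at an interior point of $\cC$ (resp.\ of the interior of $\cD$) the boundary $b$ is locally bounded away from $Z^z_s$ for $s$ small, so $\tau^*_{t,z}$ is locally insensitive to small perturbations of $z$; (ii) use the flow property $Z^{z'}_s=(z'/z)Z^z_s$ to write difference quotients and pass to the limit by dominated convergence, exploiting that $H(t+s,\cdot)$ is Lipschitz in the space variable with $\partial_z H(t+s,\zeta)=\mu(t+s)\ind_{\{\zeta>1\}}$ for a.e.\ $\zeta$.

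Concretely, for $(t,z)\in\cC$ I would argue as in the standard optimal-stopping literature (e.g.\ \cite{peskir2006optimal}, \cite{de2015note}): take $h$ small and compare $w(t,z+h)$ with the payoff obtained by using the (sub-optimal but admissible) stopping time $\tau^*_{t,z}$ started from $z+h$, and symmetrically $w(t,z)$ against the payoff from $\tau^*_{t,z+h}$ started from $z$. Both comparisons sandwich the increment $w(t,z+h)-w(t,z)$ between
\[
\E\Big[\int_0^{\tau}e^{-cs}{_s p_t}\big(H(t+s,Z^{z+h}_s)-H(t+s,Z^{z}_s)\big)\ud s+\ind_{\{\tau=T-t\}}e^{-c(T-t)}{_{T-t}p_t}\big((Z^{z+h}_{T-t}-1)^+-(Z^{z}_{T-t}-1)^+\big)\Big]
\]
evaluated at $\tau=\tau^*_{t,z}$ and at $\tau=\tau^*_{t,z+h}$. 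Dividing by $h$, using $Z^{z+h}_s-Z^z_s=hZ^1_s$, the Lipschitz bound $|H(t+s,Z^{z+h}_s)-H(t+s,Z^z_s)|\le \mu_M\,h\,Z^1_s$ together with $\E[\sup_{s\le T}Z^1_s]<\infty$ to justify dominated convergence, and the a.e.\ differentiability $\frac1h\big(H(t+s,Z^{z+h}_s)-H(t+s,Z^z_s)\big)\to \mu(t+s)Z^1_s\ind_{\{Z^z_s>1\}}$ (the event $\{Z^z_s=1\}$ has zero probability for $s>0$), one obtains \eqref{eq:w_z}, provided the two stopping times $\tau^*_{t,z\pm h}$ converge to $\tau^*_{t,z}$ appropriately; this last point is where regularity of the boundary for the diffusion — here only needed \emph{away} from $\partial\cC$, where it is automatic since the process does not touch $\cD$ in a short time — makes the upper and lower bounds coincide in the limit. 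The analogous (trivial) statement $w_z=0$ holds in the interior of $\cD$ since $w\equiv 0$ there.

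Finally, the stated consequences are immediate: the integrand in \eqref{eq:w_z} is nonnegative and bounded by $\mu_M\int_0^{T-t}Z^1_s\ud s+Z^1_{T-t}$, whose expectation is bounded uniformly in $(t,z)$ by $\mu_M\int_0^T e^{\alpha s}\ud s+e^{\alpha T}=:C$, giving $w_z\in L^\infty$; and for $(t,z)\in\cC$ we have $\P(\tau^*_{t,z}>0)=1$ and on $\{0<s<\tau^*_{t,z}\}$ either the terminal term is active with $Z^1_{T-t}>0$, or $\tau^*_{t,z}>0$ forces the integral term to be strictly positive with positive probability (since $Z^z_s>b(t+s)$ close to maturity with positive probability, and one may also invoke $\mathcal R\subset\cC$), hence $w_z(t,z)>0$. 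The main obstacle is the rigorous justification of step (i)–(ii), i.e.\ controlling the dependence of $\tau^*_{t,z}$ on $z$ so that the sandwiching bounds collapse to the same limit; this is standard when working in the open sets $\cC$ and $\mathrm{int}\,\cD$ (which is exactly why the statement excludes $\partial\cC$), and I would cite \cite{peskir2006optimal} for the template.
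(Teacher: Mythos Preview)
Your approach has two genuine gaps that the paper's proof avoids.

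\textbf{First, the sandwich with two different stopping times is not justified at this stage.} You bound $w(t,z+h)-w(t,z)$ from above by the increment evaluated at $\tau^*_{t,z+h}$ and from below at $\tau^*_{t,z}$. To make the two sides collapse to the same limit you need $\tau^*_{t,z+h}\to\tau^*_{t,z}$ as $h\downarrow0$. Your claim that this is ``automatic since the process does not touch $\cD$ in a short time'' confuses the initial point with the hitting point: what is needed is that once $(t+s,Z^z_s)$ reaches $\partial\cC$ at time $\tau^*_{t,z}$, it immediately enters $\text{int}(\cD)$---i.e.\ regularity of the stopping set in the sense of diffusions. In this paper that property is established only later (Lemma~\ref{lem:tau*=tau'} and \eqref{eq:tauIntD}), and it relies on the monotone transformed boundary $\beta$, whose construction in turn uses Lemma~\ref{lem:w_z}. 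So your argument would be circular. The paper sidesteps this entirely by using the \emph{same} stopping time $\tau^*_{t,z}$ for both one-sided bounds: the right increment $w(t,z+\eps)-w(t,z)\ge\E[\cP_{t,z+\eps}(\tau^*_{t,z})-\cP_{t,z}(\tau^*_{t,z})]$ gives the lower bound, and the left increment $w(t,z)-w(t,z-\eps)\le\E[\cP_{t,z}(\tau^*_{t,z})-\cP_{t,z-\eps}(\tau^*_{t,z})]$ gives the upper bound. Both sides then use only the fixed $\tau^*_{t,z}$, and since $w\in C^{1,2}(\cC)$ is already known, the left and right derivatives coincide.

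\textbf{Second, you miss a step in the terminal term.} Differentiating $(Z^z_{T-t}-1)^+$ in $z$ yields $Z^1_{T-t}\ind_{\{Z^z_{T-t}\ge1\}}$ almost everywhere, but the stated formula \eqref{eq:w_z} has the terminal term $\ind_{\{\tau^*_{t,z}=T-t\}}Z^1_{T-t}$ \emph{without} the additional indicator $\ind_{\{Z^z_{T-t}\ge1\}}$. Removing it requires Corollary~\ref{lem:Z>1}, which gives $\{\tau^*_{t,z}=T-t\}\subseteq\{Z^z_{T-t}\ge1\}$; the paper invokes this explicitly for the lower bound. Without it, your lower bound would carry the extra indicator and would not match the target formula.
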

\begin{proof}
	Since $w(t,z)=0$ for $(t,z)\in\cD$ with $t<T$, then $w_z(t,z)=0$ for $(t,z)\in\cD\setminus\partial\cC$. This agrees with \eqref{eq:w_z}, since $\tau^*_{t,z}=0$ for $(t,z)\in\cD$ with $t<T$.	

Now let $(t,z)\in\cC$ and $\eps>0$. The stopping time $\tau^*_{t,z}$ is optimal for $w(t,z)$ and admissible for $w(t,z+\eps)$. Then,
\begin{align*}
w(t,z+\eps)-w(t,z)&\geq\E\bigg[\int_0^{\tau^*_{t,z}}e^{-cs} \, _s p_{t}\mu(t+s)\Big(\big(Z^{z+\eps}_s-1\big)^+-\big(Z^{z}_s-1\big)^+\Big)\ud s\\
&\hspace{28pt}+\ind_{\{\tau^*_{t,z}=T-t \}}e^{-c(T-t)}\, _{T-t}p_{t}\Big(\big(Z^{z+\eps}_{T-t}-1\big)^+-\big(Z^{z}_{T-t}-1\big)^+\Big)\bigg].
\end{align*}
For every $s\in[0,T]$, 
\[
\big(Z^{z+\eps}_s\!-\!1\big)^+\!-\!\big(Z^{z}_s\!-\!1\big)^+=\big(Z_s^{z+\eps}\!-\!Z^z_s\big)\ind_{\{Z^z_s\geq 1 \}}\!+\!\big(Z_s^{z+\eps}\!-\!1\big)\ind_{\{Z^{z+\eps}_s\geq 1>Z^z_s\}}\geq \big(Z_s^{z+\eps}\!-\!Z^z_s\big)\ind_{\{Z^z_s\geq 1 \}}.
\]
Hence, by recalling also Corollary \ref{lem:Z>1},
\begin{align*}
w(t,z+\eps)-w(t,z)\geq\E\Big[&\int_0^{\tau^*_{t,z}}e^{-cs} \, _s p_{t}\mu(t+s)\big(Z^{z+\eps}_s-Z^{z}_s\big)\ind_{\{Z^z_s\geq 1 \}}\ud s\\
&+\ind_{\{\tau^*_{t,z}=T-t \}}e^{-c(T-t)}\, _{T-t}p_{t}\big(Z^{z+\eps}_{T-t}-Z^{z}_{T-t}\big)\Big].
\end{align*}
Dividing by $\eps$ and letting $\eps\to 0$, since $w\in C^{1,2}(\cC)$ we obtain
\begin{align}\label{eq:w_z>}
\begin{aligned}
w_z(t,z)\geq\E\Big[\int_0^{\tau^*_{t,z}}e^{-cs}\, _s p_{t}\mu(t+s)Z^1_s\ind_{\{Z^z_s\geq 1\}}\ud s+\ind_{\{\tau^*_{t,z}=T-t \}} e^{-c(T-t)}\, _{T-t}p_{t}Z^1_{T-t}\Big],
\end{aligned}
\end{align}
by the dominated convergence theorem. 

To show the opposite inequality, for $\eps>0$ we have
\begin{align}\label{eq:w(t,z)-w(t,z-eps)}
\begin{aligned}
w(t,z)-w(t,z-\eps)&\leq\E\Big[\int_0^{\tau^*_{t,z}}e^{-cs} {_s p_{t}}\mu(t+s)\Big(\big(Z^{z}_s-1\big)^+-\big(Z^{z-\eps}_s-1\big)^+\Big)\ud s\\
&\qquad+\ind_{\{\tau^*_{t,z}=T-t \}}e^{-c(T-t)} {_{T-t}p_{t}}\Big(\big(Z^{z}_{T-t}-1\big)^+-\big(Z^{z-\eps}_{T-t}-1\big)^+\Big)\Big].
\end{aligned}
\end{align}
Notice that we can write $Z^{z-\eps}_s=Z^z_s-\eps Z^1_s$ and therefore
\begin{align*}
\big(Z^z_s-1\big)\ind_{\{Z^z_s\geq 1> Z^{z-\eps}_s \}}&=\big(Z^z_s-1\big)\ind_{\{Z^z_s-1\geq 0\}}\ind_{\{ Z^{z}_s-\eps Z^1_s-1<0 \}}\\
&=\big(Z^z_s-1\big)\ind_{\{0\leq Z^{z}_s-1<\eps Z^1_s \}}\leq\eps Z^1_s\ind_{\{0\leq Z^{z}_s-1<\eps Z^1_s \}}.
\end{align*}
For every $s\in[0,T]$,
	\begin{align*}
		\big(Z^{z}_s-1\big)^+-\big(Z^{z-\eps}_s-1\big)^+&=\big(Z_s^{z}-Z^{z-\eps}_s\big)\ind_{\{Z^{z-\eps}_s\geq 1 \}}+\big(Z_s^{z}-1\big)\ind_{\{Z^{z}_s\geq 1>Z^{z-\eps}_s\}}\\
		&\leq \eps Z_s^{1}\ind_{\{Z^{z-\eps}_s\geq 1 \}}+\eps Z^1_s\ind_{\{0\leq Z^{z}_s-1<\eps Z^1_s \}}.
	\end{align*}
	Plugging this inequality into \eqref{eq:w(t,z)-w(t,z-eps)}, dividing by $\eps$ and letting $\eps\to 0$, we obtain
\begin{align*}
w_z(t,z)\leq\E\Big[\int_0^{\tau^*_{t,z}}e^{-cs} {_s p_{t}}\mu(t+s)Z^1_s\ind_{\{Z^z_s\geq 1\}}\ud s+\ind_{\{\tau^*_{t,z}=T-t \}} e^{-c(T-t)} {_{T-t}p_{t}}Z^1_{T-t}\Big],
\end{align*}
where we also used that $\P(Z^z_s=1)=0$ for all $s\in[0,T-t]$.
Combining the above inequality with \eqref{eq:w_z>} yields \eqref{eq:w_z}.
	
Boundedness of $w_z$ is immediate: for all $(t,z)\in ([0,T]\times(0,\infty))\setminus\partial\cC$,
\begin{align*}
\big|w_z(t,z)\big|&\le \mu_M \int_0^{T-t}\E[Z^1_s]\ud s+ \E\big[Z^1_{T-t}\big]
\le e^{|\alpha|T}\big(1+\mu_M T\big),
\end{align*}	
where $\mu_M=\max_{s\in[0,T]}\mu(s)$.
Hence $w_z\in L^\infty((0,T)\times (0,\infty))$, as claimed, because $\partial\cC$ has zero Lebesgue measure in $[0,T]\times(0,\infty)$.
	
Finally, if $(t,z)\in\cC$ then $\tau^*_{t,z}>0$, $\P$-a.s. Moreover, the fact that the law of $Z^z_s$ has full support on $(0,\infty)$ for all $s>0$ yields $\P(Z^z_s\geq 1+\delta\text{ for }s\in[t_1,t_2])>0$, for any $t< t_1<t_2<T$ and any $\delta>0$. Then, it is clear from \eqref{eq:w_z} that $w_z(t,z)>0$ for every $(t,z)\in\cC$.
\end{proof}

\begin{corollary}
For every $(t,z)\in([0,T)\times (0,\infty))\setminus\partial\cC$,
\begin{align}\label{eq:zw_z}
	zw_z(t,z)=\E\Big[\int_0^{\tau^*_{t,z}}e^{-cs}{_s p_t}\mu(t+s)\ind_{\{Z^z_s>1\}}Z^z_s\ud s+\ind_{\{\tau^*_{t,z}=T-t\}}e^{-c(T-t)}{_{T-t} p_t}Z^z_{T-t}\Big].
\end{align}
\end{corollary}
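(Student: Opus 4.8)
The plan is to obtain \eqref{eq:zw_z} directly from the formula for $w_z$ in Lemma \ref{lem:w_z}, exploiting the linearity of the process $Z$ in its initial datum. Since $Z$ solves the linear SDE \eqref{eq:Z}, one has the pathwise scaling identity $Z^z_s=z\,Z^1_s$ for every $s\in[0,T-t]$ and every $z>0$ (the same scaling was already used in the proof of Lemma \ref{lem:wincrconv}). So first I would take the expression for $w_z(t,z)$ in \eqref{eq:w_z}, multiply both sides by the positive constant $z$, and move $z$ inside the expectation and inside the time integral.

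Doing so, the integrand acquires the factors $z\,Z^1_s$ and $z\,Z^1_{T-t}$, which by the scaling identity equal $Z^z_s$ and $Z^z_{T-t}$, respectively. This yields
\[
zw_z(t,z)=\E\Big[\int_0^{\tau^*_{t,z}}e^{-cs}{_s p_t}\mu(t+s)\ind_{\{Z^z_s\ge 1\}}Z^z_s\,\ud s+\ind_{\{\tau^*_{t,z}=T-t\}}e^{-c(T-t)}{_{T-t} p_t}Z^z_{T-t}\Big],
\]
which coincides with \eqref{eq:zw_z} except that the indicator reads $\{Z^z_s\ge 1\}$ rather than $\{Z^z_s>1\}$. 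To close this gap I would note that, since the law of $Z^z_s$ is log-normal (hence admits a density on $(0,\infty)$) for every $s>0$, we have $\P(Z^z_s=1)=0$ for all $s\in(0,T-t]$; by Fubini's theorem the set $\{(s,\omega):Z^z_s(\omega)=1\}$ is $(\ud s\otimes\P)$-null, so replacing $\ind_{\{Z^z_s\ge 1\}}$ by $\ind_{\{Z^z_s>1\}}$ inside $\E\big[\int_0^{\tau^*_{t,z}}\!\cdots\ud s\big]$ leaves the value unchanged (the boundary term carries no such indicator and is unaffected).

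I do not expect any genuine obstacle: the statement is a corollary precisely because it is an immediate rescaling of Lemma \ref{lem:w_z}. The only step needing a word of justification — the ``hard part'', such as it is — is the harmless swap of $\ge$ for $>$ in the indicator, which is dispatched by the zero-probability argument above.
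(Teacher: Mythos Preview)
Your proposal is correct and is exactly the argument the paper implicitly intends: the corollary is stated without proof precisely because it follows from Lemma \ref{lem:w_z} by multiplying through by $z$ and using the scaling $Z^z_s=zZ^1_s$, with the harmless swap $\{Z^z_s\ge1\}\to\{Z^z_s>1\}$ justified by $\P(Z^z_s=1)=0$ for $s>0$.
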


The next lemma provides an upper bound for $w_t$.
\begin{lemma}\label{lem:w_tUpBound}
For every $(t,z)\in\big([0,T)\times(0,\infty)\big)\setminus\partial\cC$, we have
\begin{align}\label{eq:w_tleq}
\begin{aligned}
w_t(t,z) &\leq \E\Big[\int_0^{\tau^*_{t,z}}e^{-cs}{_s p_t}\big(\mu(t)-\mu(t+s)\big)\big(f(t+s)+\mu(t+s)(Z^z_s-1)^+\big)\ud s\Big]\\
&\quad+\E\Big[\int_0^{\tau^*_{t,z}}e^{-cs}{_s p_t}\big(\dot f(t+s)+\dot \mu(t+s)(Z^z_s-1)^+\big)\ud s\Big]\\
&\quad+\E\Big[\ind_{\{\tau^*_{t,z}=T-t \}}e^{-c(T-t)}{_{T-t} p_t}\big(c(Z^z_{T-t}-1)^+- f(T)\big) \Big]\\
&\quad+c_0\E\Big[\ind_{\{\tau^*_{t,z}=T-t \}}e^{-c(T-t)}{_{T-t} p_t}Z^z_{T-t}\Big],
\end{aligned}
\end{align}
where $c_0>0$ is a given constant, independent of $(t,z)$.
\end{lemma}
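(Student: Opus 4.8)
The strategy is to bound the left increment $w(t,z)-w(t-\eps,z)$ from above and divide by $\eps>0$; since $w\in C^{1,2}(\cC)$ the resulting quotient converges to $w_t(t,z)$. If $(t,z)$ lies in the interior of $\cD$ then $w\equiv 0$ near $(t,z)$, so $w_t(t,z)=0$ while the right-hand side of \eqref{eq:w_tleq} vanishes (as $\tau^*_{t,z}=0$); hence it suffices to treat $(t,z)\in\cC$, and I also assume $t>0$, the endpoint $t=0$ being recovered at the end by a limiting argument. Fix $\eps\in(0,t)$. The crucial device is a \emph{sub-optimal} stopping time for the problem started at $(t-\eps,z)$, whose horizon is $T-(t-\eps)=T-t+\eps$: set $\tilde\tau_\eps:=\tau^*_{t,z}$ on $\{\tau^*_{t,z}<T-t\}$ and $\tilde\tau_\eps:=T-t+\eps$ on $B:=\{\tau^*_{t,z}=T-t\}$. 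Then $\tilde\tau_\eps$ takes values in $[0,T-t+\eps]$ and, crucially, $\{\tilde\tau_\eps=T-t+\eps\}=B$, so the terminal-payoff event is preserved. Since $\tau^*_{t,z}$ is optimal for $w(t,z)$ while $\tilde\tau_\eps$ is admissible for $w(t-\eps,z)$, we obtain $w(t,z)-w(t-\eps,z)\le \E\big[\cP_{t,z}(\tau^*_{t,z})-\cP_{t-\eps,z}(\tilde\tau_\eps)\big]$ with $\cP$ as in \eqref{setP}.

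The next step is to expand this difference. On $B^c$ both stopping times equal $\tau^*_{t,z}<T-t$ and only the running integrals survive; on $B$, splitting $\int_0^{T-t+\eps}=\int_0^{T-t}+\int_{T-t}^{T-t+\eps}$ produces, besides the difference of the running integrals over $[0,T-t]$, an extra short integral over $[T-t,T-t+\eps]$ (with a minus sign) together with the difference of the two terminal payoffs. Dividing by $\eps$ and letting $\eps\downarrow 0$, the running-integral contributions converge to $\E\big[\int_0^{\tau^*_{t,z}}e^{-cs}\,\partial_t\big({_s p_t}\,(f(t+s)+\mu(t+s)(Z^z_s-1)^+)\big)\ud s\big]$; using $\partial_t\,{_s p_t}={_s p_t}(\mu(t)-\mu(t+s))$ together with $\partial_t f(t+s)=\dot f(t+s)$ and $\partial_t\mu(t+s)=\dot\mu(t+s)$, this reproduces exactly the sum of the first two lines of \eqref{eq:w_tleq}. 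The extra short integral, divided by $\eps$, converges on $B$ to $-e^{-c(T-t)}\,{_{T-t}p_t}\,\big(f(T)+\mu(T)(Z^z_{T-t}-1)^+\big)$. All interchanges of limit and expectation are licensed by dominated convergence, using that $f$, $\mu$ (hence $s\mapsto {_s p_{\cdot}}$) are $C^1$ with bounded derivatives on $[0,T]$ by Assumption \ref{A1}, and that $\sup_{0\le s\le T}Z^z_s$ has finite moments of all orders.

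The delicate point — and the main obstacle — is the terminal-payoff difference on $B$, namely $e^{-c(T-t)}\,{_{T-t}p_t}\,(Z^z_{T-t}-1)^+-\phi(\eps)(Z^z_{T-t+\eps}-1)^+$ with $\phi(\eps):=e^{-c(T-t+\eps)}\,{_{T-t+\eps}p_{t-\eps}}$: the state is now evaluated at the shifted horizon $T-t+\eps$, and since $Z^z_{T-t+\eps}-Z^z_{T-t}$ is only of order $\sqrt\eps$, a crude bound would destroy the $O(\eps)$ control. The fix is to condition on $\cF_{T-t}$ (note $B\in\cF_{T-t}$ and $\phi(\eps)$ is deterministic) and exploit the convexity of $x\mapsto(x-1)^+$: by Jensen, $\E\big[(Z^z_{T-t+\eps}-1)^+\,\big|\,\cF_{T-t}\big]\ge \big(\E[Z^z_{T-t+\eps}\mid\cF_{T-t}]-1\big)^+=(Z^z_{T-t}e^{\alpha\eps}-1)^+\ge (Z^z_{T-t}-1)^+-c_0'\,\eps\,Z^z_{T-t}$ for $\eps\le 1$, where $\alpha$ is as in \eqref{eq:alpha} and $c_0'$ depends only on $|\alpha|$. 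Combining this with $\phi(0)-\phi(\eps)=(c+\mu(t))\phi(0)\,\eps+o(\eps)$ — read off from $\phi(\eps)=\exp\big(-c(T-t+\eps)-\int_{t-\eps}^{T}\mu(v)\ud v\big)$ and $\phi(0)=e^{-c(T-t)}\,{_{T-t}p_t}$ — shows that the terminal-payoff difference contributes in the limit at most $\E\big[\ind_B\,e^{-c(T-t)}\,{_{T-t}p_t}\,(c+\mu(t))(Z^z_{T-t}-1)^+\big]+c_0'\,\E\big[\ind_B\,e^{-c(T-t)}\,{_{T-t}p_t}\,Z^z_{T-t}\big]$.

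Finally I would collect the $B$-terms: the contributions $-\mu(T)(Z^z_{T-t}-1)^+$ (from the short integral) and $(c+\mu(t))(Z^z_{T-t}-1)^+$ (from the terminal difference) add up to $(c+\mu(t)-\mu(T))(Z^z_{T-t}-1)^+=c(Z^z_{T-t}-1)^++(\mu(t)-\mu(T))(Z^z_{T-t}-1)^+$, and since $0\le (Z^z_{T-t}-1)^+\le Z^z_{T-t}$ and $|\mu(t)-\mu(T)|\le \mu_M:=\max_{s\in[0,T]}\mu(s)$ (Assumption \ref{A1}), the last summand is $\le \mu_M Z^z_{T-t}$. Absorbing it, together with the $c_0'$-term above, into the final line of \eqref{eq:w_tleq} with $c_0:=\mu_M+c_0'$, and noting that on $B^c$ no terminal payoff appears, yields exactly \eqref{eq:w_tleq}. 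The endpoint $t=0$ is then obtained by a routine limiting argument, letting $t\downarrow 0$ in the inequality just established and using the right-continuity of $w_t$ on $\cC$.
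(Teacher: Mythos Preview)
Your proof follows the same overall strategy as the paper: use the sub-optimal stopping time $\tilde\tau_\eps=\tau^*_{t,z}\ind_{\{\tau^*_{t,z}<T-t\}}+(T-t+\eps)\ind_{\{\tau^*_{t,z}=T-t\}}$ for $w(t-\eps,z)$, expand the increment $\cP_{t,z}(\tau^*_{t,z})-\cP_{t-\eps,z}(\tilde\tau_\eps)$ into a running-integral part, a short integral over $[T-t,T-t+\eps]$, and a terminal-payoff difference, and pass to the limit.

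The genuine difference lies in how you control the terminal-payoff term on $B=\{\tau^*_{t,z}=T-t\}$. The paper conditions on $\cF_{T-t}$ and introduces the European-type function $\varphi(s,y)=\E[(N^y_{T-s}-1)^+]$, uses its PDE $\varphi_t=-\tfrac{\sigma^2}{2}y^2\varphi_{yy}-\alpha y\varphi_y$, and the bounds $|\varphi_y|\le e^{|\alpha|T}$, $\varphi_{yy}\ge 0$ to get $\varphi_t(s,y)\le |\alpha|e^{|\alpha|T}y$; this yields $c_0=|\alpha|e^{|\alpha|T}$. Your route is more elementary: Jensen's inequality gives $\E[(Z^z_{T-t+\eps}-1)^+\mid\cF_{T-t}]\ge(Z^z_{T-t}e^{\alpha\eps}-1)^+$ directly, and the one-line Lipschitz estimate $(ze^{\alpha\eps}-1)^+\ge (z-1)^+-|\alpha|\eps\,z$ finishes the job. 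This avoids the auxiliary PDE entirely and even gives a better constant.

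One small point worth flagging: in the final collection you obtain $(c+\mu(t))(Z^z_{T-t}-1)^+$ from the discount-factor derivative and $-\mu(T)(Z^z_{T-t}-1)^+$ from the short integral, leaving a residual $(\mu(t)-\mu(T))(Z^z_{T-t}-1)^+$ which you absorb into the $c_0$-line via $(Z^z_{T-t}-1)^+\le Z^z_{T-t}$. The paper's write-up appears to have the two $\mu$-terms cancel exactly, but that is because it records the discount-factor limit as $c+\mu(T)$; your computation $c+\mu(t)$ is the correct one (since $\phi(\eps)=\exp(-c(T-t+\eps)-\int_{t-\eps}^T\mu)$ differentiates to $-(c+\mu(t))\phi(0)$), and your absorption step is the right fix. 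The limiting argument for $t=0$ is a minor technicality that the paper also glosses over.
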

\begin{proof}
If $(t,z)\in\cD\setminus\partial\cC$ with $t<T$, then \eqref{eq:w_tleq} trivially holds because $w_t(t,z)=w(t,z)=0$ due to $\tau^*_{t,z}=0$, $\P$-a.s.
	
Let $(t,z)\in\cC$ and $\eps\in(0,t)$. For the sake of simplicity, in this proof, we recall the notation $D^t(s)=e^{-cs}{_s p_t}$ from \eqref{eq:UD}. The stopping time $\tau=\tau^*_{t,z}\ind_{\{\tau^*_{t,z} <T-t\}}+(T-t+\eps)\ind_{\{\tau^*_{t,z}=T-t \}}$ is admissible but sub-optimal for $w(t-\eps,z)$. Then, using $w(t-\eps,z)\ge \E[\cP_{t-\eps,z}(\tau)]$ we have
\begin{align}\label{eq:w(t-eps,z)}
\begin{aligned}
&w(t,z)-w(t-\eps,z)\\
&\leq \E\Big[\int_0^{\tau^*_{t,z}}\Big(D^t(s) \big(f(t+s)+\mu(t+s)(Z^z_s-1)^+\big)\\
&\qquad\qquad\qquad-D^{t-\eps}(s) \big(f(t-\eps+s)+\mu(t-\eps+s)(Z^z_s-1)^+\big)\Big)\ud s\Big]\\
&\quad-\E\Big[\ind_{\{\tau^*_{t,z}=T-t \}}\int_{T-t}^{T-t+\eps} D^{t-\eps}(s)\Big(f(t-\eps+s)+\mu(t-\eps+s)(Z^z_s-1)^+ \Big)\ud s \Big]\\
&\quad+\E\Big[\ind_{\{\tau^*_{t,z}=T-t \}}\Big(D^t(T-t)(Z^z_{T-t}-1)^+-D^{t-\eps}(T-t+\eps)(Z^z_{T-t+\eps}-1)^+ \Big)\Big].
\end{aligned}
\end{align}

For the last term of \eqref{eq:w(t-eps,z)}, using 
\begin{align*}
\frac{D^{t-\eps}(T-t+\eps)}{D^t(T-t)}=e^{-c\eps-\int_0^\eps\mu(t-\eps+\theta)\ud \theta}
\end{align*}
we get
\begin{align}\label{eq:w(t-eps,z)2}
\begin{aligned}
&\E\Big[\ind_{\{\tau^*_{t,z}=T-t \}}\Big(D^t(T-t)(Z^z_{T-t}-1)^+-D^{t-\eps}(T-t+\eps)(Z^z_{T-t+\eps}-1)^+ \Big)\Big]\\
&=\E\Big[\ind_{\{\tau^*_{t,z}=T-t \}}D^t(T-t)\Big((Z^z_{T-t}-1)^+-e^{-c\eps-\int_0^\eps\mu(t-\eps+\theta)\ud\theta} (Z^z_{T-t+\eps}-1)^+\Big)\Big]\\
&=\E\Big[\ind_{\{\tau^*_{t,z}=T-t \}}D^t(T-t)\Big((Z^z_{T-t}-1)^+\big(1-e^{-c\eps-\int_0^\eps\mu(t-\eps+\theta)\ud\theta}\big)\\
&\qquad\qquad\qquad+e^{-c\eps-\int_0^\eps\mu(t-\eps+\theta)\ud\theta}\big[(Z^z_{T-t}-1)^+-\big(Z^z_{T-t}e^{(\alpha-\sigma^2/2)\eps+\sigma B_\eps}-1\big)^+\big] \Big)\Big],
\end{aligned}
\end{align}
where $B_u:=W_{T-t+u}-W_{T-t}$, $u\in[0,\infty)$, is a standard Brownian motion under $\P$ for the filtration $(\cF_{T-t+u})_{u\ge 0}$. Define
\[
\varphi(t,y):=\E\big[(N^y_{T-t}-1)^+\big],\qquad (t,y)\in[0,T]\times(0,\infty),
\]
where $\ud N^y_s=\alpha N^y_s\ud s+\sigma N^y_s\ud B_s$ with $N^y_0=y$. The function $\varphi$ is analogue to a European Call option written on $(N_t)_{t\in[0,T]}$ with strike price equal to one, maturity at $T$ and zero risk-free rate. It is well-known and it can be checked directly using the log-normal density that $\varphi\in C^{1,2}([0,T)\times(0,\infty))\cap C([0,T]\times(0,\infty))$ solves the Cauchy problem
\begin{align}\label{eq:PDEvarphi}
\varphi_t(t,y)=-\tfrac{\sigma^2y^2}{2}\varphi_{yy}(t,y)-\alpha y\varphi_y(t,y),\quad  (t,y)\in[0,T)\times(0,\infty),
\end{align}
with $\varphi(T,y)=(y-1)^+$, for $y\in(0,\infty)$. It is also not difficult to check that $|\varphi_y(t,y)|\leq e^{|\alpha|T}=:c_\alpha$ and $\varphi_{yy}(t,y)\ge 0$ for $(t,y)\in[0,T)\times(0,\infty)$.
Thus, from \eqref{eq:PDEvarphi} we deduce the bound $\varphi_t(t,y)\le -\alpha y\varphi_y(t,y)\le|\alpha|c_\alpha y $. Using the latter, for the last term of \eqref{eq:w(t-eps,z)2} we have
\begin{align}\label{eq:phi(T-eps)}
\begin{aligned}
&\E\Big[\ind_{\{\tau^*_{t,z}=T-t\}}\Big( (Z^z_{T-t}-1)^+-(Z^z_{T-t}e^{(\alpha-\sigma^2/2)\eps+\sigma B_\eps}-1)^+\Big)\Big]\\
&=\E\Big[\ind_{\{\tau^*_{t,z}=T-t\}}\Big( (Z^z_{T-t}-1)^+-\E\Big[\big(Z^z_{T-t}e^{(\alpha-\sigma^2/2)\eps+\sigma B_\eps}-1\big)^+\Big|\cF_{T-t}\Big]\Big) \Big]\\
&=\E\Big[\ind_{\{\tau^*_{t,z}=T-t\}}\Big(\varphi(T,Z^z_{T-t})-\varphi(T-\eps,Z^z_{T-t}) \Big)\Big]\\
	&=\E\Big[\ind_{\{\tau^*_{t,z}=T-t\}}\int_{T-\eps}^T \varphi_t(s,Z^z_{T-t})\ud s\Big]\leq |\alpha|c_\alpha\eps\E\Big[\ind_{\{\tau^*_{t,z}=T-t\}} Z^z_{T-t}\Big].
\end{aligned}
\end{align}
Combining \eqref{eq:w(t-eps,z)}, \eqref{eq:w(t-eps,z)2} and \eqref{eq:phi(T-eps)} we  obtain
\begin{align*}
&w(t,z)-w(t-\eps,z)\notag\\
&\leq \E\Big[\int_0^{\tau^*_{t,z}} \big(D^t(s)-D^{t-\eps}(s)\big) \big(f(t+s)+\mu(t+s)(Z^z_s-1)^+\big)\ud s\Big] \notag\\
&\quad+\E\Big[\int_0^{\tau^*_{t,z}}D^{t-\eps}(s) \Big(f(t\!+\!s)\!-\!f(t\!-\!\eps\!+\!s)\!+\!\big(\mu(t\!+\!s)\!-\!\mu(t\!-\!\eps\!+\!s)\big)(Z^z_s\!-\!1)^+\Big)\ud s\Big]\\
&\quad-\E\Big[\ind_{\{\tau^*_{t,z}=T-t \}}\int_{T-t}^{T-t+\eps} D^{t-\eps}(s)\Big(f(t-\eps+s)+\mu(t-\eps+s)(Z^z_s-1)^+ \Big)\ud s \Big]\\
&\quad+\E\Big[\ind_{\{\tau^*_{t,z}=T-t\}}D^t(T-t)(Z^z_{T-t}-1)^+\Big(1-e^{-c\eps-\int_0^\eps\mu(t-\eps+\theta)\ud\theta}\Big)\Big]\\
&\quad+c_\alpha|\alpha|\eps\E\Big[ \ind_{\{\tau^*_{t,z}=T-t\}}D^t(T-t)e^{-c\eps-\int_0^\eps \mu(t-\eps+\theta)\ud\theta}Z^z_{T-t}\Big].
\end{align*}
Dividing both sides by $\eps$, letting $\eps\to 0$ and using the dominated convergence theorem, we obtain
\begin{align*}
	w_t(t,z)&\leq \E\Big[\int_0^{\tau^*_{t,z}}D^t(s)\big(\mu(t)-\mu(t+s) \big)\big(f(t+s)+\mu(t+s)(Z^z_s-1)^+ \big)\ud s\Big]\\
	&\quad+\E\Big[\int_0^{\tau^*_{t,z}}D^t(s)\big(\dot f(t+s)+\dot\mu(t+s)(Z^z_s-1)^+ \big)\ud s\Big]\\
	&\quad-\E\Big[\ind_{\{\tau^*_{t,z}=T-t\}}D^t(T-t)\big(f(T)+\mu(T)(Z^z_{T-t}-1)^+ \big)\Big]\\
	&\quad+\E\Big[\ind_{\{\tau^*_{t,z}=T-t\}}D^t(T-t)\big([c+\mu(T)] (Z^z_{T_t}-1)^++|\alpha|c_\alpha Z^z_{T-t}\big) \Big],
\end{align*}
which leads to the desired result \eqref{eq:w_tleq}, after relabelling $|\alpha|c_\alpha$ as $c_0$.
\end{proof}

Thanks to Lemma \ref{lem:wincrconv}, we know that $w(t,b(t))=0$ for every $t\in(t_*,T]$. Moreover, defining
\begin{equation}\label{eq:b_delta}
b_\delta(t):=\inf\{z\in(0,\infty): w(t,z)>\delta \}, \qquad t\in(t^*,T], \:\delta>0,
\end{equation}
we have, by continuity and Lemma \ref{lem:wincrconv}, that $w(t,b_\delta(t))=\delta$ for every $t\in(t_*,T]$. Notice that, since $w(T,z)=(z-1)^+$, then $b_\delta(T)=1+\delta$. By monotonicity of $z\mapsto w(t,z)$, it is also clear that $\delta\mapsto b_\delta(t)$ decreases as $\delta\downarrow 0$ for every $t\in(t^*,T]$, with $b_\delta(t)>b(t)\geq 0$. Thus, $\lim_{\delta\to 0} b_\delta(t)=:b_0(t)$ exists and it satisfies $b_0(t)\geq b(t)$. Finally, by continuity of $w$ and the fact that $w(t,b_\delta(t))=\delta$, we obtain $w(t,b_0(t))=0$. Hence, by definition of $b$, we must have $b_0(t)\leq b(t)$ and so
\[
\lim_{\delta\to 0}b_\delta(t)=b(t), \qquad t\in(t^*,T].
\]
Since $b_\delta(t)>b(t)\geq 0$ for every $t\in(t^*,T]$ and $\delta>0$, then $(t,b_\delta(t))\in\cC$ for $t<T$. By Lemma \ref{lem:w_z}, we then have that $w_z(t,b_\delta(t))>0$ and thus, by the implicit function theorem, we obtain that $t\mapsto b_\delta(t)$ is continuously differentiable on $(t^*,T)$ with
\begin{equation}
	\label{bprimo}
	\dot b_{\delta}(t)=-\frac{w_t(t,b_{\delta}(t))}{w_z(t,b_{\delta}(t))}, \quad t\in (t^*,T).
\end{equation}
 
By equation \eqref{bprimo}, Lemma \ref{lem:w_z} and Lemma \ref{lem:w_tUpBound}, we obtain the following lower bound.
\begin{lemma}\label{lem:b'_delta}
	There exists $\Lambda\geq 0$ such that
	\[
	\inf_{\delta\in(0,1)}\inf_{t\in(t^*,T)}\bigg(\frac{\dot b_\delta(t)}{b_\delta(t)}\bigg)\geq -\Lambda.
	\]
	\end{lemma}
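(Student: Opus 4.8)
The goal is to bound $\dot b_\delta(t)/b_\delta(t)$ from below uniformly in $\delta$ and $t$, using the expression \eqref{bprimo}, namely $\dot b_\delta(t)/b_\delta(t)=-w_t(t,b_\delta(t))/\big(b_\delta(t)w_z(t,b_\delta(t))\big)$. Since $b_\delta(t)w_z(t,b_\delta(t))>0$ by Lemma \ref{lem:w_z}, it suffices to produce a bound of the form $w_t(t,z)\le \Lambda\, z\, w_z(t,z)$ at the point $z=b_\delta(t)$, with $\Lambda\ge 0$ independent of $(t,z)$ (and hence of $\delta$). The plan is to compare the upper bound for $w_t$ from Lemma \ref{lem:w_tUpBound} with the formula $zw_z(t,z)=\E\big[\int_0^{\tau^*_{t,z}}e^{-cs}{_s p_t}\mu(t+s)\ind_{\{Z^z_s>1\}}Z^z_s\,\ud s+\ind_{\{\tau^*_{t,z}=T-t\}}e^{-c(T-t)}{_{T-t}p_t}Z^z_{T-t}\big]$ from the corollary to Lemma \ref{lem:w_z}. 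Crucially, at $z=b_\delta(t)$ we have $t\in(t^*,T)$, so by the definition of $t^*$ (and Assumption \ref{A2}) we are working only in the region where $f$ may be negative.

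First I would split the upper bound for $w_t$ term by term and dominate each piece by a constant multiple of the corresponding piece of $zw_z$. The terminal terms in \eqref{eq:w_tleq} of the form $\E[\ind_{\{\tau^*=T-t\}}e^{-c(T-t)}{_{T-t}p_t}\,(\cdots)]$ are handled first: on the event $\{\tau^*_{t,z}=T-t\}$ Corollary \ref{lem:Z>1} gives $Z^z_{T-t}\ge 1$, so $c(Z^z_{T-t}-1)^+\le c\,Z^z_{T-t}$, $-f(T)=c$ (since $k(T)=0$, $\dot k(T)\le 0$ gives $f(T)=-\dot k(T)-c$... more carefully $f(T)=k(T)(c+\mu(T))-\dot k(T)-c=-\dot k(T)-c\le -c$ so $-f(T)\le -c\le c\,Z^z_{T-t}$ on $\{Z^z_{T-t}\ge 1\}$ up to sign — one bounds $-f(T)\le |\dot k(T)|+c$ by a constant times $Z^z_{T-t}$), and the $c_0 Z^z_{T-t}$ term is already of the right shape; thus the whole terminal contribution is bounded by $C_1\,\E[\ind_{\{\tau^*=T-t\}}e^{-c(T-t)}{_{T-t}p_t}Z^z_{T-t}]$, which is $\le C_1\,zw_z$. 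For the running (integral) terms, on $\{Z^z_s>1\}$ we have $(Z^z_s-1)^+< Z^z_s$ and $f(t+s)<f(t+s)^+\le$ const, so the integrands are bounded by a constant times $\mu(t+s)Z^z_s$, matching the integrand of $zw_z$ up to the constant $\mu_M/\mu_m$ where $\mu_m:=\min_{[0,T]}\mu>0$; on $\{Z^z_s\le 1\}$ the terms involving $(Z^z_s-1)^+$ vanish and one is left with $\mu(t+s)$-weighted and $\dot\mu$-weighted multiples of $f(t+s)$, which on $(t^*,T)$ are $\le 0$ precisely under the two dichotomous conditions (i)–(ii) of Assumption \ref{A2} — this is exactly where $\dot f\le 0$, $\dot\mu\le 0$, or $\dot f-\mu f\le 0$ are invoked to kill the otherwise problematic negative-$f$ contributions.

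The step I expect to be the main obstacle is precisely the handling of the running terms on the event $\{Z^z_s\le 1\}$, i.e.\ showing that the non-positive-part contributions $e^{-cs}{_s p_t}\big[(\mu(t)-\mu(t+s))f(t+s)+\dot f(t+s)\big]$ (integrated up to $\tau^*_{t,z}$) are bounded above by a constant times $zw_z$ — which on that event is only the genuinely positive quantity coming from $\{Z^z_s>1\}$-times and the terminal term. The issue is that $f(t+s)$ can be negative for $s$ large (when $t>t^*$), so $(\mu(t)-\mu(t+s))f(t+s)$ and $\dot f(t+s)$ need not be negative on their own. Here one uses the two cases of Assumption \ref{A2}: in case (i), $\dot\mu\le 0$ implies $\mu(t)-\mu(t+s)\ge 0$, and together with $f(t+s)<0$ and $\dot f(t+s)\le 0$ one gets $(\mu(t)-\mu(t+s))f(t+s)+\dot f(t+s)\le 0$ — wait, $(\mu(t)-\mu(t+s))f(t+s)$ is then $\le 0$ only if... $\mu(t)-\mu(t+s)\ge 0$ and $f<0$ give product $\le 0$, good; in case (ii), $\dot\mu>0$ gives $\mu(t)-\mu(t+s)\le 0$ hence $(\mu(t)-\mu(t+s))f(t+s)\le -\mu(t+s)f(t+s)\le$ (using $\dot f-\mu f\le 0$, i.e.\ $-\mu f\le -\dot f$) $\le -\dot f(t+s)$, so the sum is $\le 0$. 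In either case these terms are non-positive on $(t^*,T)$ and can simply be dropped from the upper bound. After discarding them, what remains is exactly a constant multiple of the pieces that constitute $zw_z(t,z)$, so we conclude $w_t(t,b_\delta(t))\le \Lambda\, b_\delta(t)\, w_z(t,b_\delta(t))$ with $\Lambda:=\max\{C_1,\mu_M/\mu_m, \text{(const from }\dot f, \mu\text{ bounds)}\}$, and substituting into \eqref{bprimo} yields $\dot b_\delta(t)/b_\delta(t)\ge -\Lambda$, uniformly in $\delta\in(0,1)$ and $t\in(t^*,T)$, as claimed.
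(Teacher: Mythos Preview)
Your proposal is correct and follows the same route as the paper: use \eqref{bprimo}, bound $w_t$ from above via Lemma~\ref{lem:w_tUpBound}, compare term by term with the formula \eqref{eq:zw_z} for $zw_z$, handle the terminal pieces through Corollary~\ref{lem:Z>1}, and invoke the dichotomy (i)--(ii) of Assumption~\ref{A2} to kill the $f$-dependent running contributions on $(t^*,T)$.

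One small inefficiency worth noting: the paper does \emph{not} split on $\{Z^z_s>1\}$ versus $\{Z^z_s\le 1\}$. The sign argument you carry out on $\{Z^z_s\le 1\}$ --- showing that $(\mu(t)-\mu(t+s))f(t+s)+\dot f(t+s)\le 0$ under either (i) or (ii) --- holds pointwise for all $s$ (it does not involve $Z^z_s$ at all), so these terms can be dropped globally. After doing so, every surviving running term already carries an explicit factor $\mu(t+s)(Z^z_s-1)^+$ (bounded using $|\mu(t)-\mu(t+s)|\le\mu_M$ in case~(i) and $\dot\mu\le\lambda\mu$ in case~(ii)), which matches the integrand of $zw_z$ directly via $(Z^z_s-1)^+\le Z^z_s\ind_{\{Z^z_s>1\}}$. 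This avoids your introduction of $\mu_m:=\min_{[0,T]}\mu$, which is not guaranteed to be strictly positive by Assumption~\ref{A1} alone. With that refinement the two proofs coincide and yield the explicit constant $\Lambda=c_1+c+c_0-f(T)$ with $c_1=\lambda+\mu_M$.
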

\begin{proof}
Fix $(t,z)\in\cC$ with $t\in(t^*,T)$ and recall $f(s)\le 0$ for $s\in[t_*,T)$. The first two lines on the right-hand side of \eqref{eq:w_tleq} can be bounded from above using slightly different arguments, depending on whether (i) or (ii) in Assumption \ref{A2} hold. If (i) holds we have  
\begin{align*}
\begin{aligned}
&\E\Big[\int_0^{\tau^*_{t,z}}e^{-cs}{_s p_t}\big(\mu(t)-\mu(t+s)\big)\big(f(t+s)+\mu(t+s)(Z^z_s-1)^+\big)\ud s\Big]\\
&\quad+\E\Big[\int_0^{\tau^*_{t,z}}e^{-cs}{_s p_t}\big(\dot f(t+s)+\dot \mu(t+s)(Z^z_s-1)^+\big)\ud s\Big]\\
&\le \E\Big[\int_0^{\tau^*_{t,z}}e^{-cs}{_s p_t}\big(\mu(t)-\mu(t+s)\big)\mu(t+s)(Z^z_s-1)^+\ud s\Big]\\
&\le \mu_M \E\Big[\int_0^{\tau^*_{t,z}}e^{-cs}{_s p_t}\mu(t+s)(Z^z_s-1)^+\ud s\Big],
\end{aligned}
\end{align*}
where $\mu_M=\max_{s\in[0,T]}\mu(s)$. If instead (ii) in Assumption \ref{A2} holds, we argue as follows:
\begin{align*}
\begin{aligned}
&\E\Big[\int_0^{\tau^*_{t,z}}e^{-cs}{_s p_t}\big(\mu(t)-\mu(t+s)\big)\big(f(t+s)+\mu(t+s)(Z^z_s-1)^+\big)\ud s\Big]\\
&\quad+\E\Big[\int_0^{\tau^*_{t,z}}e^{-cs}{_s p_t}\big(\dot f(t+s)+\dot \mu(t+s)(Z^z_s-1)^+\big)\ud s\Big]\\
&\le \E\Big[\int_0^{\tau^*_{t,z}}e^{-cs}{_s p_t}\big(\dot f(t+s)-\mu(t+s)f(t+s)+\dot \mu(t+s)(Z^z_s-1)^+\big)\ud s\Big]\\
&\le \lambda \E\Big[\int_0^{\tau^*_{t,z}}e^{-cs}{_s p_t}\mu(t+s)(Z^z_s-1)^+\ud s\Big],
\end{aligned}
\end{align*}
where in the first inequality we dropped the negative terms 
\[
\mu(t)f(t+s)\quad\text{ and }\quad(\mu(t)-\mu(t+s))\mu(t+s)(Z^z_s-1)^+
\]
and in the second inequality we used $\dot \mu(s)\le \lambda\mu(s)$ by Assumption \ref{A1} and (ii) in Assumption \ref{A2}. Combining the two estimates above with \eqref{eq:w_tleq} and setting $c_1=\lambda+\mu_M$ we find an upper bound for $w_t$ as
\begin{align}\label{eq:w_tleq2}
\begin{aligned}
w_t(t,z) &\leq c_1\E\Big[\int_0^{\tau^*_{t,z}}e^{-cs}{_s p_t}\mu(t+s)(Z^z_s-1)^+\ud s\Big]\\
&\quad+\E\Big[\ind_{\{\tau^*_{t,z}=T-t \}}e^{-c(T-t)}{_{T-t} p_t}\big(c(Z^z_{T-t}-1)^+ - f(T)\big) \Big]\\
&\quad+c_0\E\Big[\ind_{\{\tau^*_{t,z}=T-t \}}e^{-c(T-t)}{_{T-t} p_t} Z^z_{T-t}\Big].
\end{aligned}	
\end{align}
Next, we use $(Z^z_s-1)^+\le Z^z_s \ind_{\{Z^z_s>1\}}$ for all $s\in[0,T-t]$ 
to obtain
	\begin{align}\label{eq:w_tleq3}
	\begin{aligned}
		w_t(t,z)&\leq \E\Big[\int_0^{\tau^*_{t,z}}c_1 e^{-cs}{_s p_t}\mu(t+s)\ind_{\{Z^z_s>1 \}}Z_s\ud s +(c+c_0)e^{-c(T-t)}{_{T-t} p_t}\ind_{\{\tau^*_{t,z}=T-t\}}Z^z_{T-t}\Big]\\
		&\quad- f(T)e^{-c(T-t)}{_{T-t} p_t}\P\big(\tau^*_{t,z}=T-t\big).
\end{aligned}	
	\end{align}
The first two terms on the right-hand side of the expression above are the same as in \eqref{eq:zw_z} up to a multiplicative constant. For the last term above we notice
\begin{align}\label{eq:fT}
\begin{aligned}
- f(T)e^{-c(T-t)}{_{T-t} p_t}\P\big(\tau^*_{t,z}=T-t\big)
&= |f(T)|e^{-c(T-t)}{_{T-t} p_t}\E\big[\ind_{\{\tau^*_{t,z}=T-t\}\cap\{Z^z_{T-t}\ge 1\}}\big]\\
&\le |f(T)|e^{-c(T-t)}{_{T-t} p_t}\E\big[Z^z_{T-t}\ind_{\{\tau^*_{t,z}=T-t\}\cap\{Z^z_{T-t}\ge 1\}}\big]\\
&= |f(T)|e^{-c(T-t)}{_{T-t} p_t}\E\big[Z^z_{T-t}\ind_{\{\tau^*_{t,z}=T-t\}}\big],
\end{aligned}
\end{align}
thanks to Corollary \ref{lem:Z>1}. Thus, the last term in \eqref{eq:fT} is the same as the second term in \eqref{eq:zw_z}, up to a multiplicative constant. Then, combining \eqref{eq:w_tleq3} and \eqref{eq:zw_z}, we obtain
\[
\frac{\dot b_\delta(t)}{b_\delta(t)}=-\frac{ w_t(t,b_\delta(t))}{b_\delta(t) w_z(t,b_\delta(t))}\geq - c_1- c-c_0+ f(T)=:-\Lambda.
\]
The bound is uniform in $t\in[0,T]$ and $\delta>0$, hence concluding the proof.
\end{proof}

Now we can define a transformation of the boundary:
\begin{equation}\label{eq:beta}
	\beta(t):=e^{\Lambda t}b(t), \qquad t\in[0,T],
\end{equation}
where $\Lambda$ comes from Lemma \ref{lem:b'_delta}. For the purpose of numerical illustrations of $t\mapsto \beta(t)$ (cf.\ Figure \ref{fig2}), it is useful to make $\Lambda$ more explicit. We therefore recall that $c_1=\lambda+\mu_M$ is defined in the proof of Lemma \ref{lem:b'_delta} for $\lambda$ as in Assumption \ref{A1} and $\mu_M=\max_{s\in[0,T]}\mu(s)$. Instead $c_0=\alpha c_\alpha=\alpha e^{|\alpha|T}$ is defined in the proof of Lemma \ref{lem:w_tUpBound}.

Next we deduce some useful properties of the mapping $t\mapsto \beta(t)$.
\begin{proposition}\label{prop:betamonot}
	The transformed boundary $t\mapsto\beta(t)$ is non-decreasing and right-continuous, with $0\le \beta(t)\le e^{\Lambda T}$ for all $t\in[0,T]$. Moreover, $\beta(t)=0$ for $t\in[0,t^*)$ and $\beta(t)>0$ for $t\in(t^*,T]$. Finally, if $t^*\in[0,T)$, then $\lim_{t\uparrow T}\beta(t)=e^{\Lambda T}$.
\end{proposition}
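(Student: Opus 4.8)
The plan is to prove the statement in four steps: first that $\beta$ is non-decreasing, then that $\beta>0$ on $(t^*,T]$, then the upper bound together with $\lim_{t\uparrow T}\beta(t)=e^{\Lambda T}$, and finally right-continuity. The assertions $\beta\equiv 0$ on $[0,t^*)$ and $\beta\geq 0$ are immediate from \eqref{eq:sC} and $b\geq 0$. The three analytic inputs I would use are: the penalised boundary $b_\delta$ from \eqref{eq:b_delta}, which is smooth precisely because $b_\delta>b$; an optional-sampling argument built on $f<0$ on $(t^*,T]$; and Proposition \ref{prop:limsupT}.

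\emph{Monotonicity.} I set $\beta_\delta(t):=e^{\Lambda t}b_\delta(t)$ for $t\in[t^*,T]$, extending the definition \eqref{eq:b_delta} of $b_\delta$ to $t=t^*$, which is legitimate since $w(t^*,0+)=0<\delta$ by Lemma \ref{lem:wincrconv}, so that $w(t^*,b_\delta(t^*))=\delta$ and $(t^*,b_\delta(t^*))\in\cC$. Because $b_\delta>b\geq 0$ and $w_z>0$ on $\cC$ (Lemma \ref{lem:w_z}), the implicit function theorem applies — one-sidedly at $t^*$ — and yields $b_\delta\in C^1([t^*,T))$ with \eqref{bprimo}; hence $\dot\beta_\delta=e^{\Lambda t}b_\delta(\Lambda+\dot b_\delta/b_\delta)\geq 0$ by Lemma \ref{lem:b'_delta}, so each $\beta_\delta$ is non-decreasing on $[t^*,T]$. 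Letting $\delta\downarrow 0$ and using $b_\delta\downarrow b$ pointwise (the discussion preceding \eqref{bprimo}), $\beta$ is a pointwise limit of non-decreasing functions, hence non-decreasing on $[t^*,T]$; with $\beta\equiv 0\leq\beta|_{[t^*,T]}$ on $[0,t^*)$ this gives monotonicity on $[0,T]$.

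\emph{Positivity on $(t^*,T]$.} Equivalently $b(t)>0$ there. Integrating \eqref{bprimo} and letting $\delta\to 0$ gives $b(s)\leq b(t)e^{\Lambda(t-s)}$ for $t^*<s<t<T$, so $\{b=0\}$ is a down-set; by \eqref{eq:sC} it remains to exclude $t_1:=\inf\{t\in[0,T):b(t)>0\}>t^*$. Assume $t_1>t^*$; then $b\equiv 0$ on $[0,t_1)$, hence $[0,t_1)\times(0,\infty)\subseteq\cC$ and $\tau^*_{t,z}\geq t_1-t$ for every $z>0$ and $t\in(t^*,t_1)$. Fix such a $t$; by Assumption \ref{A2} and continuity of $f$ one has $f\leq-\delta_1<0$ on $[t,t_1]$ for some $\delta_1>0$. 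For $z_1\in(0,1)$ (to be chosen) and $z\in(0,z_1)$, set $\sigma:=\inf\{s:Z^z_s\geq z_1\}\wedge(t_1-t)\leq\tau^*_{t,z}$; on $[0,\sigma)$, $Z^z<z_1<1$, so $H(t+\cdot,Z^z_\cdot)=f(t+\cdot)\leq-\delta_1$. Optional sampling of the martingale $(U^{t,z}_{r\wedge\tau^*_{t,z}})_r$ from \eqref{eq:UD} at $\sigma$ then gives
\[
w(t,z)=\E\Big[D^t(\sigma)w(t+\sigma,Z^z_\sigma)+\int_0^\sigma D^t(u)H(t+u,Z^z_u)\,\ud u\Big]\leq\E\big[D^t(\sigma)w(t+\sigma,Z^z_\sigma)\big]-\delta_1 D_{\min}\E[\sigma],
\]
where $D_{\min}>0$ is a lower bound for $D^t$ on $[0,T]$. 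On $\{\sigma<t_1-t\}$, path continuity gives $Z^z_\sigma=z_1$, so $w(t+\sigma,Z^z_\sigma)\leq\max_{r\in[0,T]}w(r,z_1)$; on $\{\sigma=t_1-t\}$, $Z^z_{t_1-t}<z_1$ so $w(t+\sigma,Z^z_\sigma)\leq w(t_1,z_1)$. Since $\P(\sigma<t_1-t)=\P(\sup_{0\leq s\leq t_1-t}Z^z_s\geq z_1)\to 0$ by Markov's inequality and $\E[\sigma]\to t_1-t$ as $z\downarrow 0$ (with $z_1$ fixed), $\limsup_{z\downarrow 0}w(t,z)\leq w(t_1,z_1)-\delta_1 D_{\min}(t_1-t)$; but $w(t_1,z_1)$ can be made $<\delta_1 D_{\min}(t_1-t)$ by choosing $z_1$ small — either $t_1=T$, where $w(T,z_1)=0$, or $t_1<T$, where $w(t_1,0+)=0$ by Lemma \ref{lem:wincrconv}. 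This contradicts $w\geq 0$, so $t_1=t^*$. This is the main obstacle: a crude estimate only gives $w(t,z)\to 0$ as $z\downarrow 0$, not $w(t,z)=0$ near $z=0$, and the point is to retain the genuinely negative term $-\delta_1 D_{\min}\E[\sigma]$, which survives the limit because $\sigma\to t_1-t>0$ while the terminal term $w(t_1,z_1)$ can be pre-set smaller than it.

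\emph{Upper bound, limit at $T$, right-continuity.} If $t^*=T$ the remaining claims are trivial, so let $t^*<T$. By monotonicity $\lim_{t\uparrow T}\beta(t)$ exists and equals $e^{\Lambda T}\lim_{t\uparrow T}b(t)$; since this limit exists and $\limsup_{t\to T}b(t)=1$ (Proposition \ref{prop:limsupT}), we get $\lim_{t\uparrow T}b(t)=1$, hence $\lim_{t\uparrow T}\beta(t)=\beta(T)=e^{\Lambda T}$, and monotonicity yields $\beta\leq e^{\Lambda T}$ on $[0,T]$. Finally, $\cD=\{w=0\}\cup(\{T\}\times(0,\infty))$ is relatively closed in $[0,T]\times(0,\infty)$ by continuity of $w$, which forces $t\mapsto b(t)$ — hence $t\mapsto\beta(t)$ — to be upper semicontinuous; an upper semicontinuous non-decreasing function is right-continuous, completing the proof.
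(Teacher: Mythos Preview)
Your proof is correct and follows the same overall architecture as the paper's: monotonicity via the penalised boundaries $b_\delta$ and Lemma~\ref{lem:b'_delta}, the limit at $T$ and the upper bound via Proposition~\ref{prop:limsupT}, right-continuity via closedness of $\cD$, and positivity on $(t^*,T]$ by contradiction using $f<0$ there.

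The one substantive difference is the positivity step. Your optional-sampling argument with the localisation at level $z_1$ works, but it is more elaborate than needed. The paper bounds $w(t,z)$ directly from the formulation \eqref{valuew}: since $b\equiv 0$ on $[t^*,t_1]$ forces $\tau^*_{t,z}\geq t_1-t$, one gets for $t\in(t^*,t_1)$
\[
w(t,z)\leq \int_0^{t_1-t} e^{-cs}\,{_sp_t}\,f(t+s)\,\ud s + \mu_M(T-t)\,z\,\E\Big[\sup_{0\le s\le T-t}Z^1_s\Big] + z\,\E\big[Z^1_{T-t}\big].
\]
The first term is a strictly negative constant independent of $z$, while the other terms are $O(z)$; so $w(t,z)<0$ for small $z$, giving the contradiction in one line. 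Your route reaches the same conclusion but passes through the terminal value $w(t_1,z_1)$ and a two-parameter limit $(z_1,z)\downarrow(0,0)$, which is unnecessary once one sees that the negative integral already dominates. What your argument buys is that it uses only the abstract martingale property \eqref{eq:UD} rather than the specific structure of the payoff; the paper's argument is shorter because it exploits that structure.
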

\begin{proof}
Since $b(t)=0$ for every $t\in[0,t^*)$ (cf.\ \eqref{eq:sC}), it immediately follows from \eqref{eq:beta} that also $\beta(t)=0$ for every $t\in[0,t^*)$. For concreteness and with no loss of generality we take $t^*\in[0,T)$. For every $\delta>0$ and $t^*< s\leq t\leq T$, by Lemma \ref{lem:b'_delta}, we obtain
\[
\log b_\delta(t)-\log b_\delta(s)=\int_s^t \frac{\dot b_\delta(\theta)}{b_\delta(\theta)}\ud \theta\geq -\Lambda (t-s).
\]
Hence, $b_\delta(t)\geq b_\delta(s) \exp(-\Lambda(t-s) )$ and, by letting $\delta\to 0$, we obtain
\begin{equation}\label{eq:b>0}
	\beta(t):=e^{\Lambda t}b(t)\geq b(s)e^{\Lambda s}=:\beta(s),
\end{equation}
which yields monotonicity of $\beta$. Then, the limit $\beta(T):=\lim_{t\uparrow T}\beta(t)$ is well-defined and it must be $\beta(T)=e^{\Lambda T}$, due to Proposition \ref{prop:limsupT}. Combining with monotonicity we obtain $0\le \beta(t)\le e^{\Lambda T}$ for all $t\in[0,T]$.

Next we want to show right-continuity of $\beta$. Fix $t\in[0,T)$ and let $(t_n)_{n\in\bN}\subseteq(t,T)$ be such that $t_n\downarrow t$ as $n\to \infty$. By monotonicity of $\beta$ the right-limit $\beta(t+)=\lim_{n\to\infty}\beta(t_n)$ is well-defined. Since $b(t)=\beta(t)e^{-\Lambda t}$, then also the right-limit $b(t+)=\lim_{n\to\infty}b(t_n)$ is well-defined. Now, $\cD\ni(t_n,b(t_n))\to (t,b(t+))\in\cD$ as $n\to\infty$ because $\cD$ is closed ($w$ is continuous). Then, $(t,b(t+))\in\cD$ implies $b(t+)\leq b(t)$. That translates into $\beta(t+)\le \beta(t)$ and, since monotonicity guarantees $\beta(t+)\geq\beta(t)$, we conclude $\beta(t+)=\beta(t)$.
	
It only remains to prove that $\beta(t)>0$ for all $t\in(t_*,T]$. Assume by contradiction that there exists $t_1\in(t^*,T]$ such that $\beta(t_1)=0$. By monotonicity we then have $\beta(t)=0$ (and therefore $b(t)=0$) for every $t\in[t^*,t_1]$. The latter implies that for any $t\in[t^*,t_1)$ and $z>0$, we have $\tau^*_{t,z}\geq t_1-t$, $\P$-a.s. Thus, using $f(s)< 0$ for $s\in(t^*,T]$ we obtain
	\begin{align*}
		w(t,z)&= \E\Big[\int_0^{\tau^*_{t,z}}e^{-cs}{_s p_t} f(t+s)\ud s+\int_0^{\tau^*_{t,z}}e^{-cs} {_s p_t}\mu(t+s)\big(Z^z_s-1\big)^+\ud s\Big]\\
		&\quad+\E\Big[\ind_{\{\tau^*_{t,z}=T-t \}} e^{-c(T-t)} {_{T-t}p_t}\big(Z^z_{T-t}-1\big)^+\Big]\\
		&\leq \int_0^{t_1-t} e^{-cs}\, _sp_t f(t+s)\ud s+ \mu_M (T-t)z\E\Big[\sup_{0\leq s\leq T-t}Z^1_s\Big]+z\E\big[Z^1_{T-t}\big],
	\end{align*}
	where $\mu_M:=\sup_{s\in[0,T]}\mu(s)$. Since the first term on the right-hand side of the inequality is strictly negative and independent of $z$, we can choose $z>0$ sufficiently small to guarantee $w(t,z)<0$. That is impossible, because $w\ge 0$. Hence, we reached a contradiction and it must be $\beta(t_1)>0$ (equivalently $b(t_1)>0$).
\end{proof}

\begin{remark}
\label{remarkt*}
The above proposition implies, in particular, that $b(t)>0$ for every $t\in(t^*,T]$. The latter indicates that an insurer can totally remove the incentive for an early surrender {\em if and only if} the penalty function $k(t)$ is such that $t_*\ge T$. That corresponds to $f(t)\geq 0$ for every $t\in[0,T]$, according to \eqref{ftilde}. 

For the benchmark penalty function \eqref{kt} we have
$f(t)=\mu(t)k(t)+(K-c)e^{-K(T-t)}$. Thus, early surrender is never optimal {\em if and only if} $K\ge c$. In other words, the surrender option is worthless, unless $K<c$. 
\end{remark}

Left-continuity of $t\mapsto\beta(t)$, for $t\neq t^*$, will be obtained in Section \ref{sect:bCont}. For that, we first need to prove smoothness of the value function in the next section. Left-continuity of $t\mapsto\beta(t)$ will in turn imply its continuity and, thus, also continuity of $t\mapsto b(t)$ (Theorem \ref{thm:bCont}). In Figure \ref{fig2} we plot the boundaries $b(\cdot)$ and $\beta(\cdot)$ that correspond to $\ell_{x_0}(\cdot)$ on the left panel of Figure \ref{fig4} for $c=3.5\%$ (recall the relation $\ell_{x_0}(t)=x_0e^{gt}/b(t)$). We can see that although $b$ is not monotonic $\beta$ is increasing, in keeping with our theoretical analysis. 

\begin{figure}[ht]
\centering
\includegraphics[scale=0.5]{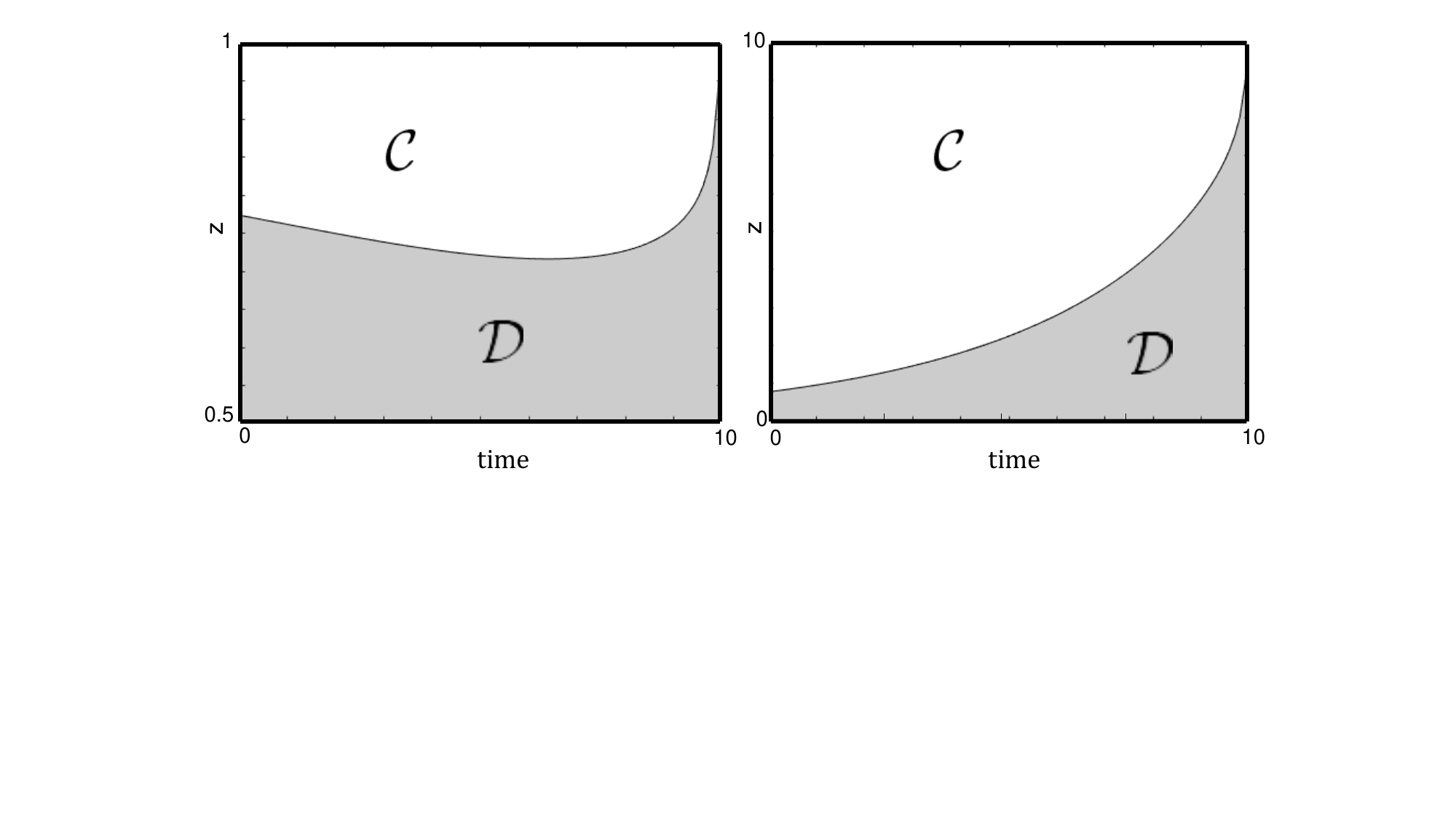}
\vspace{-4.5cm}
\caption{The optimal surrender boundaries $t\mapsto b(t)$ (left plot) and $t\mapsto \beta(t)$ (right plot).}
\label{fig2}
\end{figure}

\section{Continuous differentiability of the value function}\label{sec:wSmooth}

Given a set $A\subseteq[0,T]\times(0,\infty)$, let $\bar A=\mathrm{cl}(A)$ denote its closure. We say that a function $f:[0,T]\times(0,\infty)\to \R$ belongs to $C(\bar A\cup \bar A^c)$ if the function is continuous separately on $\bar A$ and on $\bar A^c:=\mathrm{cl}(A^c)$, but it may be discontinuous at the boundary $\partial A$.

For $\eps>0$, let 
\begin{align}\label{eq:CepsDeps}
\bar{\cC_\eps}:=\bar{\cC}\cap\big([0,T-\eps]\times(0,\infty)\big)\quad\text{ and }\quad\cD_\eps:=\cD\cap\big([0,T-\eps]\times(0,\infty)\big).
\end{align}
By means of monotonicity and right-continuity of $t\mapsto\beta(t)$, in this section we show that 
\[
w\in C^1\big([0,T)\times(0,\infty)\big) \quad \text{and} \quad w_{zz}\in C\big(\bar{\cC_\eps}\cup\cD_\eps\big),
\]
for every $\eps>0$ (cf.\ Proposition \ref{prop:wC1}). 
Lemma \ref{lem:w_tUpBound} provides an upper bound for $w_t$. In Lemma \ref{lem:w_tLowBound} we are going to obtain a lower bound but first we need a preliminary estimate in Lemma \ref{lem:EuAmCall}. We are going to use $\Phi(z,T-t,\cdot)$ for the probability density function of $Z^z_{T-t}$ (cf.\ \eqref{eq:Phi}) and we recall the notation for $D^t(s)$ from \eqref{eq:UD}. 

For $\eps\in(0,\frac{T-t}{2})$, let us define
\begin{align}\label{eq:theta}
\begin{aligned}
\Theta(t,z,\eps)&:=\E\Big[\ind_{\{\tau^*_{t,z}\geq T-t-\eps\}}D^{t+\eps}(T\!-\!t\!-\!\eps)\big(Z^z_{T-t-\eps}-1\big)^+\Big]\\
	&\quad-\E\Big[\ind_{\{\tau^*_{t,z}= T-t\}}D^{t}(T\!-\!t)\big(Z^z_{T-t}-1\big)^+\Big].
\end{aligned}
\end{align}	
The next lemma gives us a lower bound on $\Theta$.
\begin{lemma}\label{lem:EuAmCall}
For every $(t,z)\in[0,T)\times(0,\infty)$ and $\eps\in(0,\frac{T-t}{2})$, we have $\Theta(t,z,\eps)\ge -c(t,z)\eps$,
for some $c(t,z)\ge 0$ that can be chosen bounded in $[0,T\!-\!n^{-1}]\times[n^{-1},n]$ for any $n\ge 1/T$.
Moreover, in the limit	
	\begin{align}\label{eq:liminftheta}
	\liminf_{\eps\to 0}\big[\eps^{-1}\Theta(t,z,\eps)\big]\geq -D^{t}(T\!-\!t)\Big(\tfrac{\sigma^2}{2}\Phi(z,T-t,1)+|\alpha|e^{|\alpha|(T-t)}z\Big).
	\end{align}
\end{lemma}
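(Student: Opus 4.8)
The plan is to split $\Theta(t,z,\eps)$ into two pieces and estimate each separately. Write $\tau_*:=\tau^*_{t,z}$ for brevity. On the event $\{\tau_*=T-t\}$ one has $\tau_*\ge T-t-\eps$, so
\[
\Theta(t,z,\eps)=\underbrace{\E\Big[\ind_{\{\tau_*= T-t\}}\Big(D^{t+\eps}(T\!-\!t\!-\!\eps)\big(Z^z_{T-t-\eps}-1\big)^+-D^{t}(T\!-\!t)\big(Z^z_{T-t}-1\big)^+\Big)\Big]}_{=:\Theta_1}+\underbrace{\E\Big[\ind_{\{T-t-\eps\le\tau_*< T-t\}}D^{t+\eps}(T\!-\!t\!-\!\eps)\big(Z^z_{T-t-\eps}-1\big)^+\Big]}_{=:\Theta_2}.
\]
The term $\Theta_2$ is nonnegative, so it is harmless for the lower bound; I will simply discard it. First I would show it is actually $o(\eps)$ (so that the $\liminf$ computation is clean), by noting that on $\{T-t-\eps\le\tau_*<T-t\}$ we are close to maturity, where $b(\cdot)$ is close to $1$ by Proposition \ref{prop:limsupT}; combined with $(Z^z_{T-t-\eps}-1)^+$ being small on that event (it vanishes when $Z^z_{T-t-\eps}<1$, and by Corollary \ref{lem:Z>1} / continuity of paths $Z^z_{\tau_*}$ is near $b(\tau_*)\approx 1$), one gets $\Theta_2\le C(t,z)\,\eps\,\P(T-t-\eps\le\tau_*<T-t)=o(\eps)$ via dominated convergence and path continuity. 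Even without the sharp $o(\eps)$, $\Theta_2\ge 0$ already suffices for both claimed inequalities.

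The heart of the matter is $\Theta_1$. Here I would condition on $\cF_{T-t-\eps}$ and use the Markov property: with $\varphi(s,y):=\E[(N^y_{T-s}-1)^+]$ the zero-rate European call price (introduced just before \eqref{eq:PDEvarphi}), one has $\E[(Z^z_{T-t}-1)^+\mid\cF_{T-t-\eps}]=\varphi(T-\eps,Z^z_{T-t-\eps})$ while $(Z^z_{T-t-\eps}-1)^+=\varphi(T,Z^z_{T-t-\eps})$. Also $D^{t+\eps}(T-t-\eps)/D^t(T-t)=e^{c\eps+\int_0^\eps\mu(t+\theta)\,\ud\theta}=1+O(\eps)$. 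Therefore
\[
\Theta_1=\E\Big[\ind_{\{\tau_*=T-t\}}D^t(T\!-\!t)\Big(\big(\varphi(T,Z^z_{T-t-\eps})-\varphi(T-\eps,Z^z_{T-t-\eps})\big)+O(\eps)\,\varphi(T,Z^z_{T-t-\eps})\Big)\Big]=\E\Big[\ind_{\{\tau_*=T-t\}}D^t(T\!-\!t)\Big(-\!\int_{T-\eps}^T\!\!\varphi_t(s,Z^z_{T-t-\eps})\,\ud s\Big)\Big]+O(\eps),
\]
using $\varphi\in C^{1,2}([0,T)\times(0,\infty))$ and $|\varphi_y|\le c_\alpha$, $0\le\varphi\le c_\alpha y$. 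Now invoke the PDE \eqref{eq:PDEvarphi}: $-\varphi_t(s,y)=\tfrac{\sigma^2y^2}{2}\varphi_{yy}(s,y)+\alpha y\varphi_y(s,y)\ge -|\alpha|c_\alpha y$ since $\varphi_{yy}\ge 0$. This gives the crude bound $\Theta_1\ge -C(t,z)\eps$, and relabelling the constants (which are manifestly bounded on $[0,T-n^{-1}]\times[n^{-1},n]$) yields the first assertion $\Theta(t,z,\eps)\ge -c(t,z)\eps$. For the refined \eqref{eq:liminftheta} I would not bound $\varphi_{yy}$ away but instead pass to the limit directly: divide by $\eps$, and use that as $\eps\downarrow0$, $\tfrac1\eps\int_{T-\eps}^T(-\varphi_t)(s,y)\,\ud s\to(-\varphi_t)(T-,y)$ for each fixed $y$, together with $\ind_{\{\tau_*=T-t\}}\subseteq\{Z^z_{T-t}\ge 1\}$ (Corollary \ref{lem:Z>1}). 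On $\{Z^z_{T-t}>1\}$ the terminal condition $\varphi(T,y)=(y-1)^+$ is smooth, forcing $\varphi_{yy}(T-,y)=0$ there; the Dirac-type contribution of $\varphi_{yy}$ at $y=1$ produces exactly the term $\tfrac{\sigma^2}{2}\Phi(z,T-t,1)$ after taking expectation against the density of $Z^z_{T-t-\eps}$ (which converges to $\Phi(z,T-t,\cdot)$). Collecting the pieces, $\liminf_{\eps\to0}\eps^{-1}\Theta\ge -D^t(T-t)\big(\tfrac{\sigma^2}{2}\Phi(z,T-t,1)+|\alpha|e^{|\alpha|(T-t)}z\big)$, using $\E[\ind_{\{\tau_*=T-t\}}\alpha Z^z_{T-t}\varphi_y]\le |\alpha|c_\alpha\E[Z^z_{T-t}]\le|\alpha|e^{|\alpha|(T-t)}z$.

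The main obstacle is making rigorous the passage to the limit inside the expectation for \eqref{eq:liminftheta}: $\varphi_t$ blows up near $(T,1)$ (it behaves like a heat kernel against a kink), so Fatou/dominated convergence must be applied with care — one localises away from $y=1$ using $\varphi_{yy}\ge0$ there and $\varphi_t\to0$, and captures the contribution of a neighbourhood of $y=1$ through the explicit Gaussian structure, controlling it by the smooth density $\Phi(z,T-t-\eps,\cdot)\to\Phi(z,T-t,\cdot)$ of the diffusion at the non-degenerate time $T-t-\eps$ bounded below. I would handle this by writing $-\varphi_t(s,y)=\tfrac{\sigma^2y^2}{2}\varphi_{yy}(s,y)+\alpha y\varphi_y(s,y)$, integrating the $\varphi_{yy}$-term by parts in $y$ against the (smooth, bounded) density of $Z^z_{T-t-\eps}$ restricted to $\{Z^z_{T-t}\ge1\}$, which transfers derivatives onto the density and removes the singularity, then letting $\eps\to0$ with dominated convergence and finally identifying the boundary term at $y=1$ with $\tfrac{\sigma^2}{2}\Phi(z,T-t,1)$.
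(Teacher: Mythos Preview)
Your decomposition $\Theta=\Theta_1+\Theta_2$ with $\Theta_2\ge 0$ is fine, but the treatment of $\Theta_1$ has a genuine measurability gap. You write
\[
\E\big[\ind_{\{\tau_*=T-t\}}(Z^z_{T-t}-1)^+\big]=\E\big[\ind_{\{\tau_*=T-t\}}\varphi(T-\eps,Z^z_{T-t-\eps})\big]
\]
via conditioning on $\cF_{T-t-\eps}$, but the event $\{\tau_*=T-t\}$ is \emph{not} $\cF_{T-t-\eps}$-measurable: whether the process hits the boundary in $[T-t-\eps,T-t)$ is unknown at time $T-t-\eps$. So the tower property does not let you pull the indicator out, and the identity above fails. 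Everything downstream (the integral of $\varphi_t$, the integration by parts in $y$, the identification of the $\tfrac{\sigma^2}{2}\Phi(z,T-t,1)$ term) rests on this incorrect step.

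The paper circumvents exactly this obstruction: it first enlarges the indicator in the \emph{negative} term from $\ind_{\{\tau_*=T-t\}}$ to $\ind_{\{\tau_*\ge T-t-\eps\}}$ (which gives an upper bound on that term, hence a lower bound on $\Theta$), and now the indicator \emph{is} $\cF_{T-t-\eps}$-measurable so the Markov conditioning is legitimate, yielding $v^E(T-\eps,Z^z_{T-t-\eps})$. Then the paper makes a second move you do not have: it replaces $v^E$ by the American call $v^A\ge v^E$, and since $v^A(s,\lambda)\ge(\lambda-1)^+$ for all $\lambda$, the indicator can be dropped altogether. This leaves an integral against the full density $\Phi(z,T-t-\eps,\cdot)$, on which integration by parts (with the explicit boundary conditions $v^A(s,b^A(s))=b^A(s)-1$ and $v^A_z(s,b^A(s))=1$) is clean, and the limit $b^A(T)=1$ produces the $\tfrac{\sigma^2}{2}\Phi(z,T-t,1)$ term. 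Your proposal to integrate by parts ``against the density of $Z^z_{T-t-\eps}$ restricted to $\{Z^z_{T-t}\ge 1\}$'' is not workable as stated, because that is neither a smooth density in $y$ nor the right conditioning for $\Theta_1$; the American-call trick is precisely what makes the density unrestricted and the analysis tractable.
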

\begin{proof}
	Let $\eps\in(0,\frac{T-t}{2})$ and, to simplify notation, let  
\begin{align*}
h_t(\eps):=e^{-\int_0^{T-t-\eps}\mu(t+\eps+\theta)\ud\theta} \quad\text{and}\quad t_\eps:=t+\eps.
\end{align*}

Notice that $e^{-c(T-t)}h_t(0)=D^t(T-t)$ and the second term in the expression \eqref{eq:theta} can be bounded from above as follows: 
\begin{align*}
\E\Big[\ind_{\{\tau^*_{t,z}= T-t\}}e^{-c(T-t)}h_t(0)\big(Z^z_{T-t}-1\big)^+\Big]\leq \E\Big[\ind_{\{\tau^*_{t,z}\geq T-t_\eps\}}e^{-c(T-t)}h_t(0)\big(Z^z_{T-t}-1\big)^+\Big].
\end{align*}
Further, denoting $\E_{t,z}[\cdot]=\E[\cdot|Z_t=z]$ and using the Markov property and time-homogeneity of $Z$ we obtain
\begin{align*}
&\E\Big[\ind_{\{\tau^*_{t,z}\geq T-t_\eps\}}e^{-c(T-t)}h_t(0)\big(Z^z_{T-t}-1\big)^+\Big]\\
&=\E\Big[\ind_{\{\tau^*_{t,z}\geq T-t_\eps \}}e^{-c(T-t_\eps)}h_t(0)\E\big[e^{-c\eps}\big(Z^{z}_{T-t}-1\big)^+\big|\cF_{T-t_\eps}\big]\Big]\\
&=\E\Big[\ind_{\{\tau^*_{t,z}\geq T-t_\eps \}}e^{-c(T-t_\eps)}h_t(0)\E_{T-t_\eps,Z^z_{T-t_\eps}}\big[e^{-c\eps}\big(Z_{T-t}-1\big)^+\big]\Big]\\
&=\E\Big[\ind_{\{\tau^*_{t,z}\geq T-t_\eps \}}e^{-c(T-t_\eps)}h_t(0)\E_{0,Z^z_{T-t_\eps}}\big[e^{-c\eps}\big(Z_{\eps}-1\big)^+\big]\Big]\\
&= \E\Big[\ind_{\{\tau^*_{t,z}\geq T-t_\eps \}}e^{-c(T-t_\eps)}h_t(0)v^E(T-\eps,Z^z_{T-t_\eps}) \Big],
\end{align*}
where
\[
v^E(s,\lambda):=\E\Big[e^{-c(T-s)}\big(Z^\lambda_{T-s}-1\big)^+\Big],
\]
is the price of a European call option with discount rate $c$, maturity $T$, strike price $K=1$ and underlying stock price $Z$. Using the bounds and notations above
\begin{align*}
\begin{aligned}
\Theta(t,z,\eps)&\ge \E\Big[\ind_{\{\tau^*_{t,z}\ge T-t_\eps\}}e^{-c(T-t_\eps)}h_t(\eps)\big(Z^z_{T-t_\eps}-1\big)^+\Big]\\
&\quad-\E\Big[\ind_{\{\tau^*_{t,z}\geq T-t_\eps \}}e^{-c(T-t_\eps)}h_t(0)v^E(T-\eps,Z^z_{T-t_\eps}) \Big],
\end{aligned}
\end{align*}
where we also noticed that $e^{-c(T-t_\eps)}h_t(\eps)=D^{t_\eps}(T-t_\eps)$.

We now add and subtract
\[
e^{-c(T-t_\eps)}h_t(0)\E\Big[\ind_{\{\tau^*_{t,z}\geq T-t_\eps\}}\big(Z^z_{T-t_\eps}-1\big)^+\Big],
\]
on the right-hand side above. That yields,
	\begin{align}\label{eq:EuCall}
	\begin{aligned}
\Theta(t,z,\eps)	&\ge e^{-c(T-t_\eps)}\Big(h_t(\eps)-h_t(0)\Big)\E\Big[\ind_{\{\tau^*_{t,z}\geq T-t_\eps\}}\big(Z^z_{T-t_\eps}-1\big)^+\Big]\\
	&+e^{-c(T-t_\eps)}h_t(0)\E\Big[\ind_{\{\tau^*_{t,z}\geq T-t_\eps\}}\big((Z^z_{T-t_\eps}-1)^+-v^E(T-\eps,Z^z_{T-t_\eps}) \big)\Big]\\
	&\geq e^{-c(T-t_\eps)}h_t(0)\E\Big[\ind_{\{\tau^*_{t,z}\geq T-t_\eps\}}\big((Z^z_{T-t_\eps}-1)^+-v^E(T-\eps,Z^z_{T-t_\eps}) \big)\Big],
\end{aligned}	
	\end{align}
	where for the final inequality we use the simple observation $h_t(\eps)\geq h_t(0)$. To simplify the rest of the notation we set $\bar \Theta(t,z,\eps)=e^{c(T-t_\eps)}(h_t(0))^{-1}\Theta(t,z,\eps)$.
	Since the price of the American call option is greater than or equal to the price of the European call option, we obtain
	\begin{align}\label{eq:lbt0}
	\bar \Theta(t,z,\eps)\geq \E\Big[\ind_{\{\tau^*_{t,z}\geq T-t_\eps\}}\big((Z^z_{T-t_\eps}-1)^+-v^A(T-\eps,Z^z_{T-t_\eps}) \big)\Big],
	\end{align}
	where
\[
v^A(s,\lambda):=\sup_{0\leq \tau\leq T-s}\E\Big[e^{-c\tau}\big(Z^\lambda_\tau-1\big)^+\Big].
\]
Since $v^A(s,\lambda)\geq (\lambda-1)^+$, we can drop the indicator function on the right-hand side of \eqref{eq:lbt0}. Then 
	\begin{align}\label{eq:lbt1}
		\begin{aligned}
\bar \Theta(t,z,\eps)&\geq\E\Big[(Z^z_{T-t_\eps}-1)^+-v^A(T-\eps,Z^z_{T-t_\eps}) \Big]\\
	&=\int_0^\infty \big[(\lambda-1)^+-v^A(T-\eps,\lambda)\big]\Phi(z,T-t_\eps,\lambda)\ud\lambda\\
	&=\int_0^\infty \big[v^A(T,\lambda)-v^A(T-\eps,\lambda)\big]\Phi(z,T-t_\eps,\lambda)\ud\lambda,
\end{aligned}	
	\end{align}
	where $\lambda\mapsto \Phi(z,T-t_\eps,\lambda)$ is the probability density function of $Z_{T-t_\eps}^z$, and the last equality holds because $v^A(T,\lambda)=(\lambda-1)^+$. 
	
	It is shown in \cite[Example 17]{deape2020} that the value function of the American put option has continuous derivatives on $[0,T)\times(0,\infty)$. By the same arguments, it follows that $v^A(t,z)\in C^1([0,T)\times(0,\infty))$. Then, if $s\mapsto b^A(s)$ denotes the non-increasing (possibly infinite) optimal stopping boundary for the American call option, we obtain $v_t^A(s,\lambda)=0$ for every $\lambda\geq b^A(s)$. Therefore, continuing from \eqref{eq:lbt1}, we have 
	\begin{align*}
\bar \Theta(t,z,\eps)&\geq \lim_{n\to\infty}\int_0^\infty \big[v^A(T-\eps\wedge\tfrac{1}{n},\lambda)-v^A(T-\eps,\lambda)\big]\Phi(z,T-t_\eps,\lambda)\ud\lambda\\
	&=\lim_{n\to\infty}\int_0^\infty\int_{T-\eps}^{T-\eps\wedge\tfrac{1}{n}} v^A_t(s,\lambda) \Phi(z,T-t_\eps,\lambda)\ud s\ud\lambda\\
	&=\lim_{n\to\infty}\int_{T-\eps}^{T-\eps\wedge\tfrac{1}{n}}\int_0^{b^A(s)} v^A_t(s,\lambda) \Phi(z,T-t_\eps,\lambda)\ud\lambda\ud s,
	\end{align*}
where the limit is taken out of the integral in the first line by dominated convergence, and the order of integration can be swapped by Tonelli's theorem because $v_t^A\le 0$.
 
By the connection between optimal stopping problems and free-boundary problems (cf.\ \cite[Ch.\ III]{peskir2006optimal}) it is well-known that $v_t^A(s,\lambda)=(cv^A-\cL v^A)(s,\lambda)$, for all $(s,\lambda)\in[0,T)\times(0,\infty)$ such that  $\lambda< b^A(s)$,
where $\cL$ is given in \eqref{eq:L}. Continuing from the previous inequality, 	
\begin{align*}
	\bar \Theta(t,z,\eps)&\geq \lim_{n\to\infty}\int_{T-\eps}^{T-\eps\wedge\tfrac{1}{n}}\int_0^{b^A(s)} (cv^A-\cL v^A)(s,\lambda) \Phi(z,T-t_\eps,\lambda)\ud\lambda\ud s.
	\end{align*}
Since we make no assumption on the sign of $c-\alpha$ it may be that $b^A(s)\equiv \infty$ (i.e., $v^A=v^E$).

First we consider the case when $b^A(t)<\infty$ for $t\in[0,T]$. 	
Let $\cL^*$ be the adjoint operator of $\cL$, i.e., $\cL^*\varphi(\lambda):=\frac{\ud^2}{\ud \lambda^2}\big[\tfrac{\sigma^2\lambda^2}{2}\varphi(\lambda) \big]- \frac{\ud}{\ud\lambda}\big[\alpha\lambda\varphi(\lambda) \big]$, for $\varphi\in C^2((0,\infty))$. 	
Applying integration by parts to the integral on the right-hand side above and letting $n\to\infty$, we obtain
\begin{align}\label{eq:IntByParts}
\begin{aligned}
\bar \Theta(t,z,\eps)&\ge \int_{T-\eps}^T\big[1-b^A(s)\big]\psi(z,T-t_\eps,b^A(s))\ud s-\int_{T-\eps}^T \tfrac{\sigma^2}{2} \big[b^A(s)\big]^2 \Phi(z,T-t_\eps,b^A(s))\ud s\\
	&\quad+\int_{T-\eps}^T\int_0^{b^A(s)} v^A(s,\lambda)(c-\cL^*)\Phi(z,T-t_\eps,\lambda)\ud\lambda\ud s,
	\end{aligned}
	\end{align}
	where
\[
\psi(z,s,\lambda):=(\alpha-\sigma^2)\lambda \Phi(z,s,\lambda)-\tfrac{\sigma^2\lambda^2}{2}\Phi_\lambda(z,s,\lambda),
\]
	and we have also used the boundary conditions\footnote{In the case $b^A(s)\equiv\infty$, the first two terms on the right-hand side of  \eqref{eq:IntByParts} vanish because of the exponential decay of the log-normal density. Indeed, for any $p\ge 0$, $\lambda^p \Phi(z,s,\lambda)\to 0$ as $\lambda\to\infty$ and $|v^A_z(s,\lambda)|+|v^A(s,\lambda)|\le c(1+\lambda)$, for any $(s,z)\in [0,T)\times(0,\infty)$.} $v^A(s,b^A(s))=b^A(s)-1$ and $v^A_z(s,b^A(s))=1$. 
	For $\eps\in(0,\frac{T-t}{2})$ it is not difficult to show that the right-hand side of \eqref{eq:IntByParts} is greater than $-c(t,z)\eps$ for some positive constant $c(t,z)>0$, using the explicit form of the log-normal density and the fact that $t<T$.
It is also easy to verify that $c(t,z)\ge 0$ can be chosen bounded in any subset $[0,T-n^{-1}]\times[n^{-1},n]$, $n\ge 1/T$, thus proving the first claim in the lemma.

We now divide \eqref{eq:IntByParts} by $\eps$ and let $\eps\to 0$. Using that $b^A(T)=1$ and $v^A(T,\lambda)=(\lambda-1)^+$, we obtain
\begin{align*}
\liminf_{\eps\to 0}\tfrac{1}{\eps}\bar \Theta(t,z,\eps)\ge -\tfrac{\sigma^2}{2}\Phi(z,T-t,1).
\end{align*}

In the case $b^A(s)=+\infty$ for $s\in[0,T]$ (hence $v^A=v^E$) we use a similar argument with integration by parts combined with the observation that 
\[
\lim_{s\uparrow T}\big(cv^E-\cL v^E\big)(s,\lambda)=\big((c-\alpha)\lambda-c\big) \ind_{[1,\infty)}(\lambda)-\frac{\sigma^2 \lambda^2}{2}\delta_1(\lambda),
\]
in the sense of distributions and with $\delta_1(z)$ denoting the Dirac's delta centred in one. Then
\begin{align*}
	\liminf_{\eps\to 0}\eps^{-1}\bar \Theta(t,z,\eps)&\geq \liminf_{\eps\to 0}\frac{1}{\eps}\int_{T-\eps}^{T}\int_0^{\infty} (cv^A-\cL v^A)(s,\lambda) \Phi(z,T-t_\eps,\lambda)\ud\lambda\ud s\\
	&=\int_0^{\infty} \Big(\big[(c-\alpha)\lambda-c\big] \ind_{[1,\infty)}(\lambda)-\frac{\sigma^2 \lambda^2}{2}\delta_1(\lambda)\Big) \Phi(z,T-t,\lambda)\ud\lambda\\
	&\ge -\alpha\E\big[Z^z_{T-t}\big]-\tfrac{\sigma^2}{2}\Phi(z,T-t,1).
	\end{align*}
Recalling that $\liminf_{\eps\to 0}\eps^{-1}\bar \Theta(t,z,\eps)=e^{c(T-t)}(h_t(0))^{-1}\liminf_{\eps\to 0}\eps^{-1}\Theta(t,z,\eps)$ we conclude the proof of \eqref{eq:liminftheta}.
\end{proof}

Recalling the definitions of $H$ in \eqref{H} and $D^t(s)$ in \eqref{eq:UD}, we are now able to produce a lower bound for the time-derivative $w_t$.

\begin{lemma}\label{lem:w_tLowBound}
	For every $(t,z)\in[0,T)\times(0,\infty)\setminus\partial\cC$ we have 
	\begin{align}\label{eq:w_tgeq}
	\begin{aligned}
	w_t(t,z) &\geq \E\Big[\int_0^{\tau^*_{t,z}}D^t(s)\big(\mu(t)-\mu(t+s)\big)H(t+s,Z^z_s)\ud s\Big]\\
	&\quad+\E\Big[\int_0^{\tau^*_{t,z}}D^t(s)\big( \dot f(t+s)+\dot \mu(t+s)(Z^z_s-1)^+\big)\ud s\Big]\\
	&\quad-\E\big[D^t(T-t)|H(T,Z^z_{T-t})|\big]-D^t(T-t)\Big(\tfrac{\sigma^2}{2}\Phi(z,T-t,1)+|\alpha|e^{|\alpha|(T-t)}z\Big).
	\end{aligned}
	\end{align}
		In particular, for any $\eps>0$,
	\begin{equation}\label{eq:q}
	w_t(t,z)\geq -q(\eps)(1+z), \quad \forall \: (t,z)\in\cC\cap\big([0,T-\eps]\times(0,\infty)\big),
	\end{equation}
	for some constant $q(\eps)>0$ that may explode as $\eps\to 0$.
\end{lemma}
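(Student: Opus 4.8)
The plan is to run the time-perturbation argument of Lemma~\ref{lem:w_tUpBound}, but now perturbing the initial time \emph{forward} and building a sub-optimal stopping rule for $w(t+\eps,z)$ out of the optimal $\tau^*_{t,z}$ for $w(t,z)$. The terminal-layer contribution of this comparison is precisely the quantity $\Theta(t,z,\eps)$ of \eqref{eq:theta}, whose $\liminf$ has been estimated in Lemma~\ref{lem:EuAmCall}; that is the only genuinely delicate ingredient. First one disposes of the trivial case $(t,z)\in\cD\setminus\partial\cC$ with $t<T$: there $\tau^*_{t,z}=0$, so $w_t(t,z)=w(t,z)=0$, while on the right-hand side of \eqref{eq:w_tgeq} the two integral terms vanish and the remaining two are non-positive, so the bound holds. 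Thus fix $(t,z)\in\cC$, where $w\in C^{1,2}(\cC)$ ensures $w_t(t,z)=\lim_{\eps\downarrow 0}\eps^{-1}\big(w(t+\eps,z)-w(t,z)\big)$.

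\textbf{The comparison.} For $\eps\in(0,\tfrac{T-t}{2})$ set $\sigma_\eps:=\tau^*_{t,z}\wedge(T-t-\eps)$, a stopping time admissible for $w(t+\eps,z)$. Writing $\cP$ through $D^t(\cdot)$ and $H$ (recall \eqref{H}, \eqref{eq:UD}, \eqref{setP}) and using $w(t+\eps,z)\ge\E[\cP_{t+\eps,z}(\sigma_\eps)]$ together with optimality of $\tau^*_{t,z}$ for $w(t,z)$ gives
\[
w(t+\eps,z)-w(t,z)\ \ge\ \E\big[\cP_{t+\eps,z}(\sigma_\eps)-\cP_{t,z}(\tau^*_{t,z})\big].
\]
Splitting $\int_0^{\tau^*_{t,z}}=\int_0^{\sigma_\eps}+\int_{\sigma_\eps}^{\tau^*_{t,z}}$ and noting $\{\sigma_\eps=T-t-\eps\}=\{\tau^*_{t,z}\ge T-t-\eps\}$, the right-hand side splits into three pieces: (a) $\E\big[\int_0^{\sigma_\eps}\!\big(D^{t+\eps}(s)H(t+\eps+s,Z^z_s)-D^t(s)H(t+s,Z^z_s)\big)\ud s\big]$; (b) $-\E\big[\ind_{\{\tau^*_{t,z}\ge T-t-\eps\}}\int_{T-t-\eps}^{\tau^*_{t,z}}\!D^t(s)H(t+s,Z^z_s)\ud s\big]$; (c) $\Theta(t,z,\eps)$, by the very definition \eqref{eq:theta}.

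\textbf{Passage to the limit.} Divide by $\eps$ and let $\eps\downarrow 0$. Since $\sigma_\eps\uparrow\tau^*_{t,z}$ (using $\tau^*_{t,z}\le T-t$) and $\partial_t D^t(s)=D^t(s)(\mu(t)-\mu(t+s))$, the integrand's difference quotient in (a) converges to $D^t(s)(\mu(t)-\mu(t+s))H(t+s,Z^z_s)+D^t(s)\big(\dot f(t+s)+\dot\mu(t+s)(Z^z_s-1)^+\big)$, with an $L^1(\ud s\otimes\ud\P)$ domination of the form $C(1+Z^z_s)$ (via $\E[Z^z_s]=ze^{\alpha s}$); dominated convergence then turns $\eps^{-1}\E[(a)]$ into the first two lines of \eqref{eq:w_tgeq}. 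For (b) write $\ind_{\{\tau^*_{t,z}\ge T-t-\eps\}}=\ind_{\{\tau^*_{t,z}=T-t\}}+\ind_{\{T-t-\eps\le\tau^*_{t,z}<T-t\}}$: the second piece contributes at most $C\,\E\big[(1+\sup_{s\le T-t}Z^z_s)\ind_{\{T-t-\eps\le\tau^*_{t,z}<T-t\}}\big]\to 0$ since those events shrink to $\varnothing$; on $\{\tau^*_{t,z}=T-t\}$ continuity of $s\mapsto D^t(s)H(t+s,Z^z_s)$ gives $\eps^{-1}\int_{T-t-\eps}^{T-t}\!D^t(s)H(t+s,Z^z_s)\ud s\to D^t(T-t)H(T,Z^z_{T-t})$, whence $\eps^{-1}\E[(b)]\to-\E\big[\ind_{\{\tau^*_{t,z}=T-t\}}D^t(T-t)H(T,Z^z_{T-t})\big]\ge-\E\big[D^t(T-t)|H(T,Z^z_{T-t})|\big]$, the third term. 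For (c), Lemma~\ref{lem:EuAmCall} gives exactly the fourth term via \eqref{eq:liminftheta}. Passing $\liminf$ through the (finite) sum yields \eqref{eq:w_tgeq}. Then \eqref{eq:q} follows by bounding each term of \eqref{eq:w_tgeq} from below using boundedness of $\mu,\dot\mu,f,\dot f,D^t(\cdot)$ on $[0,T]$, $|H(t,z)|\le C(1+z)$, $\E[Z^z_s]=ze^{\alpha s}\le ze^{|\alpha|T}$, and the elementary estimate $\Phi(z,T-t,1)\le(2\pi\sigma^2\eps)^{-1/2}$ valid for $T-t\ge\eps$ (recall \eqref{eq:Phi}); the first three terms are $\ge-C(1+z)$ with $C$ independent of $\eps$, while the last is $\ge-D^t(T-t)\big(\tfrac{\sigma^2}{2}(2\pi\sigma^2\eps)^{-1/2}+|\alpha|e^{|\alpha|T}z\big)$, so $w_t(t,z)\ge-q(\eps)(1+z)$ with $q(\eps)$ of order $\eps^{-1/2}$ as $\eps\to0$.

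\textbf{Main obstacle.} All of this is routine SDE estimates plus dominated convergence \emph{except} the terminal layer: recognising the boundary terms as $\Theta(t,z,\eps)$ and importing its lower bound from Lemma~\ref{lem:EuAmCall} (itself resting on the $C^1$-regularity of the American-call value and a distributional limit of $cv^E-\cL v^E$ at maturity), together with the check in (b) that the mass of $\tau^*_{t,z}$ accumulating just below $T-t$ is asymptotically negligible and that the remaining contribution produces precisely $-\E[\ind_{\{\tau^*_{t,z}=T-t\}}D^t(T-t)H(T,Z^z_{T-t})]$.
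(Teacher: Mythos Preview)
Your proof is correct and follows essentially the same route as the paper: forward-perturb in time, use $\tau^*_{t,z}\wedge(T-t-\eps)$ as a sub-optimal rule for $w(t+\eps,z)$, identify the boundary layer as $\Theta(t,z,\eps)$ and invoke Lemma~\ref{lem:EuAmCall}, then derive \eqref{eq:q} from \eqref{eq:w_tgeq} via the crude bound on $\Phi(z,T-t,1)$. The only cosmetic difference is in piece (b): the paper immediately replaces $H$ by $|H|$ and extends the integral to $[T-t-\eps,T-t]$ (dropping the indicator), whereas you split on $\{\tau^*_{t,z}=T-t\}$ versus $\{T-t-\eps\le\tau^*_{t,z}<T-t\}$; both are valid and yield the same bound.
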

\begin{proof}
First we show that \eqref{eq:w_tgeq} implies \eqref{eq:q}. By definition of $H$ and continuity of $\mu(t)$, $\dot \mu(t)$, $f(t)$, $\dot f(t)$ (recall Assumption \ref{A1}), it is immediate to check that the absolute values of the integrands in the first two lines of \eqref{eq:w_tgeq} are bounded from above by $c_3(1+Z^z_s)$ for some uniform constant $c_3>0$. Similarly the third term can be bounded from above with $c_3(1+Z^z_{T-t})$. Then, taking expectations and changing $c_3>0$ to a different constant we have that the first three terms on the right-hand side of \eqref{eq:w_tgeq} are bounded from below by $-c_3(1+z)$. The explicit form of the log-normal density allows us to write $0\le \Phi(z,T-t,1)\le \tilde q(T-t)$, where $\tilde q(s)\uparrow \infty$ if $s\to 0$. Then, in conclusion
\[
w_t(t,z)\geq -\big(c_3+\tfrac{\sigma^2}{2}\tilde q(T-t)\big)(1+z)
\]
and \eqref{eq:q} follows with $q(\eps)=c_3+|\alpha|e^{|\alpha|T}+\frac{\sigma^2}{2}\tilde q(\eps)$.

It remains to prove \eqref{eq:w_tgeq}. The claim is trivial for $(t,z)\in\text{int}(\cD)$ because $\tau^*_{t,z}=0$ in that case and $w_t(t,z)=0$. Let $(t,z)\in\cC$ and $\eps\in(0,T-t)$. Set $t_\eps=t+\eps$. Since $\tau^*_{t,z}\wedge(T\!-\!t_\eps)$ is sub-optimal but admissible for the problem starting at $(t_\eps,z)$, we have
	\begin{align*}
\begin{aligned}
	&w(t+\eps,z)-w(t,z)\\
	&\geq\E\Big[\int_0^{\tau^*_{t,z}\wedge(T-t_\eps)}\Big(D^{t_\eps}(s)H(t_\eps+s,Z^z_s)-D^t(s)H(t+s,Z_s^z)\Big)\ud s \Big]\\
	&\quad-\E\Big[\ind_{\{\tau^*_{t,z}\geq T-t_\eps \}}\int_{T-t_\eps}^{\tau^*_{t,z}}e^{-cs-\int_0^s\mu(t+\theta)\ud\theta}H(t+s,Z^z_s)\ud s\Big]\\
	&\quad+\E\Big[\ind_{\{\tau^*_{t,z}\geq T-t_\eps\}}D^{t_\eps}(T-t_\eps)\big(Z^z_{T-t_\eps}-1\big)^+-\ind_{\{\tau^*_{t,z}= T-t\}}D^t(T-t)\big(Z^z_{T-t}-1\big)^+\Big].
\end{aligned}	
	\end{align*}
Notice that in the last line of the expression above we have exactly $\Theta(t,z,\eps)$ defined in \eqref{eq:theta}. Moreover, in the second line we can replace $H(t+s,Z^z_s)$ with $|H(t+s,Z^z_s)|$ and keep the same inequality by virtue of the minus sign in front of the expression. Then, dividing both sides by $\eps$, letting $\eps\to0$ and using Lemma \ref{lem:EuAmCall}, we obtain \eqref{eq:w_tgeq}. 
\end{proof}

In order to prove continuous differentiability of the value function we need the next lemma. Its proof requires the introduction of a new coordinate system and we postpone it at the end of the section for the ease of exposition.
\begin{lemma}\label{lem:tau_nTo0}
	Let $(t,z)\in\partial\cC$ with $t<T$ and let $(t_n,z_n)_{n\in\bN}\subseteq\cC$ be such that $\lim_{n\to\infty}(t_n,z_n)=(t,z)$. Then, $\lim_{n\to\infty}\tau^*_{t_n,z_n}=0$, $\P$-a.s.
\end{lemma}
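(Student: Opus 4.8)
The plan is to remove the obstruction caused by the non-monotonicity of $b$ by passing to the coordinates in which the transformed boundary $\beta$ of \eqref{eq:beta} is monotone, and then to argue pathwise using the oscillation of the driving geometric Brownian motion near time $0$.

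First I would reformulate the stopping time $\tau^*_{t,z}$ in the variable $y := e^{\Lambda t}z$, with $\Lambda\ge 0$ as in Lemma \ref{lem:b'_delta}. Since $Z^z_u = z\,e^{(\alpha-\sigma^2/2)u+\sigma W_u}$, set $R_u := e^{(\alpha+\Lambda-\sigma^2/2)u+\sigma W_u}$, so that $R$ is a geometric Brownian motion with $R_0=1$ $\P$-a.s., and note that $Y^y_u := e^{\Lambda(t+u)}Z^z_u = y\,R_u$. Using $\beta(s)=e^{\Lambda s}b(s)$ we get $Z^z_u\le b(t+u)\iff Y^y_u\le\beta(t+u)$, hence
\[
\tau^*_{t,z} = \inf\{u\in[0,T-t) : Y^y_u\le\beta(t+u)\}\wedge(T-t) =: \tilde\tau_{t,y},
\]
so that in the variable $y$ the stopping set lies below the \emph{non-decreasing} boundary $\beta$ (Proposition \ref{prop:betamonot}). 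Translating the hypotheses: put $y_n := e^{\Lambda t_n}z_n\to y$, with $y\in(0,\infty)$; then $y_n>\beta(t_n)$, while $(t,z)\in\partial\cC\subseteq\cD$ (as $\cD$ is closed and $\cC^c=\cD$) gives $0<y\le\beta(t)$. The goal becomes $\tilde\tau_{t_n,y_n}\to 0$ $\P$-a.s.

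Next I would isolate the only probabilistic input: $\P(A)=1$, where $A:=\{\omega:\text{for every }\eps>0\text{ there is }u\in(0,\eps)\text{ with }R_u(\omega)<1\}$. Indeed $A$ belongs to the germ $\sigma$-field $\bigcap_{\eps>0}\cF_\eps$, so $\P(A)\in\{0,1\}$ by Blumenthal's $0$--$1$ law; and $\P(A)>0$ because $R_u<1\iff W_u<-\tfrac{\alpha+\Lambda-\sigma^2/2}{\sigma}\,u$, and the law of the iterated logarithm forces $W_u$ to take values below $-\tfrac{\alpha+\Lambda-\sigma^2/2}{\sigma}\,u$ along a sequence $u\downarrow 0$ (the linear term being negligible against $\sqrt{2u\log\log(1/u)}$). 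Hence $\P(A)=1$.

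Finally, I would fix $\omega\in A$ and $\delta\in(0,T-t)$, choose $u^* = u^*(\omega,\delta)\in(0,\delta)$ with $R_{u^*}(\omega)<1$, and set $\eta^* := 1-R_{u^*}(\omega)>0$. For $n$ large, $t_n+u^*\in(t,T)$, so monotonicity of $\beta$ yields $\beta(t_n+u^*)\ge\beta(t)\ge y$; moreover $y_n R_{u^*}(\omega)=y_n(1-\eta^*)\to y(1-\eta^*)<y$, so for $n$ large $Y^{y_n}_{u^*}(\omega)=y_nR_{u^*}(\omega)<y\le\beta(t_n+u^*)$, i.e.\ $(t_n+u^*,Y^{y_n}_{u^*}(\omega))\in\tilde\cD$, whence $\tilde\tau_{t_n,y_n}(\omega)\le u^*<\delta$. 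Thus $\limsup_n\tilde\tau_{t_n,y_n}(\omega)\le\delta$, and letting $\delta\downarrow 0$ gives $\tilde\tau_{t_n,y_n}(\omega)\to 0$; since $\P(A)=1$ this proves the lemma. The main obstacle here is conceptual rather than technical: without the transformation the non-monotonicity of $b$ destroys any comparison between $b(t_n+u^*)$ and $b(t)$ and the argument collapses; a secondary point of care is that the proof should use only $y\le\beta(t)$ (not equality) and monotonicity of $\beta$, since continuity of $\beta$ is not yet available at this stage.
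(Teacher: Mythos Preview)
Your argument is correct and shares the paper's key insight---pass to the $y$-coordinate where the boundary $\beta$ is monotone and use the oscillation of the driving geometric Brownian motion near zero---but the execution is more direct than the paper's. The paper first proves an auxiliary result (Lemma~\ref{lem:tau*=tau'}) showing, via a strong Markov property argument, that $\tau^*_{t,z}$ coincides $\P$-a.s.\ with the hitting time $\rho^*_{t,z}$ of the \emph{interior} of $\cD$; it then uses that $\rho^*_{t,z}=0$ for $(t,z)\in\partial\cC$ together with openness of $\mathrm{int}(\cD)$ to transfer the conclusion to nearby initial conditions $(t_n,z_n)$. Your route bypasses the intermediate lemma entirely: you work directly with the closed stopping set, fix a time $u^*$ where $R_{u^*}<1$, and exploit monotonicity of $\beta$ via $\beta(t_n+u^*)\ge\beta(t)\ge y$ to force $Y^{y_n}_{u^*}<y\le\beta(t_n+u^*)$ for large $n$. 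The gain is economy---no strong Markov property, no distinction between $\tau^*$, $\tau'$ and $\rho^*$---while the paper's route yields the stronger intermediate fact $\tau^*=\rho^*$, which is of independent interest (regularity of the boundary in the sense of diffusions) even if in the paper it is only used here. Your closing remark that the argument needs only $y\le\beta(t)$ and monotonicity (not continuity) of $\beta$ is also apt, since continuity of $\beta$ is established only later in Section~\ref{sect:bCont}.
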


Recall the closed sets $\bar{\cC_\eps}$ and $\cD_\eps$ from \eqref{eq:CepsDeps}. The next proposition is the main result of this section.
\begin{proposition}\label{prop:wC1}
For every $\eps>0$, we have $w\in C^1\big([0,T)\times(0,\infty)\big)$ and $w_{zz}\in C\big(\bar{\cC_\eps}\cup\cD_\eps \big)$.
\end{proposition}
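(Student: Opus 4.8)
The plan is to prove separately that $w_z$ and $w_t$ are continuous on $[0,T)\times(0,\infty)$, and then to read off the statement about $w_{zz}$ from the free-boundary equation \eqref{eq:fbp}. Since $w\in C^{1,2}(\cC)$ and $w_t=w_z=w_{zz}=0$ on $\text{int}\,\cD$, the only thing left to check is continuity across $\partial\cC$, and only at points $(t_0,z_0)\in\partial\cC$ with $t_0<T$ (recall $\partial\cC\subseteq\cD$, so there $\tau^*_{t_0,z_0}=0$).

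\emph{Smooth fit for $w_z$.} Fix such a point and let $(t_n,z_n)_{n}\subseteq\cC$ converge to $(t_0,z_0)$. By Lemma \ref{lem:tau_nTo0}, $\tau^*_{t_n,z_n}\to 0$ $\P$-a.s.; since $T-t_n\ge (T-t_0)/2$ eventually, also $\ind_{\{\tau^*_{t_n,z_n}=T-t_n\}}\to 0$ $\P$-a.s. Inserting this in the representation \eqref{eq:w_z} and using dominated convergence (the integrand is dominated by $\mu_M Z^1_s$, the remaining term by $Z^1_{T-t}$, both integrable) gives $w_z(t_n,z_n)\to 0$. Together with $w_z\equiv 0$ on $\text{int}\,\cD$ and convexity of $z\mapsto w(t,z)$ (Lemma \ref{lem:wincrconv}), the one-sided $z$-derivatives of $w$ at $(t_0,z_0)$ are squeezed to $0$, so $w_z$ extends continuously to $[0,T)\times(0,\infty)$ with $w_z=0$ on $\cD$.

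\emph{Continuity of $w_t$.} With $(t_n,z_n)$ as above, the upper bound \eqref{eq:w_tleq} yields $\limsup_n w_t(t_n,z_n)\le 0$: the integral terms vanish by dominated convergence because $\ind_{\{s\le\tau^*_{t_n,z_n}\}}\to 0$, and each maturity term carries the factor $\ind_{\{\tau^*_{t_n,z_n}=T-t_n\}}\to 0$. For the matching $\liminf\ge 0$ one revisits Lemma \ref{lem:w_tLowBound} keeping the exercise-time indicator in front of the maturity contributions (it was dropped there only to obtain the clean statement \eqref{eq:w_tgeq}): the terms on the third line of \eqref{eq:w_tgeq} should be read with a factor $\ind_{\{\tau^*_{t,z}=T-t\}}$, and the $\Theta$-contribution estimated in Lemma \ref{lem:EuAmCall} must likewise be tracked with the factor $\ind_{\{\tau^*_{t,z}\ge T-t-\eps\}}$ carried through \eqref{eq:lbt0}--\eqref{eq:IntByParts} before letting $\eps\to 0$. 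Along $(t_n,z_n)\to(t_0,z_0)$ these factors vanish $\P$-a.s., so $\liminf_n w_t(t_n,z_n)\ge 0$ and hence $w_t(t_n,z_n)\to 0$. Since $w(\cdot,z_0)$ is locally Lipschitz on $[0,T)$ (by the bounds of Lemmas \ref{lem:w_tUpBound} and \ref{lem:w_tLowBound} together with $w_t\equiv 0$ on $\text{int}\,\cD$), its a.e.\ time-derivative agrees with this continuous extension, so $w_t(t_0,z_0)$ exists classically and equals $0$; thus $w\in C^1([0,T)\times(0,\infty))$. I expect this lower-semicontinuity step to be the main obstacle: the maturity terms in the lower bound for $w_t$ are finite, genuinely negative quantities once the indicator is discarded, so one really must retain $\ind_{\{\tau^*_{t,z}\ge T-t-\eps\}}$ inside the integration-by-parts argument of Lemma \ref{lem:EuAmCall} so that the resulting bound ends up multiplied by $\ind_{\{\tau^*_{t,z}=T-t\}}$ and vanishes at the boundary.

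\emph{The statement for $w_{zz}$.} On $\cC$, \eqref{eq:fbp} gives $w_{zz}(t,z)=\tfrac{2}{\sigma^2 z^2}\big[(c+\mu(t))w(t,z)-\alpha z w_z(t,z)-H(t,z)-w_t(t,z)\big]$, and by the previous steps the right-hand side is a continuous function on all of $[0,T)\times(0,\infty)$; therefore $w_{zz}$ extends continuously to $\bar\cC_\eps$ for every $\eps>0$. On $\cD_\eps$ one has $w_{zz}\equiv 0$. Hence $w_{zz}\in C(\bar\cC_\eps\cup\cD_\eps)$, with a jump across $\partial\cC$ in general, consistent with $\cD$ being irregular for $Z$.
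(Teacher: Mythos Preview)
Your treatment of $w_z$, of $\limsup_n w_t(t_n,z_n)\le 0$, and of $w_{zz}$ via the free-boundary PDE is correct and essentially identical to the paper's.

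The gap is exactly where you flagged it: the $\liminf$ for $w_t$. Your proposed fix, to carry the indicator $\ind_{\{\tau^*_{t,z}\ge T-t-\eps\}}$ through the chain \eqref{eq:lbt0}--\eqref{eq:IntByParts}, does not work. The passage from \eqref{eq:lbt0} to \eqref{eq:lbt1} replaces the expectation by an integral against the \emph{unconditional} transition density $\Phi(z,T-t_\eps,\cdot)$ of $Z^z_{T-t_\eps}$, and that step is only valid once the indicator has been dropped: the event $\{\tau^*_{t,z}\ge T-t-\eps\}$ depends on the whole path of $Z^z$ on $[0,T-t-\eps]$, not just on the endpoint, so the subsequent integration by parts in \eqref{eq:IntByParts} is unavailable with the indicator in place. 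If instead you try a crude pointwise estimate, the best uniform bound near the kink is $|(\lambda-1)^+-v^A(T-\eps,\lambda)|\le C\lambda\sqrt{\eps}$, which after division by $\eps$ leaves $\eps^{-1/2}\E[\ind_{\{\tau^*_{t,z}\ge T-t-\eps\}}Z^z_{T-t_\eps}]$; at fixed $(t,z)\in\cC$ this blows up as $\eps\to 0$, so you never obtain a finite lower bound on $w_t(t,z)$ that carries the desired indicator.

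The paper circumvents the difficulty by a different device. It first records the crude, indicator-free bound \eqref{eq:q}, namely $w_t\ge -q(\eps)(1+z)$ on $[0,T-\eps]\times(0,\infty)$. Then, for $(t,z)\in\cC$ with $t<T_1<T$ fixed, it applies the martingale/supermartingale property of the value process at the truncated time $\rho^*_{t,z}:=\tau^*_{t,z}\wedge(T_1-t)$ and uses \eqref{eq:q} at the intermediate time $T_1$ to arrive at a second lower bound \eqref{eq:w_tgeqT_1} for $w_t(t,z)$, whose only negative maturity-type term is
\[
-\ind_{\{\tau^*_{t,z}\ge T_1-t\}}D^t(T_1-t)\,q(T-T_1)\big(1+Z^z_{T_1-t}\big).
\]
Here $q(T-T_1)$ is finite because $T_1<T$, and the indicator $\ind_{\{\tau^*_{t,z}\ge T_1-t\}}$ appears structurally (no delicate limit in $\eps$ is needed). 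Along $(t_n,z_n)\to(t_0,z_0)\in\partial\cC$ one has $\tau^*_{t_n,z_n}\to 0$ while $T_1-t_n$ stays bounded away from zero, so this indicator vanishes and $\liminf_n w_t(t_n,z_n)\ge 0$ follows.
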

\begin{proof}
	We know, from \eqref{eq:fbp}, that $w_z$, $w_{zz}$ and $w_t$ are continuous on $\cC$. Moreover, $w(t,z)=0$ for $(t,z)\in\cD$ implies that $w_z(t,z)=w_{zz}(t,z)=w_t(t,z)=0$ for $(t,z)\in\cD\setminus\partial\cC$ with $t<T$. 
		Therefore, it remains to prove that $w_z$ and $w_t$ are continuous across the boundary $\partial\cC$. 
	
	Let $(t_0,z_0)\in\partial\cC$ with $t_0<T$ and take a sequence $(t_n,z_n)_{n\geq 1}\subseteq\cC$ converging to $(t_0,z_0)$ as $n\to\infty$. Then, by Lemma \ref{lem:tau_nTo0}, we have that $\lim_{n\to\infty} \tau^*_{t_n,z_n}=0$, $\P$-a.s. Hence, from \eqref{eq:w_z} and by using the dominated convergence theorem, we obtain $\lim_{n\to\infty}w_z(t_n,z_n)=0$. Since both $(t_0,z_0)$ and the sequence $(t_n,z_n)$ were arbitrary, we conclude that $w_z\in C\big([0,T)\times(0,\infty)\big)$. For the continuity of $w_t$ across the boundary, using a similar argument as above and Lemma \ref{lem:w_tUpBound}, we obtain
	\begin{equation}\label{eq:w_t<0}
	\limsup_{n\to\infty}w_t(t_n,z_n)\leq 0.
	\end{equation}
	
To show the opposite inequality we first use Lemma \ref{lem:w_tLowBound} to obtain a refined lower bound for $w_t$. Let $(t,z)\in\cC$ with $t<T$ and let $T_1\in(t,T)$. Set $\rho^*_{t,z}:=\tau^*_{t,z}\wedge(T_1-t)$. By the martingale property of the value function (cf.\ \eqref{eq:UD}), we have
	\begin{equation*}
	w(t,z)=\E\Big[\int_0^{\rho^*_{t,z}}D^t(s)H(t+s,Z^z_s)\ud s+ D^t(\rho^*_{t,z})w\big(t+\rho^*_{t,z},Z^z_{\rho^*_{t,z}}\big)\Big].
	\end{equation*}
	Taking $\eps\in(0,\frac{T-T_1}{2})$ we have $t+\eps+\rho^*_{t,z}\leq T_1+\eps<T$ and so $\rho^*_{t,z}$ is admissible for $w(t+\eps,z)$.  By the supermartingale property of the value function (cf.\ again \eqref{eq:UD}) we have
	\begin{equation*}
	w(t+\eps,z)\ge \E\Big[\int_0^{\rho^*_{t,z}}D^{t+\eps}(s)H(t+\eps+s,Z^z_s)\ud s+ D^{t+\eps}(\rho^*_{t,z})w\big(t+\eps+\rho^*_{t,z},Z^z_{\rho^*_{t,z}}\big)\Big].
	\end{equation*}
	
	Combining the two expressions above yields
\begin{align}\label{eq:wtc-0}
\begin{aligned}
&w(t+\eps,z)-w(t,z)\\
	&\geq\E\Big[\int_0^{\rho^*_{t,z}}\Big(D^{t+\eps}(s)H(t+\eps+s,Z^z_s)-D^t(s)H(t+s,Z^z_s) \Big)\ud s\\
	&\qquad+D^{t+\eps}(\rho^*_{t,z})w\big(t+\eps+\rho^*_{t,z},Z^z_{\rho^*_{t,z}}\big)-D^t(\rho^*_{t,z})w\big(t+\rho^*_{t,z},Z^z_{\rho^*_{t,z}}\big) \Big].
\end{aligned}
\end{align}
On $\{\tau^*_{t,z}<T_1-t \}$ we have $\rho^*_{t,z}=\tau^*_{t,z}$ and 
$D^{t+\eps}(\tau^*_{t,z})w\big(t+\eps+\tau^*_{t,z},Z^z_{\tau^*_{t,z}}\big)\geq 0=D^t(\tau^*_{t,z})w\big(t+\tau^*_{t,z},Z^z_{\tau^*_{t,z}}\big)$. Then, the right-hand side in \eqref{eq:wtc-0} is further bounded from below by 
\begin{align*}
\begin{aligned}
&w(t+\eps,z)-w(t,z)\\
	&\geq\E\Big[\int_0^{\rho^*_{t,z}}\Big(D^{t+\eps}(s)H(t+\eps+s,Z^z_s)-D^t(s)H(t+s,Z^z_s) \Big)\ud s\\
	&\qquad+\ind_{\{\tau^*_{t,z}\ge T_1-t\}}\Big(D^{t+\eps}(T_1-t)w\big(T_1+\eps,Z^z_{T_1-t}\big)-D^t(T_1-t)w\big(T_1,Z^z_{T_1-t}\big) \Big)\Big]\\
	&=\E\Big[\int_0^{\rho^*_{t,z}}\Big(D^{t+\eps}(s)H(t+\eps+s,Z^z_s)-D^t(s)H(t+s,Z^z_s) \Big)\ud s\\
	&\qquad+\ind_{\{\tau^*_{t,z}\ge T_1-t\}}\Big(D^{t+\eps}(T_1-t)-D^t(T_1-t)\Big)w\big(T_1+\eps,Z^z_{T_1-t}\big)\\
	&\qquad+\ind_{\{\tau^*_{t,z}\ge T_1-t\}}D^{t}(T_1-t)\int_0^\eps w_t\big(T_1+s,Z^z_{T_1-t}\big)\ud s \Big],
\end{aligned}
\end{align*}
where the expected value of the integral in the last line is well-defined because $w_t$ is locally bounded thanks to Lemma \ref{lem:w_tUpBound} and Lemma \ref{lem:w_tLowBound} and $Z^z_{T_1-t}$ admits a density with respect to the Lebesgue measure.
Dividing by $\eps$, letting $\eps\to 0$ and using dominated convergence
	\begin{align}\label{eq:w_tgeqT_1}
	\begin{aligned}	
	w_t(t,z)&\geq\E\Big[\int_0^{\rho^*_{t,z}}D^t(s)\Big(\big(\mu(t)-\mu(t+s)\big) H(t+s,Z^z_s)+H_t(t+s,Z^z_s)\Big)\ud s\\
	&\qquad+\ind_{\{\tau^*_{t,z}\ge T_1-t\}}D^t(T_1-t)\big(\mu(t)-\mu(T_1) \big)w(T_1,Z^z_{T_1})\\
	&\qquad-\ind_{\{\tau^*_{t,z}\geq T_1-t\}}D^t(T_1-t)q(T-T_1)\big(1+Z^z_{T_1-t}\big)\Big],
\end{aligned}	
	\end{align}
	where in the final expression we use the lower bound from Lemma \ref{lem:w_tLowBound}. 

Now recall that $(t_0,z_0)\in\partial\cC$ with $t_0<T$ and that $(t_n,z_n)_{n\geq 1}\subseteq\cC$ converges to $(t_0,z_0)$ as $n\to\infty$. With no loss of generality, we can assume that $t_n<T_1$ for every $n\in\bN$. Let $\rho^*_n=\rho^*_{t_n,x_n}$ for every $n\in\bN$. Since by Lemma \ref{lem:tau_nTo0} $\lim_{n\to\infty}\tau^*_{t_n,z_n}=0$, $\P$-a.s., then $\lim_{n\to\infty}\rho^*_n=0$, $\P$-a.s. Therefore, from \eqref{eq:w_tgeqT_1} computed at $(t_n,z_n)\in\cC$, we obtain
	$\liminf_{n\to\infty} w_t(t_n,z_n)\geq 0$,
	which, together with \eqref{eq:w_t<0}, yields
	$\lim_{n\to\infty} w_t(t_n,z_n)=0$.
	That completes the proof  by arbitrariness of $(t_0,z_0)$ and of the sequence $(t_n,z_n)$.
\end{proof}

It remains to prove Lemma \ref{lem:tau_nTo0}. For that we need to introduce a different coordinate system, leveraging on the parametrisation of the boundary $\beta(t)$. For any fixed $t\in[0,T)$, we define the space transformation $y=\psi_t(z):=e^{\Lambda t} z$ for $z\in(0,\infty)$, with $\Lambda\geq 0$ as in \eqref{eq:beta}. Clearly, $(t,z)\in[0,T]\times(0,\infty)$ implies $(t,y)\in[0,T]\times(0,\infty)$. Since $\beta(t)=b(t)e^{\Lambda t}$, the continuation set \eqref{setC1} in the new coordinates reads
\[
\cC=\{(t,y)\in[0,T) \times(0,\infty): y>\beta(t)\},
\]
and analogously
\[
\cD=\{(t,y)\in[0,T) \times(0,\infty): y\le\beta(t)\} \cup \big(\{T\}\times(0,\infty)\big).
\]
In order to avoid confusion when parametrising the sets $\cC$ and $\cD$ with respect to the two different coordinate systems, $(t,z)$ and $(t,y)$, we write $\widehat \cC$ and $\widehat \cD$ when we refer to the coordinates $(t,y)$.

The change of coordinates has a probabilistic analogue that turns out to be very useful.
Fix $(t,z)\in[0,T]\times(0,\infty)$. 
Set $Y^{y;t}_s:=e^{\Lambda (s+t)}Z^{z}_s$ for $s\in[0,T-t]$, so that
\begin{equation}\label{eq:Ysde}
\ud Y^{y;t}_s=\gamma Y^{y;t}_s\ud s+\sigma Y^{y;t}_s\ud W_s, \quad Y^{y;t}_0=y:=ze^{\Lambda t},
\end{equation}
where $\gamma:=g+c+\Lambda-r=\alpha+\Lambda$ (cf.\ \eqref{eq:alpha}). We can re-parametrise our optimal stopping problem \eqref{valuew} in the new coordinates. First, we notice that 
\begin{align}\label{eq:hatw}
w(t,z)=w\big(t,ye^{-\Lambda t}\big)=:\hat w(t,y).
\end{align}
Then, using $Z^z_s=Y^{y;t}_se^{-\Lambda(s+t)}$ we can write
\begin{align}\label{eq:OShat}
\begin{aligned}
\hat w(t,y)=\sup_{0\le\tau\le T-t}\E\Big[&\int_{0}^{\tau}e^{-cs} {_s p_{t}}  \Big(f(t+s)+ \mu(t+s) \big(Y^{y;t}_se^{-\Lambda(s+t)}-1\big)^{+}  \Big)  \ud s\\
&+\ind_{\{\tau= T-t \}}e^{-c (T-t)}  \ _{T-t} p_{t} \big(Y^{y;t}_{T-t}e^{-\Lambda T}-1\big)^{+} \Big].
\end{aligned}
\end{align}
In the notation $Y^{y;t}$ we keep track of the time $t$ in a parametric way for the following reason: for a fixed $y\in(0,\infty)$ and any two $t_1$, $t_2$, the dynamics $(Y^{y;t_1}_s)_{s\in[0,\infty)}$ and $(Y^{y;t_2}_s)_{s\in[0,\infty)}$ are indistinguishable and therefore we could simply set $Y^{y;t_1}=Y^{y;t_2}=Y^{y}$; however, if we want to recover the original processes $Z^{z_1}$ and $Z^{z_2}$ from which $Y^{y;t_1}$ and $Y^{y;t_2}$ originate, we must be careful that $Z^{z_1}_s= e^{\Lambda(s+t_1)}Y^{y;t_1}_s$ and $Z^{z_2}_s= e^{\Lambda(s+t_2)}Y^{y;t_2}_s$ so that $Z^{z_1}_s\neq Z^{z_2}_s$.
Recalling the function $H$ from \eqref{H} and comparing \eqref{eq:OShat} with \eqref{valuew}, it is easy to see that $H(t,z)=H(t,ye^{-\Lambda t})=:\hat H(t,y)$ and therefore
\begin{align}\label{eq:hatH}
\hat H(t,y)=f(t)+\mu(t)\big(e^{-\Lambda t}y-1\big)^+.
\end{align}

Moreover, recall the optimal stopping time from \eqref{eq:tau*1}. With the notation above, we can equivalently express it as
\begin{align}
\begin{aligned}
\tau^*_{t,z}&=\inf\{s\ge 0: Z^{z}_s\le b(t+s)\}\wedge (T-t)\\
&=\inf\{s\geq 0:Y^{y;t}_s\leq \beta(t+s) \}\wedge (T-t)=:\sigma^*_{t,y}
\end{aligned}
\end{align}
Analogously, we can introduce
\begin{align*}
\tau'_{t,z}&:=\inf\{s\ge 0: Z^{z}_s< b(t+s)\}\wedge (T-t)\\
&=\inf\{s\geq 0:Y^{y;t}_s< \beta(t+s) \}\wedge (T-t)=:\sigma'_{t,y}.
\end{align*}

Then, we have the following lemma.

\begin{lemma}\label{lem:tau*=tau'}
For every $(t,z)\in[0,T)\times(0,\infty)$, we have that $\tau'_{t,b(t)}=0$ and $\tau'_{t,z}=\tau^*_{t,z}$, $\P$-a.s. Equivalently, for every $(t,y)\in[0,T)\times(0,\infty)$, we have $\sigma'_{t,\beta(t)}=0$ and $\sigma'_{t,y}=\sigma^*_{t,y}$, $\P$-a.s.
\end{lemma}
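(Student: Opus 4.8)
The plan is to carry out the whole argument in the transformed coordinates $(t,y)$ introduced in \eqref{eq:Ysde}, exploiting that by Proposition \ref{prop:betamonot} the boundary $t\mapsto\beta(t)$ is non-decreasing on $[0,T]$, and recalling that $\widehat\cD$ is closed (as shown for $\cD$ in the proof of Proposition \ref{prop:betamonot}, $\hat w$ being continuous). Since $\tau'_{t,z}=\sigma'_{t,y}$ and $\tau^*_{t,z}=\sigma^*_{t,y}$ for $y=ze^{\Lambda t}$, it is enough to prove that $\sigma'_{t,\beta(t)}=0$ and $\sigma'_{t,y}=\sigma^*_{t,y}$, $\P$-a.s., for every $(t,y)\in[0,T)\times(0,\infty)$. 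When $\beta(t)=0$ (i.e.\ $t\in[0,t^*)$) the first identity is vacuous; the second is still covered by the argument below, once one notices that on $\{\sigma^*_{t,y}<T-t\}$ the strict positivity of $Y^{y;t}_{\sigma^*_{t,y}}$ forces $\beta(t+\sigma^*_{t,y})>0$.

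\emph{Step 1 ($\sigma'_{t,\beta(t)}=0$, assuming $\beta(t)>0$).} Writing $Y^{\beta(t);t}_s=\beta(t)\exp\big((\gamma-\tfrac{\sigma^2}{2})s+\sigma W_s\big)$, one has $Y^{\beta(t);t}_s<\beta(t)$ precisely when $W_s<-\big(\tfrac{\gamma}{\sigma}-\tfrac{\sigma}{2}\big)s$. By the law of the iterated logarithm at the origin there is, $\P$-a.s., a random sequence $s_n\downarrow 0$ along which $W_{s_n}/s_n\to-\infty$, hence $W_{s_n}<-\big(\tfrac{\gamma}{\sigma}-\tfrac{\sigma}{2}\big)s_n$ for all large $n$, so $Y^{\beta(t);t}_{s_n}<\beta(t)$. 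Monotonicity of $\beta$ upgrades this to $Y^{\beta(t);t}_{s_n}<\beta(t)\le\beta(t+s_n)$, and for $n$ large $t+s_n<T$, so $s_n$ belongs to the set whose infimum defines $\sigma'_{t,\beta(t)}$. Letting $n\to\infty$ gives $\sigma'_{t,\beta(t)}=0$, $\P$-a.s.

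\emph{Step 2 ($\sigma'_{t,y}=\sigma^*_{t,y}$).} The inequality $\sigma'_{t,y}\ge\sigma^*_{t,y}$ is clear from $\{Y^{y;t}_s<\beta(t+s)\}\subseteq\{Y^{y;t}_s\le\beta(t+s)\}$. On $\{\sigma^*_{t,y}=T-t\}$ the reverse is trivial. On $\{\sigma^*_{t,y}<T-t\}$, path-continuity of $Y^{y;t}$ and closedness of $\widehat\cD$ give $Y^{y;t}_{\sigma^*_{t,y}}\le\beta(t+\sigma^*_{t,y})$; if this inequality is strict, then $\sigma^*_{t,y}$ already lies in the defining set of $\sigma'_{t,y}$ and we are done. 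If $Y^{y;t}_{\sigma^*_{t,y}}=\beta(t+\sigma^*_{t,y})$, this common value is $>0$ (a geometric Brownian motion started at $y>0$ stays positive), and by the strong Markov property the shifted process $u\mapsto Y^{y;t}_{\sigma^*_{t,y}+u}$ is again a geometric Brownian motion started on the boundary, driven by the independent Brownian motion $u\mapsto W_{\sigma^*_{t,y}+u}-W_{\sigma^*_{t,y}}$; rerunning the computation of Step 1 on this shifted Brownian motion produces $u_n\downarrow 0$ with $Y^{y;t}_{\sigma^*_{t,y}+u_n}<\beta(t+\sigma^*_{t,y})\le\beta(t+\sigma^*_{t,y}+u_n)$, whence again $\sigma'_{t,y}\le\sigma^*_{t,y}$. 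In all cases $\sigma'_{t,y}=\sigma^*_{t,y}$; finally, taking $y=\beta(t)$ and combining with Step 1 also yields $\sigma^*_{t,\beta(t)}=0$.

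The one genuinely delicate point is Step 1: a diffusion started exactly on the (non-decreasing, a priori only right-continuous) boundary must instantaneously penetrate strictly below it. This is precisely where monotonicity of $\beta$ does the work — it reduces the claim to the elementary fact that Brownian motion is strictly negative along a sequence of times shrinking to $0$ — and it is the payoff of having introduced the transformation $b\mapsto\beta$. Everything else is routine manipulation with the strong Markov property and the closedness of $\widehat\cD$.
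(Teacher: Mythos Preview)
Your proof is correct and follows essentially the same approach as the paper: Step~1 is identical (law of the iterated logarithm combined with monotonicity of $\beta$), and Step~2 uses the strong Markov property to restart from $\sigma^*_{t,y}$ and reapply Step~1. The only cosmetic difference is that the paper packages Step~2 as a contradiction argument via the shift operator $\theta_{\sigma^*}$, whereas you do a direct case analysis on whether $Y^{y;t}_{\sigma^*_{t,y}}<\beta(t+\sigma^*_{t,y})$ or equality holds; both are standard and equivalent.
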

\begin{proof} 
It is enough to prove the claims only for $\sigma^*$ and $\sigma'$, by virtue of their equivalence to $\tau^*$ and $\tau'$.
We first show that $\sigma'_{t,\beta(t)}=0$, $\P$-a.s. From \eqref{eq:Ysde}, we have 
\[
Y^{y;t}_s=y\exp\Big(\sigma W_s+\big(\gamma-\tfrac{\sigma^2}{2}\big)s\Big).
\]
	Therefore, by monotonicity of $t\mapsto\beta(t)$, we obtain
	\begin{align*}
\sigma'_{t,\beta(t)}&\leq \inf\{s\geq 0:Y^{\beta(t);t}_s< \beta(t) \}=\inf\{s\geq 0:\exp\big(\sigma W_s+(\gamma-\sigma^2/2)s\big)< 1 \}\\
	&=\inf\{s\geq 0:\sigma W_s< (\sigma^2/2-\gamma)s \}=0, \quad \P\text{-a.s.,}
	\end{align*}
	where the last equality follows from the law of iterated logarithm. Hence, $\sigma'_{t,\beta(t)}=0$, $\P$-a.s.
	
We now show that $\sigma'_{t,y}=\sigma^*_{t,y}$, $\P$-a.s. The pair $(t,y)$ is fixed in the rest of the proof and we will use the strong Markov property of the process $Y^{y;t}$. With no loss of generality here we assume $(\Omega,\cF)$ as the canonical space of continuous trajectories. We denote $\sigma'_{t,y}$ and $\sigma^*_{t,y}$ under $\P$, respectively, by $\sigma'$ and $\sigma^*$ under $\P^t_{0,y}$, where $\P^t_{0,y}(A)=\P(A|Y^{y;t}_0=y)$ for $A\in\cF$. The notation is slightly cumbersome because we must keep track of $t$ which enters the dynamics $Y^{y;t}$ via the initial condition $y=e^{\Lambda t}z$ and it enters $\sigma'$ and $\sigma^*$ via the boundary $s\mapsto \beta(t+s)$. In particular, we will use that for any $(s,\lambda)\in[0,T-t)\times[0,\infty)$ the stopping time $\sigma'$ under the measure $\P^t_{s,\lambda}(\cdot)=\P(\cdot | Y^{y;t}_s=\lambda)$ reads
\[
\sigma'=\inf\{u\ge 0:Y^{y;t}_{s+u}<\beta(t+s+u)\}\wedge(T-t-s),\quad\text{with $Y^{y;t}_s=\lambda$.}
\]

Clearly, we have $\sigma'\geq\sigma^*$, $\P^t_{0,y}$-a.s. Assume by contradiction that $\P^t_{0,y}(\sigma'>\sigma^*)>0$. Then, there exists $\delta>0$ such that $\P^t_{0,y}(\sigma'\geq\sigma^*+\delta)>0$. Denote by $\{\theta_s,\,s\ge 0\}$ the shift operator. For any functional $F:[0,\infty)\times C([0,\infty);\R)\to \R$ and any $\varphi\in C([0,\infty);\R)$ it holds $ (F\circ \theta_u) (s,\varphi(\cdot))=F(u+s,\varphi(u+\cdot))$. Then 
	\begin{align*}
	0&< \P^t_{0,y}\big(\sigma'\geq \sigma^*+\delta\big)=\E^t_{0,y}\Big[\P^t_{0,y}\big(\sigma'\geq \sigma^*+\delta\big|\cF_{\sigma^*}\big)\Big]\\
	&=\E^t_{0,y}\Big[\P^t_{0,y}\big(\sigma'\circ\theta_{\sigma^*}+\sigma^*\geq \sigma^*+\delta\big|\cF_{\sigma^*}\big)\Big]=\E^t_{0,y}\Big[\P^t_{0,y}\big(\sigma'\circ\theta_{\sigma^*}\geq\delta\big|\cF_{\sigma^*}\big)\Big]\\
	&=\E^t_{0,y}\Big[\P^t_{\sigma^*,Y_{\sigma^*}}\big(\sigma'\geq\delta\big)\Big]=0,
	\end{align*}
	where the penultimate equality holds by the strong Markov property and the last equality holds by the first part of the proof because $Y_{\sigma^*}\leq\beta(\sigma^*)$, $\P^t_{0,y}$-a.s. Thus we have reached a contradiction and it must be $\sigma'_{t,y}=\sigma^*_{t,y}$, $\P$-a.s.
\end{proof}

As a consequence of Lemma \ref{lem:tau*=tau'} and monotonicity of the boundary $t\mapsto \beta(t)$, we obtain that
\[
\sigma^*_{t,y}=\sigma'_{t,y}=\zeta^*_{t,y}:=\inf\{s\geq 0: (t+s,Y_s^{y;t})\in\text{int}(\widehat\cD) \}\wedge(T-t), \quad \P\text{-a.s.},
\]
where $\text{int}(\widehat\cD):=\widehat\cD\setminus\partial\widehat\cD$ is the interior of $\widehat\cD$. 
Since 
\begin{align*}
&\inf\{s\geq 0: (t+s,Y_s^{y;t})\in\text{int}(\widehat\cD) \}\wedge(T-t)=\inf\{s\geq 0: (t+s,Z_s^{z})\in\text{int}(\cD) \}\wedge(T-t),
\end{align*}
then
\begin{equation}\label{eq:tauIntD}
\tau^*_{t,z}=\tau'_{t,z}=\rho^*_{t,z}:=\inf\{s\geq 0: (t+s,Z_s^z)\in\text{int}(\cD) \}\wedge(T-t), \quad \P\text{-a.s.,}
\end{equation}
This allows us to prove Lemma \ref{lem:tau_nTo0}.

\begin{proof}[{\bf Proof of Lemma \ref{lem:tau_nTo0}}]
Clearly, we have $\liminf_{n\to\infty}\tau^*_{t_n,z_n}\geq 0$, $\P$-a.s. We now want to prove $\limsup_{n\to\infty}\tau^*_{t_n,z_n}\leq 0$, $\P$-a.s. Since $(t,z)\in\partial\cC$ with $t<T$, from \eqref{eq:tauIntD} and Lemma \ref{lem:tau*=tau'}, we have $\rho^*_{t,z}(\omega)=\tau^*_{t,z}(\omega)=0$, for $\omega\in\Omega_0$ with $\P(\Omega_0)=1$. Fix $\omega\in\Omega_0$. Then, for every $\delta>0$, there exists $s\in[0,\delta)$ such that $(t+s,Z^z_s(\omega))\in\text{int}(\cD)$. Since $\text{int}(\cD)$ is an open set, 
then also $(t_n+s,Z^{z_n}_s(\omega))\in\text{int}(\cD)$ for $n$ sufficiently large. Hence, $\tau^*_{t_n,z_n}(\omega)\leq\delta$ for $n$ sufficiently large and, thus, we obtain $\limsup_{n\to\infty}\tau^*_{t_n,z_n}(\omega)\leq\delta$. Since $\delta>0$ was arbitrary and $\omega\in\Omega_0$, by letting $\delta\to 0$ we have the desired result.
\end{proof}

\section{Continuity of the optimal boundary and integral equations}\label{sect:bCont}

In this section we prove Theorem \ref{thm:bCont}. We notice that $t\mapsto b(t)$ is right-continuous because of Proposition \ref{prop:betamonot}. Thus, it remains to show its left-continuity at any $t\neq t^*$. After obtaining continuity of $b$, we will use it to derive an integral equation that allows an efficient numerical computation of the boundary.

Recall the functions $\hat w$ and $\hat H$ defined in \eqref{eq:hatw} and \eqref{eq:hatH}, respectively, and the free-boundary problem \eqref{eq:fbp} solved by $w$. Then, it is not hard to check that $\hat w$ solves
\begin{equation}\label{eq:hatwODE}
	\hat{w}_t(t,y)+\big(\cG\hat w\big)(t,y)-(c+\mu(t))\hat{w}(t,y)+\hat{H}(t,y)=0, \qquad (t,y)\in\widehat \cC,
\end{equation}
where $\cG$ is the generator of the process $Y$ in \eqref{eq:Ysde}, which is defined on sufficiently smooth functions $\varphi$ as $(\cG \varphi)(t,y) := (\alpha+\Lambda) y \varphi_y(t,y)+\tfrac{\sigma^2}{2} y^2 \varphi_{yy}(t,y)$, and we also recall that $\widehat \cC$ corresponds to the set $\cC$ expressed with the $(t,y)$ coordinates.

\begin{proposition}\label{prop:betalc}
The transformed boundary $t\mapsto \beta(t)$ is continuous on $[0,T]\setminus\{t^*\}$.
\end{proposition}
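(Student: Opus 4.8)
The plan is to establish left-continuity of $\beta$ at every $t_0\in(t^*,T)$; combined with the right-continuity, the boundary behaviour $\lim_{t\uparrow T}\beta(t)=e^{\Lambda T}=\beta(T)$ and the identity $\beta\equiv 0$ on $[0,t^*)$ already recorded in Proposition \ref{prop:betamonot}, this yields continuity on all of $[0,T]\setminus\{t^*\}$. Since $\beta$ is non-decreasing, the left-limit $\beta(t_0-)$ exists and $\beta(t_0-)\le\beta(t_0)$, so it is enough to rule out $\beta(t_0-)<\beta(t_0)$.

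I argue by contradiction: suppose $\beta(t_0-)<\beta(t_0)$ for some $t_0\in(t^*,T)$, pick $y_1<y_2$ with $(y_1,y_2)\subset(\beta(t_0-),\beta(t_0))$ and $\delta\in(0,t_0-t^*)$, and set $R:=(t_0-\delta,t_0)\times(y_1,y_2)$. Monotonicity of $\beta$ gives $\beta(t)\le\beta(t_0-)<y_1$ for every $t<t_0$, so $R\subset\widehat\cC$; on $\widehat\cC$ the function $\hat w$ is $C^{1,2}$ (inherited from $w\in C^{1,2}(\cC)$ via the smooth change of coordinates $z=y e^{-\Lambda t}$) and solves \eqref{eq:hatwODE}. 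Fix $\phi\in C_c^\infty((y_1,y_2))$ with $\phi\ge 0$ and put $u(t):=\int \hat w(t,y)\phi(y)\,\mathrm dy$ for $t\in(t_0-\delta,t_0]$. Differentiating under the integral (licit because $\hat w_t$ is locally bounded on $\widehat\cC$ by Lemmas \ref{lem:w_tUpBound}–\ref{lem:w_tLowBound}), substituting \eqref{eq:hatwODE} and integrating by parts in $y$ to move derivatives onto $\phi$, I obtain, for $t\in(t_0-\delta,t_0)$,
\[
u'(t)=-\int \hat w(t,y)\big[(\cG^*\phi)(y)-(c+\mu(t))\phi(y)\big]\mathrm dy-\int \hat H(t,y)\phi(y)\,\mathrm dy,
\]
where $\cG^*$ is the formal adjoint of $\cG$.

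Now let $t\uparrow t_0$. Since $\hat w$ is continuous and $\hat w(t_0,\cdot)\equiv 0$ on $(y_1,y_2)\subset\widehat\cD$, the function $\hat w(t,\cdot)$ converges to $0$ uniformly on $\operatorname{supp}\phi$, so $u(t)\to u(t_0)=0$ and the first integral above tends to $0$; by continuity of $\hat H$ the second integral tends to $\int\hat H(t_0,y)\phi(y)\,\mathrm dy$. Hence $\lim_{t\uparrow t_0}u'(t)$ exists, and by the standard one-sided-derivative lemma $u$ has left-derivative at $t_0$ equal to $-\int\hat H(t_0,y)\phi(y)\,\mathrm dy$. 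On the other hand $u\ge 0$ on $(t_0-\delta,t_0]$ with $u(t_0)=0$, so $\big(u(t_0)-u(t)\big)/(t_0-t)\le 0$ for $t<t_0$, whence the left-derivative is $\le 0$, i.e. $\int\hat H(t_0,y)\phi(y)\,\mathrm dy\ge 0$. Letting $\phi$ vary and using continuity of $y\mapsto\hat H(t_0,y)$, this forces $\hat H(t_0,y)\ge 0$ for all $y\in(y_1,y_2)$. But $\hat H(t_0,y)=H(t_0,e^{-\Lambda t_0}y)$ with $e^{-\Lambda t_0}y<b(t_0)$ for $y\in(y_1,y_2)$; since $t_0>t^*$ Assumption \ref{A2} gives $f(t_0)<0$, and $\mathcal R\subset\cC$ gives $H(t_0,z)\le 0$ for all $z\le b(t_0)$, so using $f(t_0)<0$ together with monotonicity of $z\mapsto H(t_0,z)$ one checks that in fact $H(t_0,z)<0$ for every $z<b(t_0)$. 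Thus $\hat H(t_0,\cdot)<0$ on $(y_1,y_2)$, a contradiction. Hence $\beta(t_0-)=\beta(t_0)$, which completes the proof.

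I expect the only delicate point to be the soft-analysis bookkeeping of the second and third paragraphs: justifying differentiation of $u$ under the integral on $(t_0-\delta,t_0)$, the validity of the integration by parts (immediate, since $\hat w(t,\cdot)\in C^2$ on the box and $\phi$ is compactly supported), and — above all — the passage to the limit $t\uparrow t_0$ in $u'(t)$, which rests on the continuity of $\hat w$ up to the segment $\{t_0\}\times[y_1,y_2]$ and the continuity of $\hat H$. The geometric input (that the box $R$ sits inside $\widehat\cC$) is a one-line consequence of the monotonicity of $\beta$ from Proposition \ref{prop:betamonot}, and the final sign contradiction is just the elementary analysis of $H(t_0,\cdot)$ sketched above.
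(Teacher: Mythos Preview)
Your proof is correct and follows the same overall contradiction strategy as the paper (test the PDE \eqref{eq:hatwODE} against $\phi\in C_c^\infty$ on a rectangle sitting in $\widehat\cC$ by monotonicity of $\beta$, then pass to the limit $t\uparrow t_0$), but with one genuine and interesting difference in the limiting step.

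The paper invokes the $C^1$ regularity of $w$ established in Proposition~\ref{prop:wC1}: since $w_t$ and $w_z$ extend continuously across $\partial\cC$, one has $\hat w_t(t,y)\to 0$ as $t\uparrow t_0$, and together with $\hat w(t,y)\to 0$ this forces the \emph{equality} $\int\hat H(t_0,y)\varphi(y)\,\ud y=0$, hence $\hat H(t_0,\cdot)\equiv 0$ on an interval, which is immediately absurd since $f(t_0)<0$. Your argument bypasses Proposition~\ref{prop:wC1} entirely: you use only the continuity of $\hat w$ (so $u(t)\to 0$ and the first integral in your expression for $u'(t)$ tends to $0$), and then exploit the sign constraint $u\ge 0$, $u(t_0)=0$ together with the one-sided mean-value lemma to obtain only the \emph{inequality} $\int\hat H(t_0,y)\phi(y)\,\ud y\ge 0$. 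You then close the contradiction by checking that $H(t_0,z)<0$ for every $z<b(t_0)$, which indeed follows from $f(t_0)<0$, the monotonicity of $z\mapsto H(t_0,z)$, and $\mathcal R\subset\cC$.

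What each approach buys: the paper's route is slightly cleaner at the endgame (the conclusion $\hat H\equiv 0$ is more directly absurd), but it relies on the substantial Proposition~\ref{prop:wC1}, whose proof in turn rests on Lemmas~\ref{lem:w_tUpBound}, \ref{lem:EuAmCall}, \ref{lem:w_tLowBound} and~\ref{lem:tau_nTo0}. Your route shows that continuity of $\beta$ can be obtained from continuity of $w$ alone (plus the PDE in $\cC$ and the sign of $H$ below the boundary), which is a nice logical economy for this particular proposition --- although of course the $C^1$ result is still needed elsewhere in the paper for the change-of-variable formula and the integral equation.
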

\begin{proof}
Continuity on $[0,t^*)$ is trivial because $\beta(t)=0$ for $t\in[0,t^*)$ (cf.\ Proposition \ref{prop:betamonot}). With no loss of generality we then assume $t^*\in[0,T)$.
Right-continuity on $[0,T]$ was proven in Proposition \ref{prop:betamonot} and it only remains to prove left-continuity on $(t_*,T]$.
Combining Propositions \ref{prop:limsupT} and \ref{prop:betamonot} we know the limits $\lim_{t\uparrow T}b(t)=1$ and $\lim_{t\uparrow T}\beta(t)=e^{\Lambda T}$. Then, left-continuity at $T$ follows by simply setting $b(T):=1$ (cf.\ Theorem \ref{thm:bCont}) and $\beta(T):=e^{\Lambda T}$.	
	
By monotonicity, the left-limit of $\beta$ exists at every point $t\in(t^*,T]$. We denote it by $\beta(t-)$. Arguing as in \cite{de2015note}, let us assume by contradiction that there exists $t_0\in(t_*,T)$ such that $\beta(t_0-)<\beta(t_0)$. Let $0<y_1<y_2<\infty$ be such that $[y_1,y_2]\subset(\beta(t_0-),\beta(t_0))$. Then, by monotonicity of $\beta$, we have that $[0,t_0)\times [y_1,y_2]\subset\widehat\cC$. Let $\varphi$ be an arbitrary non-negative, infinitely differentiable function with compact support on $[y_1,y_2]$. By \eqref{eq:hatwODE}, for every $t\in[0,t_0)$, we obtain
	\begin{align}\label{eq:phi}
		\begin{aligned}
			-\int_{y_1}^{y_2}\hat H(t,y)\varphi(y)\ud y&=\int_{y_1}^{y_2}\Big(\hat{w}_t(t,y)+\big(\cG\hat{w}\big)(t,y)-(c+\mu(t)) \hat{w}(t,y)\Big)\varphi(y)\ud y\\
			&=\int_{y_1}^{y_2}\Big(\big(\hat{w}_t(t,y)-(c+\mu(t))\hat{w}(t,y)\big)\varphi(y)+\hat{w}(t,y)\big(\cG^*\varphi\big)(y)\Big)\ud y,
		\end{aligned}
	\end{align}
	where $\cG^*$ is the adjoint operator of $\cG$.
Since $(t_0,y)\in\widehat\cD$ implies $(t_0,z):= (t_0, ye^{-\Lambda t_0})\in\cD$ for every $y\in[y_1,y_2]$, and since $w\in C^1([0,T)\times(0,\infty))$ by Proposition \ref{prop:wC1}, then
\begin{align*}
&\lim_{t\uparrow t_0}\hat w(t,y)=\lim_{t\uparrow t_0} w\big(t,y e^{-\Lambda t}\big)=0\quad\text{and}\\
&\lim_{t\uparrow t_0}\hat w_t(t,y)=\lim_{t\uparrow t_0}\Big(w_t\big(t,y e^{-\Lambda t}\big)-\Lambda y e^{-\Lambda t} w_z\big(t,y e^{-\Lambda t}\big)\Big)=0,
\end{align*}	
for $y\in[y_1,y_2]$. Hence, letting $t\to t_0$ in \eqref{eq:phi} and using dominated convergence, we obtain
$\int_{y_1}^{y_2}\hat H(t_0,y) \varphi(y)\ud y=0$.
Since $\varphi$ is arbitrary, then it must be $\hat H(t_0,y)=0$ for all $y\in(y_1,y_2)$, which is equivalent to 
$H(t_0,z)=0$, for $z\in(e^{-\Lambda t_0}y_1,e^{-\Lambda t_0}y_2)$.
The latter is impossible and we have reached a contradiction which implies left-continuity of $\beta$.
\end{proof}
Now we can easily prove Theorem \ref{thm:bCont}.

\begin{proof}[\bf Proof of Theorem \ref{thm:bCont}]
Since $b(t)=e^{-\Lambda t}\beta(t)$, thanks to Proposition \ref{prop:limsupT} and Proposition \ref{prop:betalc} we obtain continuity of $b$ on $[0,T]\setminus\{t^*\}$ and its limit at $T$.
\end{proof}

Finally, we are ready to prove Theorem \ref{thm:main}. We will give a general statement concerning a probabilistic representation for the function $w$ and an integral equation for the boundary $b$. Then, the proof of Theorem \ref{thm:main} will follow via a simple re-parametrisation. It is useful to recall the set $\mathcal R$ introduced after \eqref{eq:sC} and notice that $(t,z)\in\mathcal R\iff z>h(t)$, where $h(t)=(1-f(t)/\mu(t))^+$. Notice that $h(t)$ is the unique solution of $H(t,h(t))=0$ for $t\in[t^*,T)$.
\begin{proposition}
For every $(t,z)\in[0,T]\times(0,\infty)$ the function $w$ can be represented as 
\begin{align}\label{eq:wint}
\begin{aligned}
w(t,z)=e^{-c(T-t)}{_{T-t} p_t}\E\big[\big(Z^z_{T-t}-1\big)^+\big]+\int_0^{T-t}e^{-cs}{_s p_t}\E\big[H(t+s,Z^z_s)\ind_{\{Z^z_s> b(t+s)\}}\big]\ud s.
\end{aligned}
\end{align}
Moreover, the boundary $b(t)$ is the unique continuous solution of the integral equation, 
\begin{align}\label{eq:bint}
\begin{aligned}
e^{-c(T-t)}{_{T-t} p_t}\E\big[\big(Z^{b(t)}_{T-t}-1\big)^+\big]+\int_0^{T-t}e^{-cs}{_s p_t}\E\big[H(t+s,Z^{b(t)}_s)\ind_{\{Z^{b(t)}_s > b(t+s)\}}\big]\ud s=0,
\end{aligned}
\end{align}
for\footnote{It is important to notice here that $t\in(t^*,T)$ should be understood as $t\in\Gamma_b:=\{s\in[0,T):b(s)>0\}$, because the equation only holds when $b$ is strictly separated from zero. This observation is also important when proving uniqueness.} $t\in(t^*,T)$,
with $b(t)\le h(t)$ for all $t\in(t^*,T)$ and $b(T)=1$.
\end{proposition}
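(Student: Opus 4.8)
The plan is to obtain \eqref{eq:wint} from an application of an appropriate Itô--Dynkin type formula to the process $U^{t,z}$ from \eqref{eq:UD}, and then to derive \eqref{eq:bint} by evaluating \eqref{eq:wint} at $z=b(t)$ and to prove uniqueness via the classical comparison/martingale argument.

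\textbf{Step 1: the probabilistic representation \eqref{eq:wint}.} I would start from the process $U^{t,z}_s=D^t(s)w(t+s,Z^z_s)+\int_0^s D^t(u)H(t+u,Z^z_u)\,\ud u$, which is a martingale up to $\tau^*_{t,z}$ and a supermartingale thereafter. The cleaner route is to apply It\^o's formula directly to $s\mapsto D^t(s)w(t+s,Z^z_s)$ on $[0,T-t]$. Since $w\in C^1([0,T)\times(0,\infty))$ by Proposition \ref{prop:wC1} and $w_{zz}\in C(\bar{\cC_\eps}\cup\cD_\eps)$, the function $w$ is smooth enough on each $[0,T-\eps]$ to apply a local-time-free It\^o formula: $w_{zz}$ has at most a jump discontinuity across $\partial\cC$ and $w_z$ is continuous there, so there is no local-time term and one gets the usual formula with $w_t+\cL w-(c+\mu)w$ in the drift. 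On $\cC$ this drift equals $-H$ by \eqref{eq:fbp}; on $\mathrm{int}(\cD)$ (with $t<T$) we have $w\equiv 0$ so $w_t=w_z=w_{zz}=0$ and the drift equals $H(t+s,Z^z_s)\ind_{\{Z^z_s\le b(t+s)\}}$ there. Hence on all of $[0,T-\eps]$ the drift of $D^t(s)w(t+s,Z^z_s)$ is $-D^t(s)H(t+s,Z^z_s)\ind_{\{Z^z_s> b(t+s)\}}$ up to a local martingale (the set $\{Z^z_s=b(t+s)\}$ has zero occupation time, as used in \eqref{eq:tauIntD}). The stochastic integral is a true martingale since $w_z$ is bounded (Lemma \ref{lem:w_z}) and $Z$ has moments of all orders; taking expectations, integrating from $0$ to $T-t-\eps$, and then sending $\eps\downarrow 0$ using $w(T,z)=(z-1)^+$, continuity of $w$ up to $T$ (Lemmas \ref{lem:w(z)Lip} and \ref{lem:w(t)Cont}) and dominated convergence (uniform integrability of $Z^z_s$ in $s$) yields \eqref{eq:wint}.

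\textbf{Step 2: the integral equation.} Evaluating \eqref{eq:wint} at $z=b(t)$ for $t\in\Gamma_b$, i.e.\ where $b(t)>0$, and using $w(t,b(t))=0$ (which holds by continuity of $w$ and the definition of $b$, valid precisely when $b(t)>0$) gives \eqref{eq:bint}. The bounds $b(t)\le h(t)$ for $t\in(t^*,T)$ and $b(T)=1$ have already been established: $b(t)\le h(t)$ follows from $\mathcal R\subset\cC$ (shown after \eqref{eq:sC}, since $(t,z)\in\mathcal R\iff z>h(t)$), and $b(T)=1$ is Theorem \ref{thm:bCont} together with Proposition \ref{prop:limsupT}.

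\textbf{Step 3: uniqueness.} This is the main obstacle and the part where the non-monotonicity of $b$ matters. I would follow the classical Peskir-type scheme adapted to our setting. Suppose $c:[0,T]\to[0,\infty)$ is another continuous function with $c(t)=0$ on $[0,t^*)$, $0<c(t)\le h(t)$ on $\Gamma_c$, $c(T)=1$, solving \eqref{eq:bint} with $c$ in place of $b$. Define $w^c(t,z)$ by the right-hand side of \eqref{eq:wint} with $b$ replaced by $c$; by the same It\^o computation run backwards one checks that $w^c$ solves the PDE $w^c_t+\cL w^c-(c+\mu)w^c=-H$ in the region $\{z>c(t)\}$, that $w^c=0$ on $\{z\le c(t)\}$ (this uses the integral equation for $c$ together with the fact that $z\mapsto w^c(t,z)$ and its derivative match up at $z=c(t)$), and that $w^c(T,z)=(z-1)^+$. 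First show $w^c\le w$: since $w$ is the value function, $w(t,z)\ge \E[\cP_{t,z}(\tau)]$ for the stopping time $\tau=\inf\{s:Z^z_s\le c(t+s)\}\wedge(T-t)$, and an It\^o--Dynkin application to $w^c$ along this $\tau$, using $w^c\ge 0$ and that the drift of $w^c$ equals $-H$ on $\{z>c(t)\}$, gives $\E[\cP_{t,z}(\tau)]=w^c(t,z)$; hence $w^c\le w$. Consequently $\{w>0\}\subseteq\{w^c>0\}$, i.e.\ $c(t)\le b(t)$ for all $t$. Next show $b\le c$: apply It\^o--Dynkin to $w$ along $\sigma=\inf\{s:Z^z_s> b(t+s)\}\wedge(T-t)$ for $(t,z)$ with $z<c(t)$ (so $z<b(t)$ too), using that $w=0$ there and the drift of $w$ on $\mathrm{int}(\cD)$ equals $H\ind_{\{Z\le b\}}$, to derive that $w^c(t,z)$ equals an expression that, combined with $w^c=w=0$ below both boundaries, forces $H(t+s,Z^z_s)\le 0$ on a set of positive measure whenever $b(t)>c(t)$ on a nonempty interval — contradicting $c(t)\le b(t)\le h(t)$ and the strict inequality $b>c$ which would push the process into $\mathcal R$. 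A careful version of this second half is where the real work lies; the subtlety specific to our problem is that the argument must be localised to $\Gamma_b$ and $\Gamma_c$ (the equation is vacuous where the boundary hits zero), and that without monotonicity one cannot invoke the usual ordering shortcuts, so the comparison must be run purely through the PDE/martingale characterisation of $w^c$ rather than through pathwise boundary-crossing monotonicity.

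Finally, Theorem \ref{thm:main} follows by the change of variables $\ell_{x_0}(t)=x_0e^{gt}/b(t)$, rewriting \eqref{eq:wint}--\eqref{eq:bint} in terms of $X^{x_0}$ via \eqref{eq:uv}, \eqref{eq:PQ} and the explicit transition density $\Phi$ of $Z$ under $\P$, and reading off \eqref{eq:V0int} from \eqref{eq:wint} at $(t,z)=(0,1)$ together with $V_0=x_0u(0,1)=x_0(w(0,1)+1-k(0))$ and the identity \eqref{eq:kito} to convert between $f$ and $\mu$.
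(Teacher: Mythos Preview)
Your Steps~2 and the final reduction to Theorem~\ref{thm:main} match the paper. Two points deserve comment.

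\textbf{Step 1.} The paper does \emph{not} apply It\^o in the $(t,z)$ coordinates. It passes to the transformed coordinates $(t,y)$ with $y=e^{\Lambda t}z$, where the separating boundary is the \emph{monotone} continuous function $\beta$ from Section~\ref{sec:beta}, applies Peskir's change-of-variable formula with local time on curves \cite[Thm.~3.1]{peskir2005change} to $\hat w(t+s,Y^{y;t}_s)$, and then changes back to $(t,z)$. The reason is that Peskir's formula requires the curve to be continuous of bounded variation; $\beta$ is (being monotone), but $b$ is not known to be. Your direct route is not wrong in spirit --- since $w\in C^1$ with $w_{zz}$ locally bounded (recall $w$ is convex in $z$), one can justify It\^o by mollification or a Sobolev-type formula, and the local-time contribution vanishes because $w_z$ is continuous --- but the phrase ``smooth enough to apply a local-time-free It\^o formula'' hides precisely the technical issue that motivated the entire $\beta$ construction. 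The paper's detour through $\beta$ is what makes this step self-contained with a citable formula.

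\textbf{Step 3.} Your outline follows the right scheme (the paper also only sketches it and refers to \cite{peskir2005american,peskir2006optimal}), but the implication you draw from $w^c\le w$ is reversed: $w^c\le w$ yields $\{w^c>0\}\subseteq\{w>0\}$, not $\{w>0\}\subseteq\{w^c>0\}$, so it does not give $c\le b$ as you claim. In the standard argument the two inequalities $c\ge b$ and $c\le b$ are obtained by separate martingale comparisons (running $w$ and $w^c$ along suitable hitting times and exploiting $c\le h$, i.e., $H\le 0$ below $c$); neither follows directly from $w^c\le w$ alone. Your description of the second half is also not a closed argument. Since the paper itself defers these details to the literature this is not fatal, but the logical flow as written is incorrect.
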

\begin{proof}
Since $w$ is continuously differentiable in $[0,T)\times(0,\infty)$ with $w_{zz}$ continuous on $\bar \cC_\eps$ and on $\cD_\eps$ (cf.\ Proposition \ref{prop:wC1}), then $\hat w$ inherits the same regularity. Moreover, $t\mapsto \beta(t)$ is continuous and non-decreasing and therefore we can apply \cite[Thm.\ 3.1]{peskir2005change} to obtain, for $t\in [0,T)$ and any $u\in[0,T-t)$
\begin{align*}
\begin{aligned}
&\E\Big[e^{-c u+\int_0^u\mu(t+s)\ud s}\hat w\big(t+u,Y^{y;t}_u\big)\Big]\\
&=\hat w(t,y)+\E\Big[\int_0^u e^{-cs+\int_0^s\mu(t+r)\ud r}\big(\hat w_t+\cG \hat w-(c+\mu(\cdot))\hat w\big)(t+s,Y^{y;t}_s)\ind_{\{Y^{y;t}_s > \beta(t+s)\}}\ud s\Big]\\
&=\hat w(t,y)-\E\Big[\int_0^u e^{-cs+\int_0^s\mu(t+r)\ud r}\hat H(t+s,Y^{y;t}_s)\ind_{\{Y^{y;t}_s > \beta(t+s)\}}\ud s\Big],
\end{aligned}
\end{align*}
where in the second equality we used \eqref{eq:hatwODE}. Then, changing variables and expressing $y=e^{\Lambda t}z$, $\hat w(t,y)=w(t,z)$, $\hat H(t,y)=H(t,z)$ and $Y^{y;t}_s=e^{\Lambda(t+s)}Z^z_s$ (cf.\ \eqref{eq:Ysde}, \eqref{eq:hatw} and \eqref{eq:hatH}) we obtain
\begin{align*}
\begin{aligned}
&\E\Big[e^{-c u+\int_0^u\mu(t+s)\ud s}w\big(t+u,Z^z_u\big)\Big]=w(t,z)-\E\Big[\int_0^u e^{-cs+\int_0^s\mu(t+r)\ud r} H(t+s,Z^z_s)\ind_{\{Z^z_s > b(t+s)\}}\ud s\Big].
\end{aligned}
\end{align*}
Letting $u\to T-t$ we can use dominated convergence, continuity of $w$ and $w(T,y)=(z-1)^+$ to conclude 
\begin{align*}
\begin{aligned}
\E\Big[e^{-c(T-t)+\int_0^{T-t}\mu(t+s)\ud s}\big(Z^z_{T-t}-1\big)^+\Big]=w(t,z)\!-\!\E\Big[\!\int_0^{T-t}\!\! e^{-cs+\int_0^s\mu(t+r)\ud r} H(t\!+\!s,Z^z_s)\ind_{\{Z^z_s > b(t+s)\}}\ud s\Big].
\end{aligned}
\end{align*}
Rearranging terms yields \eqref{eq:wint}.

Taking $z=b(t)$ and $t\in(t^*,T]$ in \eqref{eq:wint} immediately yields that $b$ solves \eqref{eq:bint} because $w(t,b(t))=0$ for $t\in(t^*,T]$ due to the fact that $b$ is continuous on $(t^*,T]$ with $b(T)=1$ (Theorem \ref{thm:bCont}). It only remains to show that \eqref{eq:bint} admits at most one continuous solution with terminal condition equal to one. This fact can be proven using a well-consolidated procedure which was first established in \cite{peskir2005american} and subsequently refined in several other papers (see, e.g., numerous examples in \cite{peskir2006optimal}). Here we only illustrate the main ideas and we refer the interested reader to \cite{peskir2006optimal} for fuller details.

First one assumes that there is a function $d:[0,T]\to [0,\infty)$ from the class $\Sigma$ (cf.\ \eqref{eq:Sigma}), such that
\begin{align*}
\begin{aligned}
e^{-c(T-t)}{_{T-t} p_t}\E\big[\big(Z^{d(t)}_{T-t}-1\big)^+\big]+\int_0^{T-t}e^{-cs}{_s p_t}\E\big[H(t+s,Z^{d(t)}_s)\ind_{\{Z^{d(t)}_s > d(t+s)\}}\big]\ud s=0,
\end{aligned}
\end{align*}
for all $t\in\Gamma_d$ where $\Gamma_d:=\{s\in[0,T]:d(s)>0\}$. Then, setting
\begin{align*}
\begin{aligned}
w^d(t,z)=e^{-c(T-t)}{_{T-t} p_t}\E\big[\big(Z^z_{T-t}-1\big)^+\big]+\int_0^{T-t}e^{-cs}{_s p_t}\E\big[H(t+s,Z^z_s)\ind_{\{Z^z_s> d(t+s)\}}\big]\ud s,
\end{aligned}
\end{align*}
one can show that $w^d$ is a continuous function (even though $d$ is not continuous at $t^*$). Using the strong Markov property, it is easy to show that 
\begin{align*}
\begin{aligned}
u\mapsto e^{-c u}{_u p_t}w^d\big(t+u,Z^z_{u}\big)+\int_0^{u}e^{-cs}{_s p_t}H(t+s,Z^z_s)\ind_{\{Z^z_s> d(t+s)\}}\ud s,
\end{aligned}
\end{align*}
is a continuous martingale on $[0,T-t]$. The rest of the proof follows four steps, all of which use the martingale property above. First, one shows that $w^d(t,z)=0$ for $z\le d(t)$ and $t\in\Gamma_d$. Second, one shows that $w(t,z)\ge w^d(t,z)$ for $(t,z)\in[0,T)\times(0,\infty)$. Using those two results one obtains, in two separate steps, the two inequalities  $d(t)\ge b(t)$ and $d(t)\le b(t)$ for all $t\in[0,T]$.
\end{proof}

We can finally prove Theorem \ref{thm:main}.
\begin{proof}[\bf Proof of Theorem \ref{thm:main}]
Recall from Section \ref{subs:markovian} that $u(t,z)=w(t,z)+(1-k(t))$ and that the pairs $(t,z)$ and $(t,x)$ are linked by the transformation $z=x_0 e^{gt}/x$. Moreover, \eqref{eq:uv} gives $v(t,x)=x u(t,x_0 e^{gt}/x)$ and $V_0=v(0,x_0)=x_0 u(0,1)$. Then, by \eqref{eq:wint} we obtain
\begin{align*}
\begin{aligned}
&v(t,x)\\
&=x(1-k(t))\\
&\quad+x\Big(e^{-c(T-t)}{_{T-t} p_t}\E\big[\big(Z^z_{T-t}\!-\!1\big)^+\big]\!+\!\int_0^{T-t}\!e^{-cs}{_{s} p_t}\E\Big[H(t\!+\!s,Z^z_s)\ind_{\{Z^z_s> b(t+s)\}}\Big]\ud s\Big)\\
&=x e^{-c(T-t)}{_{T-t} p_t}\E\big[\mathrm{max}\big\{Z^z_{T},1\big\}\big]\\
&\quad+x\int_0^{T-t}\!e^{-cs}{_{s} p_t}\E\Big[\mu(t\!+\!s)\mathrm{max}\big\{Z^z_{s},1\big\}\ind_{\{Z^z_s> b(t+s)\}}\! +\! (\mu(t\!+\!s)\!-\!f(t\!+\!s))\ind_{\{Z^z_s\le b(t+s)\}}\Big]\ud s,
\end{aligned}
\end{align*} 
where the second equality follows from \eqref{eq:kito} and recalling $k(T)=0$.  

Now we take $t=0$ and $x=x_0$ so that $z=1$ and compute
\begin{align*}
\begin{aligned}
&V_0=x_0 e^{-cT}{_{T} p_0}\E\big[\mathrm{max}\big\{Z^1_{T},1\big\}\big]\\
&\quad+x_0\int_0^{T}\!e^{-cs}{_{s} p_0}\E\Big[\mu(s)\mathrm{max}\big\{Z^1_{s},1\big\}\ind_{\{Z^1_s> b(s)\}}\! +\! (\mu(s)\!-\!f(s))\ind_{\{Z^1_s\le b(s)\}}\Big]\ud s,
\end{aligned}
\end{align*} 
Recall that $Z^1_s=x_0 e^{gs}/X_s$ for any $s\in[0,T]$, where the process $(X_s)_{s\in[0,T]}$ starts from $X_0=x_0$ at time zero. Recall also that for any stopping time $\tau\in[0,T]$ and any $\cF_\tau$-measurable random variable $\Psi$ the change of measure from $\P$ to $\Q$ reads $x_0\E[e^{-c\tau}\Psi]=\E^{\Q}[e^{-r \tau} X_\tau\Psi]$ (cf.\ \eqref{eq:PQ}). Substituting in the expression above yields
\begin{align*}
\begin{aligned}
&V_0= e^{-rT}{_{T} p_0}\E^\Q\big[\mathrm{max}\big\{x_0e^{g T},X_{T}\big\}\big]\\
&\quad+\int_0^{T}\!\!e^{-rs}{_s p_0}\E^\Q\Big[\mu(s)\mathrm{max}\big\{x_0e^{gs},X_s\big\}\ind_{\{\ell_{x_0}(s)> X_s\}}\! +\! X_s(\mu(s)\!-\!f(s))\ind_{\{\ell_{x_0}(s)\le  X_s\}}\Big]\ud s,
\end{aligned}
\end{align*} 
where in the indicator functions we have introduced $\ell_{x_0}(s):=x_0e^{gs}/b(s)$. This completes the proof of \eqref{eq:V0int} and by definition we also have $\ell_{x_0}(T)=x_0 e^{gT}$ because $b(T)=1$.

For \eqref{eq:ellint}, first we rewrite \eqref{eq:bint} in terms of the density $\Phi(z,s,y)$ of $Z^z_s$ under $\P$. That yields
\begin{align*}
\begin{aligned}
e^{-c(T-t)}{_{T-t} p_t}\int_1^\infty \!\Phi\big(b(t),T\!-\!t,y\big)\big(y\!-\!1\big)\ud y\!+\!\int_0^{T-t}\!\!e^{-cs}{_s p_t}\int_{b(t+s)}^\infty H(t\!+\!s,y)\Phi\big(b(t),s,y\big)\ud y\,\ud s=0.
\end{aligned}
\end{align*}
Changing variable of integration we obtain
\begin{align*}
\begin{aligned}
e^{-c(T-t)}{_{T-t} p_t}\int_1^\infty \!\Phi\big(b(t),T\!-\!t,y\big)\big(y\!-\!1\big)\ud y\!+\!\int_t^{T}\!\!e^{-c(s-t)}{_{s-t} p_t}\int_{b(s)}^\infty H(s,y)\Phi\big(b(t),s-t,y\big)\ud y\,\ud s=0.
\end{aligned}
\end{align*}
The proof is completed by recalling $\ell_{x_0}(s)=x_0e^{gs}/b(s)$.

Optimality of the stopping time $\gamma_*$ follows easily because
\begin{align*}
\gamma_*&=\inf\{s\in[0,T): X_s\ge \ell_{x_0}(s)\}\wedge T=\inf\{s\in[0,T): X_s\ge x_0 e^{gs}/b(s)\}\wedge T\\
&=\inf\{s\in[0,T): Z^1_s\le b(s)\}\wedge T=\tau^*_{0,1},
\end{align*}
and $\tau^*_{0,1}$ (cf.\ \eqref{eq:tau*1}) is optimal for $u(0,1)$.
\end{proof}

\medskip
\noindent{\bf Acknowledgment}: T.\ De Angelis received partial financial support from EU -- Next Generation EU -- PRIN2022 (2022BEMMLZ) CUP: D53D23005780006 and PRIN-PNRR2022 (P20224TM7Z) CUP: D53D23018780001. G. Stabile received financial support from EU -- Next Generation EU -- PRIN2022 (2022FWZ2CR) CUP:
B53D23009970006. G. Stabile was also partially supported by Sapienza University of Rome, research project ``\emph{L'opzione di riscatto nelle polizze rivalutabili: l'impatto delle penalit\`a sulle scelte dei sottoscrittori}'', grant no.~RP12218167E24F87.

\bibliography{bibfile}{}
\bibliographystyle{abbrv}
\end{document}